\DeclareMathAlphabet{\itbf}{OML}{cmm}{b}{it}
\def\EE{\mathbb{E}}
\def\RR{\mathbb{R}}
\def\eps{\varepsilon}
\newcommand{\ea}{\end{eqnarray}}  
\newcommand{\ba}{\begin{eqnarray}}  
\newcommand{\ee}{\end{equation}}  
\newcommand{\be}{\begin{equation}}  
\newcommand{\ean}{\end{eqnarray*}}  
\newcommand{\ban}{\begin{eqnarray*}}
\begin{document}  
\bibliographystyle{plainnat}
\title{Option pricing under fast-varying long-memory stochastic volatility}

\author{Josselin Garnier\footnotemark[1]
 and Knut S\O lna\footnotemark[2]}   

\maketitle

\renewcommand{\thefootnote}{\fnsymbol{footnote}}

\footnotetext[1]{Centre de Math\'ematiques Appliqu\'ees,
Ecole Polytechnique, 91128 Palaiseau Cedex, France
{\tt josselin.garnier@polytechnique.edu}}

\footnotetext[2]{Department of Mathematics, 
University of California, Irvine CA 92697
{\tt ksolna@math.uci.edu}}

\renewcommand{\thefootnote}{\arabic{footnote}}

\maketitle

\begin{abstract}
Recent empirical studies suggest that the volatility of an underlying
price process may have correlations that decay slowly
under certain market conditions.  
In this paper, the volatility is modeled as a stationary process with long-range correlation properties
in order to capture such a situation, and we consider European option pricing.
This means that the volatility process
is neither a Markov process nor a martingale.  However,  
  by exploiting the fact that the price process is still a semimartingale 
and accordingly  using the martingale method, we can obtain an analytical expression
for the option price in the regime where the volatility process is fast mean-reverting.
The volatility process is modeled as a smooth and bounded function of a  fractional Ornstein--Uhlenbeck 
process.
We give the expression for the implied volatility, which has a fractional term structure.
\end{abstract}

\begin{keywords}
Stochastic volatility, Long-range correlation, Mean reversion,  Fractional 
Ornstein--Uhlenbeck process. 
\end{keywords}

\begin{AMS}
91G80, 
60H10, 
60G22, 
60K37. 
\end{AMS}

\section{Introduction}

\subsection*{Stochastic Volatility and the Implied Surface}
Under many market scenarios, the assumption that volatility
is constant, as in the standard Black--Scholes model,  is not realistic.
Practically, this reflects itself in an implied volatility that depends on the 
pricing parameters.  This means that, in order to  match observed prices, the 
volatility that one  needs to use  in the   Black--Scholes option pricing formula   depends on  
time to maturity and log-moneyness, with moneyness being the  strike
price over the current price of the underlying.  The implied volatility
is a convenient way to parameterize the price of a financial contract
relative to a particular underlying. It gives insight about how the market
deviates from the ideal Black--Scholes situation.
After calibration of an implied volatility model to  liquid contracts,
this model can be used for pricing less liquid contracts 
written on the same underlying.
It is, therefore, of interest to identify a 
consistent parameterization of the implied volatility that  
corresponds  to an underlying model for stochastic volatility fluctuations.
As in \cite{sv1}, a main objective is to construct a stochastic volatility model that is
a stationary process and that makes it possible to consider general times to maturity. 
  For background on stochastic volatility models, we refer the reader to the books and surveys by
\citet{fouque11,gatheralbook,ghysels,gul,henry,rebonato} (see the references therein).
We also refer the reader to our paper on fractional stochastic volatility, \cite{sv1},
for further references on the recent literature on the class of volatility models we consider 
here. 

Empirical studies suggest that volatility may 
exhibit  a ``multi-scale'' character with long-range correlations, as in
\cite{bollerslev,breidt,viens2,cont0,cont,engle,oh}. That means that
correlations decay as a power law in time offset,
while they would decay as an exponential function if
stochastic volatility were Markovian. 
Here we  seek to identify 
parametric forms for the implied volatility  consistent with
such long-range correlations.  In our recent paper \cite{sv1},
we considered this question within the context where the magnitude of the 
volatility fluctuations is small. Here, we consider the situation where 
the magnitude of the volatility fluctuations is of the same order as the mean volatility.
Indeed,  empirical studies show that the volatility fluctuations may be quite 
large: \cite{breidt,cont0,engle}.  
 While in   \cite{sv1} the volatility fluctuations
were small, leading to a (regular) perturbative situation,
here the situation  is  different in that it is the fast mean reversion
(fast relative to the diffusion time of the underlying) that allows  us to push through an asymptotic analysis.
The presence of long-range correlations in this context gives
a novel, singular perturbation situation. The analysis becomes  significantly
more complex. In particular, the detailed  analysis of the covariation process is
an important ingredient.  We consider here option pricing, but the approach set forth
is general and will be useful in other financial contexts as well.

 It follows from our analysis that the form of the
implied volatility surface is similar to the one obtained in the Markovian case.
This confirms the robustness of the implied volatility parametric
 model with respect to the underlying price dynamics. There are, however,
 central differences. In particular, the 
 long-range correlations produce a volatility covariance
 that is not integrable, which in turn gives  an
 implied volatility surface  that is  a {\it random field},
 whose statistics can be described in detail. 
Moreover, in the long-range case, the implied volatility has a fractional
 behavior as a function of time to maturity. 
The empirical study in \cite{25} shows that, in order to fit well the 
implied volatility, it is appropriate to consider a two-time scale
model with one slow and one fast volatility factor.  In \cite{sv1},
we considered a slow factor, which is closely associated with a small
fluctuation factor. Here, we consider a fast factor with large
fluctuations. Taken together, we have a generalization
of the two-factor model of \cite{25,fouque11} for processes with long-range correlations. 
This leads to a fractional term structure of the implied volatility.
It was shown in \cite{22} that such a term structure may be useful for fitting
the implied volatility under certain market conditions.

\subsection*{Long Memory and Fast Mean Reversion}    
    
As mentioned above, the asymptotic regime considered in this paper
is the situation where the  volatility is fast mean reverting.
We denote its time scale by $\eps$, the small parameter 
in our model. The volatility then decorrelates on the time scale $\eps$. 

Stochastic volatility models are most often set with  a volatility driving process
that  has mean zero and mixing properties. This means that the random values of the volatility driving process at times
$t$ and $t+\Delta t$, which are $Z_t^\eps$ and  $Z_{t+\Delta t}^\eps,$   become rapidly uncorrelated when 
$\Delta t \to \infty$, i.e.,
the autocovariance function 
 ${\cal C}^\eps (\Delta t)=\EE[Z_t^\eps Z_{t+\Delta t}^\eps   ]$  decays rapidly  to zero 
as $\Delta  t \to \infty$.
More precisely, we say that the volatility driving process  is mixing if its autocovariance function 
decays fast enough at infinity, so that it is absolutely integrable
\begin{eqnarray}
\int_0^\infty |{\cal C}^\eps(t)| dt  < \infty \, .
\end{eqnarray}
In this case, we may associate the process with the finite correlation time 
  $t_c= 2 \int_0^\infty  {\cal C}^\eps(t) dt / {\cal C}^\eps(0)$, which is of order $\eps$.
 
Stochastic volatility models with long-range correlation properties
have recently attracted a lot of attention,
as more and more data collected under various situations confirm
that this situation can be encountered in many different markets.
Qualitatively, the long-range correlation property means that the random process has
long memory (in contrast with a mixing process).
This means that the correlation between the random values $Z_t^\eps$ and 
$Z_{t+\Delta t}^\eps$ taken at two times separated by $\Delta t$ is not completely negligible 
even for large  $\Delta t$.
More precisely, we say that the random process $Z_t^\eps$ 
 has the $H$-long-range correlation property if its  autocovariance function satisfies
\begin{eqnarray}
\label{eq:decayphi0}
{\cal C}^\eps( t ) \stackrel{|t| \to \infty}
\simeq r_H \Big|\frac{t}{\eps}\Big|^{2H-2}
  \, , 
\end{eqnarray}
where $r_H >0$ and $H\in (1/2,1)$.
We refer to $H$ as the Hurst exponent.
Here the correlation time   $\eps$ is the 
critical time scale beyond  which the power law  behavior (\ref{eq:decayphi0}) is  valid. 
Note that the autocovariance function is not integrable as $2H-2\in (-1,0)$,
which means that a random process with the $H$-long-range correlation property is not  mixing.
As we describe in more detail below, a common approach for modeling
long-range dependence is by using fractional Brownian motion (fBm) processes
as introduced in \cite{mandelbrot}.
    
Long-memory stochastic volatility models are easy to introduce, but difficult to 
analyse. This is  largely due to the fact that the 
volatility process is neither a Markov process nor a semimartingale.
It is, however, important to note that the price process is still
a semimartingale, and the problem formulation does not entail
arbitrage (\cite{mendes1}), as has been argued for some models whose price 
process itself is driven by fractional processes, as in \cite{bjork,roger,shiryaev}.
A main motivation for long-memory is to be able to fit observed implied volatilities.
   One common challenge
  regarding the fitting of implied volatility surfaces is to capture a  strong moneyness
  dependence for short time to maturity without creating artificial behavior for long time to maturity.
  Another typical challenge is to retain a strong parametric dependence for long maturities 
  despite averaging effects that occur in this regime, as discussed in \cite{boll2,bollerslev,coutin,sun}.
  We remark that models involving jumps have been promoted as one 
  approach to meet these challenges by \cite{carr,tankov}.    Recent works show  that stochastic volatility models with long-range
  dependence also   provide a promising  framework for meeting such challenges. 
Approaches  based on using  fractional noises in the description of the stochastic volatility
process were used by \cite{comte,coutin}. 
Such stochastic volatility models with long-range dependence
can capture the steepness of long-term volatility
smiles without overemphasizing  the short-run persistence. 
In order to get explicit results for the implied volatility, a number of asymptotic
regimes have been considered. Chief among them has been the regime
of short time to maturity.
    The   model presented in \cite{coutin} was recently revisited in \cite{jacquier},
     where
     short and long time to maturity asymptotics are analysed using large deviations theory.  
In   \cite{alos07}, the authors   use Malliavin calculus  
to  decompose the option price  as the sum of the classic
Black--Scholes formula price
and a term due to the volatility of the volatility.
In the Black--Scholes formula, they use a 
volatility parameter that is equal to the root-mean-square
future average volatility plus a term due to the leverage effect (i.e., the correlation between the underlying return and its changes in volatility).
  Their model is a fractional version of the Bates model (\cite{bates}). 
  They find that the implied volatility flattens in the long-range dependent case in the limit of short time to maturity.
In \cite{forde}, the authors   use large deviation principles to compute the short time to maturity 
asymptotic form of the implied volatility.
They consider the leverage effect and obtain results that are consistent with those in \cite{alos07}.
They consider stochastic volatility models driven by fBms,
which are analysed using rough path  theory.  They also consider 
long time asymptotics for some fractional processes.
Short-time-to-maturity  asymptotic results were also recently presented in   \cite{viens3}  in a context of long-range processes.   
In \cite{gatheral2},  the authors 
   consider the rough Bergomi model, or ``rBergomi'' model, and discuss
   the form of the associated implied volatility term structure.
In  \cite{fukasawa11},  the author discusses how small volatility fluctuations
with long-range dependence impact  the implied volatility,
   as an application of the general theory he sets forth.  
In that paper, as well as in \cite{alos07}, the authors use a model where time 0 plays
   a special role, and hence the modeling is not completely satisfactory, because it leads
   to a non-stationary volatility model.  
      On the other hand, in  \cite{sv1}, which deals with small
   volatility fluctuations, the authors use a formulation with a stationary model. 
This is also the case in the recent paper by \cite{fukasawa17}, which considers
   short time asymptotics in the rough volatility case, with $H<1/2$. 
This distinction is important: 
if  the volatility factor is a fBm emanating from the origin, 
then the  implied volatility surface is identified conditioned  on the present value of the volatility factor only. 
In our paper, we use a stationary model so that the implied volatility surface depends on the
   path of the volatility factor until  the present, reflecting the non-Markovian nature
   of fBm.  
   We discuss in detail in Section \ref{sec:tT} the consequences of
   this for  interpretating the implied volatility surface as a random field. 
   Recently,  pricing approximations in the regime of 
   small fractional volatility fluctuations were  presented in \cite{alos2}.
  In terms of computation of prices for general maturities and fractional volatility fluctuations, so far 
mainly numerical approximations have been available.  Here
we present an asymptotic regime based on fast mean reversion which
 gives explicit price approximations.  
Together, the results of \cite{sv1} and the current paper
make it possible to construct a fractional, two-time-scale stochastic volatility model,
which gives enough flexibility to fit both the short and long time to maturity parts of the implied volatility surface.

Let us note that we consider here the long-range correlation case where $H>1/2$ 
 as opposed to the rough volatility case where $H<1/2$. 
 Indeed, both regimes have been identified from the empirical perspective. We refer the reader, for instance, to
 \cite{gatheral1} for observations of rough volatility, 
and to \cite{viens1} for cases of  long-range volatility.
 A long-range, mean-reverting volatility situation is  
 reported in \cite{jensen} in a discrete modeling framework.  
 Long-range volatility situations are also reported 
for currencies in \cite{curr},  for commodities in \cite{energy}, and for equity indices in
\cite{mal}. 
Analysis of electricity markets data typically gives  $H<1/2$, as reported in
\cite{simonsen,rypdal,bennedsen}. 
  We  believe  that both the rough and the long-range cases   are important and can be observed 
depending on the specific market and regime. 
 Even though the ``rough'' case 
with $H <1/2$ may be the  most common situation, the understanding of the situation where $H>1/2$
 may be of particular importance for pricing and hedging.  
In this paper, we only consider the analytic aspects of our model. The fitting with
  respect to specific data is beyond the scope of this paper and will be presented 
in future work.

  The fractional model we set forth here   produces  typical  ``stylized facts'', 
such as heavy tails of returns, volatility clustering, 
   mean reversion, and long memory or volatility persistence. 
   Additionally, here we incorporate the leverage effect.  This term 
   was coined by
   \cite{black},  referring to  stock-price movements that are
   correlated (typically negatively) with volatility,
   as  falling stock prices may imply  more uncertainty, and hence volatility.   
   Note, however, that the model for the implied volatility surface derived below is
  linear in log-moneyness. This may seem somewhat restrictive from the point
  of view of fitting, because, in many cases, a strong skew 
  in log-moneyness  may  be observed in certain markets. 
  This has  particularly been the case for  stock markets, but relatively less
  so in other markets, such as fixed income markets.
Nevertheless, if one considers higher order
  approximations, then they also generate skew effects. 
  A number of modeling issues not considered here,
such as transaction costs, bid-ask spreads and liquidity, 
  may also affect the skew shape.
   Note also that,
    for simplicity, we do not incorporate a non-zero interest rate or a market price for risk aspects.

\subsection*{ Rapid-Clustering,  Long-Memory and the Implied Surface}

Next, we will summarize the main result of the paper from the point of view 
of calibration, that is, the form of the implied volatility surface in the context
of a stochastic volatility modeled by a fast process with long-range correlation properties.
We will first summarize   some aspects of the modeling.

We consider a continuous-time stochastic volatility model that is a smooth function of a Gaussian 
long-range process. 
 Explicitly, we model the fractional stochastic volatility (fSV) as a smooth function of a fractional 
Ornstein--Uhlenbeck  (fOU) process.  
The fOU process
is a classic model for a stationary  process with a fractional  
long-range correlation structure.  This process can be expressed in terms of
an integral of a  fBm  process. The distribution of a fBm 
 process is characterized in terms of the Hurst exponent $H \in (0,1)$.  
The fBm process is 
locally H\"older continuous of exponent $H'$  for all 
$H'  < H$, and this property is inherited by the 
fOU process.  
The fBm process, $W_t^H$,  is also self-similar in that
\begin{equation}
 \left\{
 W^H_{\alpha t} , t\in \RR  
 \right\}
 \stackrel{dist.}{=}
  \left\{
 \alpha^H W^H_{t} , t\in \RR 
 \right\}
~ \hbox{for all}~   \alpha > 0.
\end{equation}
The self-similarity property is inherited approximately by  the fOU
process on time scales smaller than the mean-reversion time
of the fOU process, which  we denote by $\eps$ below. In this sense, we
may refer to the fOU process as a multi-scale process on short time scales. 
The case  $H \in (1/2,1)$  that we address in this paper gives  
a fOU process that is   long-range.
This regime corresponds to  a persistent
process where consecutive increments of the fBm
are positively correlated.
The stronger, positive correlation for the consecutive increments
of the associated fBm process with increasing $H$ values gives a 
smoother process whose autocovariance function decays slowly. 
For more details regarding the fBm and fOU processes,
we refer the reader to \cite{oksendal,coutinb,taqqu,mandelbrot,cheridito03,kaarakka}. 
  
The volatility driving process is
the $\eps$-scaled fOU process defined by 
\begin{equation}\label{eq:OU0}
Z^\eps_t = \eps^{-H} \int_{-\infty}^t e^{-\frac{t-s}{\eps}} dW^H_s   .
\end{equation}
It is a zero-mean, stationary Gaussian process that exhibits long-range correlations
for the Hurst exponent $H \in (1/2,1)$. It is important to note that this is a process
whose ``natural time scale'' is $\eps$,   in the sense that the mean-reversion time, or
time before the process reaches its equilibrium distribution, is of the order of $\eps$.
It is also important  to note that the decay of the correlations (on the $\eps$ time scale) is
polynomial rather than exponential, as in the standard Ornstein--Uhlenbeck process.
Explicitly, the correlation  of the process between times $t$ and $t+\Delta t$
decays as $(\Delta t/\eps)^{2H-2}$, while 
 the variance of the process is independent of $\eps$.

In this paper, we consider a stochastic volatility model that is a smooth function of the 
rapidly varying fOU process with Hurst coefficient $H \in (1/2,1)$.
It is given by
\begin{equation}
\label{def:stochmodel0}
\sigma_t^\eps = F( Z_t^\eps) ,
\end{equation}
where $F$ is a smooth, positive, one-to-one bounded function with bounded derivatives, 
and with an additional technical condition that is  given in Eq.  (\ref{eq:cond}). 
The process $\sigma_t^\eps$ inherits the long-range correlation properties of the fOU $Z^\eps_t$.

The main result, in Section \ref{sec:implied},  is 
an expression for 
the implied volatility of the 
European Call Option for strike $K$, maturity $T$, and current time $t$,
 \begin{equation}\label{eq:iv}
I_t = \EE\Big[ \frac{1}{T-t} \int_t^T (\sigma_s^\eps)^2 ds \big| {\cal F}_t \Big]^{1/2} 
+  \overline\sigma  a_F
\left[  \left( \frac{\tau}{\bar\tau} \right)^{H-1/2} +  \left( \frac{\tau}{\bar\tau} \right)^{H-3/2} {\log\Big(\frac{K}{X_t}\Big)} \right]
  .
\end{equation}
Here
\begin{equation}
a_F = 
 \eps^{1-H} \frac{ \widetilde{\sigma} 
 \rho 
  \left<FF'\right> \bar\tau^{H} }{ 2^{3/2} \overline{\sigma} \Gamma(H+{3}/{2})}  ,
\end{equation}
$\tau=T-t$ is time to maturity,
$\rho$ the correlation between the Brownian motion
driving the fBm  and the Brownian
motion driving the underlying,
and 
\begin{equation}
\label{def:bartau}
 \bar\tau = \frac{2}{{\overline\sigma}^2}   
\end{equation}
is   the characteristic  diffusion time.
Furthermore, we have 
 \begin{eqnarray*}
\overline{\sigma}^2 =  \left< F^2\right> &=& \int_\RR F(\sigma_{{\rm ou}} z )^2 p(z) dz, \\
 \widetilde{\sigma}  =  \left< F\right> &=& \int_\RR F(\sigma_{{\rm ou}} z ) p(z) dz, \\
\left<FF'\right> &=&   \int_\RR F(\sigma_{{\rm ou}} z ) F'(\sigma_{{\rm ou}} z ) p(z) dz ,
\end{eqnarray*}
where $ \sigma^2_{{\rm ou}}=1/(2 \sin(\pi H))$ and $p(z)$ is the probability density function (pdf) of the standard normal distribution. 
In other words, we form moments of the
volatility function averaged with respect to the invariant distribution of the 
fOU process $Z_t^\eps$. 

The first term in Eq. (\ref{eq:iv}) is indeed the expected effective volatility until maturity
conditioned on the present. The second term is a skewness term that is non-zero  only when
the volatility process and the underlying are correlated so that $\rho$ is non-zero.
Note that the exponent of the fractional term structure depends on the
Hurst exponent, which determines the smoothness and the decorrelation
rate of the volatility driving process $Z_t^\eps$. 
The smoother the process, the larger
the implied volatility for long times to maturity.

In the fast case presented here with large and fast  
volatility fluctuations, the  implied volatility explodes  in the regime
of short time to maturity.
 Indeed, short time to maturity means that the time to maturity is smaller than the diffusion time (\ref{def:bartau}),
but larger than the mean-reversion time $\eps$. 
Therefore, short time to maturity involves 
large volatility fluctuations over a short maturity horizon  
resulting in a moneyness  correction that explodes and  dominates the pure maturity 
term.
 In the context of short or long times to maturity,   the conditional
 expected effective volatility  gives a small contribution,
 and we have for short times to maturity and $K \neq X_t$
   \begin{equation}\label{eq:iv2}
I_t \sim
\overline{\sigma} 
a_F  \Big[     \left( \frac{\tau}{\bar\tau} \right)^{H-3/2} {\log\Big(\frac{K}{X_t}\Big)} \Big]
  ,
\end{equation} 
and for long times to maturity   
  \begin{equation}\label{eq:iv3} 
  I_t \sim
\overline{\sigma} 
a_F    \left( \frac{\tau}{\bar\tau} \right)^{H-1/2}  
     .
    \end{equation}

 We note here that the fractional scaling in the skewness term in  Eq. (\ref{eq:iv})
is exactly the fractional scaling that corresponds to the case of long
time to maturity and small volatility fluctuations given in  
\cite{sv1}. That means that, with long times to maturity, we have a situation reminiscent 
of the one we have here with rapid volatility fluctuations. Here, however,  
the volatility fluctuations are large compared to the small volatility
fluctuations considered in \cite{sv1}.

We remark also that the case with a mixing volatility, and hence integrable correlation
function for the volatility fluctuations, would correspond to $H \searrow 1/2$.  Note, however, that our derivation
is valid only for $H\in (1/2,1)$.
If we consider the formula (\ref{def:sigmaphi}) for $\sigma_\phi$ that determines
the variance of the first term in Eq.~(\ref{eq:iv}), we observe that it vanishes when $H \searrow 1/2$,
which shows that  the first term in Eq.~(\ref{eq:iv}) becomes deterministic.
In the mixing case, the first-order correction to
 the implied volatility is deterministic,  
while  the non-integrability 
of the volatility covariance function makes it a stochastic
process in the general, long-range case  with a variance that goes to zero 
as $H \searrow 1/2$.
 Indeed, in the limit case $H \searrow 1/2$,
we get a result similar to the one obtained in \cite[Section 5.2.5]{fouque00} that deals with the mixing case.
Explicitly, we consider the mixing case where the volatility driving process is an ordinary Ornstein--Uhlenbeck process;
moreover, the interest rate and market price of volatility risk are zero. Then 
 \cite[Eq. (5.55)]{fouque00} gives the implied volatility in terms of a coefficient $V_3$ defined in  \cite[Section 5.2.5]{fouque00},
 \begin{equation}
 \label{eq:iv4}
I_t = \overline{\sigma} 
- V_3  \Big[    \frac{1}{ 2 \overline{\sigma} }  + 
     \frac{1}{ \overline{\sigma}^3 \tau}  \log\Big(\frac{K}{X_t}\Big) 
     \Big]
   ,
\end{equation}
which has the same form as the formal limit of (\ref{eq:iv}) as $H \searrow 1/2$.
The averaging expression giving the coefficient $V_3$ does not, however, correspond
to the interpretation we arrive at here by the formal limit $H \searrow 1/2$. That is because
the singular perturbation situation we consider is  in fact  ``singular'' at $H=1/2$, and 
the ordering of important terms changes. Nevertheless,  it is important
from the calibration point of view  that we have continuity of the implied volatility
parameterization and its form at $H=1/2$,  providing  robustness to  the 
asymptotic framework. 

In Section \ref{sec:tT}, we give the complete statistical description
of  the stochastic correction coefficient,
which determines the random component of the price correction and the implied volatility (the first term in Eq.~(\ref{eq:iv})). 
It is a random function of the maturity $T$  and the current time $t$ with Gaussian statistics
and with a covariance function that we describe in detail.
This covariance function has interesting and non-trivial, self-similar properties,
and this function is important in order to   construct and characterize estimators of the implied volatility surface.

 \section*{Outline}   
 The outline of the paper is as follows. In Section \ref{sec:fou}, 
 we describe the fractional Ornstein--Uhlenbeck process and derive
 some fundamental {\it a priori} bounds. 
 In Section \ref{sec:svm}, we describe the stochastic volatility model.
In Section~\ref{sec:option}, we derive
 the expression for the price in the fast mean-reverting fractional case.
 The derivation is based on the martingale method. That is, we make an ansatz 
 for the price as a process that has the correct payoff and whose leading-order term is a martingale. 
 Then indeed this process is the leading-order
 expression for the price with an error that is of the order of the non-martingale
 part. This approach  involves introducing correctors so that the non-martingale part
 is pushed to a  small term;  we give the resulting decomposition  in Section
 \ref{sec:option}. Based on the expression for the price,  we derive  the associated
 implied volatility in Section~\ref{sec:implied} and present our concluding
 remarks in Section~\ref{sec:conclusion}.  We give a convenient Hermite decomposition
of the volatility in Appendix~\ref{sec:hermite}.  A number of the technical lemmas are proved
 in   Appendix~\ref{sec:app}. 

\section{The Rapid Fractional Ornstein--Uhlenbeck Process}
\label{sec:fou}
We use a rapid fractional Ornstein--Uhlenbeck (fOU) process as the volatility factor and
describe here how this process can be represented in terms of a fractional Brownian motion.
Because fractional Brownian motion can be expressed in terms of ordinary Brownian motion,
we also arrive at an expression for the rapid fOU process as a filtered version
of Brownian motion.

A fractional Brownian motion (fBm) is a zero-mean Gaussian process $(W^H_t)_{t\in \RR}$  
with the covariance
\begin{equation}
\label{eq:covfBM}
\EE[ W^H_t W^H_s ] = \frac{\sigma^2_H}{2} \big( |t|^{2H} + |s|^{2H} - |t-s|^{2H} \big) ,
\end{equation}
where $\sigma_H$ is a positive constant.
We use the following moving-average stochastic integral representation of the
fBm (\cite{mandelbrot})
\begin{equation}
\label{mandelbrot}
W^H_t = \frac{1}{\Gamma(H+\frac{1}{2})} 
\int_{\RR} \Big( (t-s)_+^{H - \frac{1}{2}} -(-s)_+^{H - \frac{1}{2}} \Big) dW_s ,
\end{equation}
where $(W_t)_{t \in \RR}$ is a standard Brownian motion over $\RR$.
Then $(W^H_t)_{t\in \RR}$ is indeed a zero-mean Gaussian process with
the covariance (\ref{eq:covfBM}), and we have
\begin{eqnarray}
\nonumber
\sigma^2_H &=& \frac{1}{\Gamma(H+\frac{1}{2})^2} 
\Big[ \int_0^\infty \big( (1+s)^{H - \frac{1}{2}} -s^{H - \frac{1}{2}} \big)^ 2 ds +\frac{1}{2H}\Big] \\
&=&  \frac{1}{\Gamma(2H+1) \sin(\pi H)}.
\end{eqnarray}

We introduce the $\eps$-scaled fOU as
\begin{equation}\label{eq:fOU}
Z^\eps_t = \eps^{-H} \int_{-\infty}^t e^{-\frac{t-s}{\eps}} dW^H_s  = 
\eps^{-H}  W^H_t - \eps^{-1-H}  \int_{-\infty}^t e^{-\frac{t-s}{\eps}} W^H_s ds.
\end{equation}
Thus, the fOU process  is, in fact, a fractional Brownian motion with a restoring force towards zero.
It is a zero-mean, stationary Gaussian process,
with variance
\begin{equation}
\label{eq:sou}
\EE [ (Z^\eps_t)^2 ] = \sigma^2_{{\rm ou}}, \mbox{ with } \sigma^2_{{\rm ou}} = \frac{1}{2} \Gamma(2H+1)  \sigma^2_H = \frac{1}{2\sin (\pi H)} ,
\end{equation}
which is independent of $\eps$,
and covariance
\begin{eqnarray*}
\EE [ Z^\eps_t Z^\eps_{t+s}  ] &=&\sigma^2_{{\rm ou}} {\cal C}_Z\Big(\frac{s}{\eps}\Big) ,
\end{eqnarray*}
which is a function of $s/\eps$ only, with
\begin{eqnarray*}
{\cal C}_Z(s) &=& 
 \frac{1}{\Gamma (2H+1)}
 \Big[ \frac{1}{2} \int_\RR e^{- |v|}| s+v|^{2H} dv - |s|^{2H}\Big]
\\
&=&
\frac{2 \sin (\pi H)}{\pi}
\int_0^\infty \cos (s x  ) \frac{x^{1-2H}}{1+x^2} dx .
\end{eqnarray*}
This shows that $\eps$ is the natural scale of variation of the fOU $Z^\eps_t$.
Note that the random process $Z^\eps_t$ is neither a martingale, nor a Markov process.
For $H \in  (1/2,1)$, it possesses long-range correlation properties
\begin{equation}
\label{eq:corZG3}
{\cal C}_Z(s) =  
 \frac{1}{  \Gamma(2H-1)}  s^{2H-2}
+ o\big( s^{2H-2}\big) 
, \quad \quad s \gg 1.
\end{equation}
This shows that the correlation function is non-integrable at infinity.
In this paper, we focus on the case $H \in (1/2,1)$.
 
We remark that if
$H=1/2$, then the standard Ornstein--Uhlenbeck process (synthesized  with a standard Brownian motion) is a stationary 
Gaussian Markov process with an exponential correlation, and hence a mixing process.
It is possible to simulate paths of the  fOU process using the Cholesky method (see Figure \ref{fig1a}),
or other well-known methods described in \cite{6,bardet}.

\begin{figure}
\begin{center}
\includegraphics[totalheight=6.0cm]{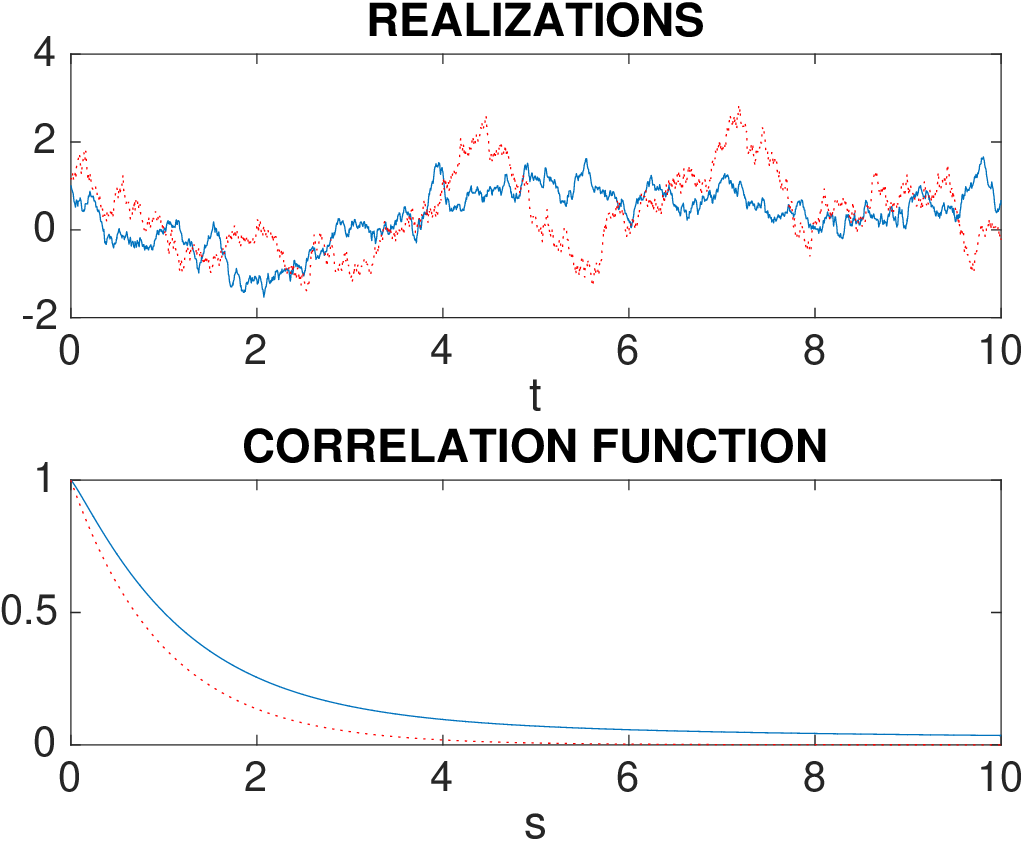}
\end{center}
    \caption{The top plot shows a realization, $Z_t^\eps$, $t \in (0,10)$,  of the fOU process with Hurst index $H=0.6$ and correlation time 
    $\eps=1$ (blue   solid line) and  a realization of the  standard
    Ornstein--Uhlenbeck process with $H=1/2$ and $\eps=1$ (red dashed line).   
The trajectories are more regular when $H$ is larger.
The bottom plot shows the corresponding correlation functions, ${\cal C}_Z(s)$, and the ``heavy'' tail of the blue solid line
of the case $H=0.6$ gives the long-range property. 
 }
\label{fig1a}
\end{figure}

Using Eqs. (\ref{mandelbrot}) and (\ref{eq:fOU}), we arrive at
the moving-average integral representation of the scaled fOU as
\begin{equation}
Z^\eps_t = \sigma_{{\rm ou}} \int_{-\infty}^t {\cal K}^\eps(t-s) dW_s,
\end{equation}
where 
\begin{equation}\label{eq:kdef}
{\cal K}^\eps(t) = \frac{1}{\sqrt{\eps}} {\cal K}\Big(\frac{t}{\eps}\Big),\quad \quad
{\cal K}(t) =\frac{1}{\Gamma(H+\frac{1}{2})
{\sigma_{\rm ou}}
} 
 \Big[ t^{H - \frac{1}{2}} - \int_0^t (t-s)^{H - \frac{1}{2}} e^{-s} ds \Big] .
\end{equation}
The main properties of the kernel ${\cal K}$ in our context are the following   (valid for any $H \in (1/2,1)$):\\
- ${\cal K}$ is {nonnegative-valued,}
${\cal K} \in L^2(0,\infty)$ with $\int_0^\infty {\cal K}^2(u) du = 1$, 
but ${\cal K} \not\in L^1(0,\infty)$,\\
- for short times $t \ll 1$
\begin{equation}
{\cal K} (t) = \frac{1}{\Gamma(H+\frac{1}{2}) 
{\sigma_{\rm ou}}
 } 
\Big( t^{H - \frac{1}{2}}  + O\big(  t^{H +\frac{1}{2}} \big) \Big) ,
\end{equation}
- for long times $t \gg 1$
\begin{equation}
{\cal K} (t) = \frac{1}{\Gamma(H-\frac{1}{2})
{\sigma_{\rm ou}}
}  
\Big( t^{H - \frac{3}{2}}  +O\big(  t^{H - \frac{5}{2}} \big) \Big) ,
\end{equation}
and, in particular, ${\cal K} (t)- \frac{1}{
{\sigma_{\rm ou}}
\Gamma(H-\frac{1}{2})}   t^{H - \frac{3}{2}}$ $\in$  $L^1(0,\infty)$.

\section{The Stochastic Volatility Model}
\label{sec:svm}
The price of the risky asset follows the stochastic differential equation
\begin{equation}
dX_t = \sigma_t^\eps X_t dW^*_t ,
\end{equation}
where the stochastic volatility is
\begin{equation}
\label{def:stochmodel}
\sigma_t^\eps = F(Z_t^\eps) ,
\end{equation}
and  $Z_t^\eps$ is the scaled fOU introduced in the previous section,
which is  adapted to the Brownian motion $W_t$. Moreover,
 $W^*_t$ is a Brownian motion that is correlated to the stochastic volatility through
\begin{equation}\label{eq:corr} 
W^*_t = \rho W_t + \sqrt{1-\rho^2} B_t ,
\end{equation}
where the Brownian motion $B_t$ is independent of $W_t$.

The function $F$ is assumed to be one-to-one, positive-valued,
smooth, bounded and with bounded derivatives.
Accordingly, the filtration ${\cal F}_t $ generated
by $(B_t,W_t)$ is also the one generated by $X_t$.
Indeed, it is equivalent to the filtration generated by $(W^*_t,W_t)$, or $(W^*_t,Z^\eps_t)$.
Because $F$ is one-to-one, it is equivalent to the filtration generated by $(W^*_t,\sigma_t)$.
Because $F$ is positive-valued,
it is equivalent to the filtration generated by $(W^*_t,(\sigma_t^\eps)^2)$, or $X_t$. 

We denote the Hermite coefficients of the volatility function  $F$
with respect to the invariant distribution of the fOU process by $C_k$, 
\begin{equation}
C_k = \int_\RR H_k(z) F^2(\sigma_{{\rm ou}} z) p(z) dz ,\quad \quad H_k(z) = (-1)^k e^{z^2/2} \frac{d^k}{dz^k} e^{-z^2/2},
\end{equation}
with $p(z)=\exp(-z^2/2)/\sqrt{2\pi}$.
 We use these in Appendix \ref{sec:hermite}  to derive some technical lemmas.  
Indeed, there is a technical reason requiring that $F$ satisfies the following condition:
there exists some $\alpha>2$ such that
\begin{equation}\label{eq:cond}
\sum_{k=0}^\infty \frac{\alpha^k C_k^2}{k!} < \infty .
\end{equation}
 
As discussed above, the volatility driving process  $Z_t^\eps$ possesses 
long-range correlation properties. As we now show, the volatility process $\sigma_t^\eps$ itself  inherits this property.

 \begin{lemma}
\label{prop:process}
We denote, for $j=1,2$,
\begin{equation}
\label{def:meanF}
\left< F^j \right> =  \int_{\RR}  F(\sigma_{{\rm ou}}z)^j p(z) dz,
\quad \quad 
\left< {F'}^j \right> =  \int_{\RR}  F'(\sigma_{{\rm ou}}z)^j p(z) dz,
\end{equation}
where $p(z)$ is the pdf of the standard normal distribution.

\begin{enumerate}
\item
The process $\sigma_t^\eps$ is a
stationary random process with mean $\EE[\sigma_t^\eps] = \left<F\right>$ and variance $
{\rm Var}(\sigma_t^\eps) = \left<F^2\right>-\left<F\right>^2$, independently of $\eps$.

\item
The covariance function of the process $\sigma_t^\eps$ is of the form
\begin{eqnarray}
\label{eq:corrY1}
{\rm Cov}\big( \sigma_t^\eps , \sigma_{t+s}^\eps \big) &=& \big( \left<F^2\right>-\left<F\right>^2\big) {\cal C}_\sigma \Big(\frac{s}{\eps}\Big),
\end{eqnarray}
where the correlation function ${\cal C}_\sigma$ satisfies ${\cal C}_\sigma(0)=1$ and
\begin{eqnarray}
{\cal C}_\sigma(s) &=&
 \frac{1   }{  \Gamma(2H-1)}
\frac{ \sigma_{{\rm ou}}^2\left< F' \right>^2 }{ \left< F^2\right>-\left<F\right>^2}
s^{2H-2} +o\big( s^{2H-2}  \big)  ,
\quad \mbox{ for  } s \gg 1.
\label{eq:corrY12}
\end{eqnarray}
\end{enumerate}
\end{lemma}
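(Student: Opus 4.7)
The first item (stationarity with the stated mean and variance) is immediate: since $Z_t^\eps$ is a stationary, zero-mean Gaussian process with variance $\sigma_{{\rm ou}}^2$ (as established in Section~\ref{sec:fou}), the process $\sigma_t^\eps = F(Z_t^\eps)$ is stationary, and changing variables $z = Z_t^\eps/\sigma_{{\rm ou}}$ yields $\EE[\sigma_t^\eps] = \langle F\rangle$ and $\EE[(\sigma_t^\eps)^2] = \langle F^2\rangle$. So the substantive part is the covariance asymptotics in~(\ref{eq:corrY12}).

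The natural tool is Mehler's formula (the Hermite/diagram formula for jointly Gaussian variables). I would expand $F$ in Hermite polynomials relative to the invariant law of $Z_t^\eps$:
\begin{equation*}
F(\sigma_{{\rm ou}} z) = \sum_{k=0}^\infty \frac{d_k}{k!}\, H_k(z),\qquad d_k = \int_\RR F(\sigma_{{\rm ou}} z) H_k(z) p(z) dz.
\end{equation*}
Since $F$ is bounded, $\sum_k d_k^2/k! < \infty$; in fact the stronger condition (\ref{eq:cond}) gives exponential decay of $d_k^2/k!$, which we will use to control the tail. Writing $Z_t^\eps = \sigma_{{\rm ou}} \xi_1$, $Z_{t+s}^\eps = \sigma_{{\rm ou}} \xi_2$ with $(\xi_1,\xi_2)$ standard Gaussian of correlation $r_s := {\cal C}_Z(s/\eps)$, the classical identity $\EE[H_j(\xi_1)H_k(\xi_2)] = k!\, r_s^k \delta_{jk}$ gives
\begin{equation*}
{\rm Cov}\big(\sigma_t^\eps,\sigma_{t+s}^\eps\big) = \sum_{k=1}^\infty \frac{d_k^2}{k!}\, r_s^k.
\end{equation*}

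The leading ($k=1$) term is the dominant one. By Stein's lemma, $d_1 = \EE[F(\sigma_{{\rm ou}}\xi_1)\xi_1] = \sigma_{{\rm ou}}\langle F'\rangle$, so the $k=1$ contribution equals $\sigma_{{\rm ou}}^2 \langle F'\rangle^2\, {\cal C}_Z(s/\eps)$, and plugging the asymptotic~(\ref{eq:corZG3}) yields exactly the claimed leading behavior
\begin{equation*}
\frac{\sigma_{{\rm ou}}^2\langle F'\rangle^2}{\Gamma(2H-1)}\Big(\frac{s}{\eps}\Big)^{2H-2}.
\end{equation*}
After dividing by $\langle F^2\rangle - \langle F\rangle^2 = \sum_{k\ge 1} d_k^2/k!$ one recovers the normalized correlation ${\cal C}_\sigma$ of~(\ref{eq:corrY1}).

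The main (and only) real obstacle is to show the $k\ge 2$ tail is $o(r_s)$ as $s/\eps\to\infty$. Here I would split the sum at some fixed $N$. For $2\le k\le N$, each term is at most $C_k |r_s|^k$ which is $O(|r_s|^2) = o(|r_s|)$ since $|r_s|\to 0$. For the tail $k>N$, I would use $|r_s|\le 1$ together with the summability of $d_k^2/k!$ (which is why we need the moment-type control; in fact (\ref{eq:cond}) transferred from $F^2$ to $F$ by boundedness of $F$ gives ample room) to bound it by an arbitrarily small constant, and then let $N\to\infty$. Combining these two bounds delivers the remainder estimate $o(s^{2H-2})$ and completes the proof.
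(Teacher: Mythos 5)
Your proposal is correct, and it arrives at the same first-order-in-the-correlation expansion that drives the paper's proof, but via a different computational device. The paper writes ${\rm Cov}(\sigma_t^\eps,\sigma_{t+s}^\eps)=\Psi({\cal C}_Z(s/\eps))$ with $\Psi(C)$ an explicit bivariate Gaussian integral of the centered function $F_c(z)=F(\sigma_{{\rm ou}}z)-\left<F\right>$, and then Taylor-expands $\Psi$ at $C=0$: the constant term vanishes because $F_c$ is centered, the linear coefficient is $\big(\int z F_c(z)p(z)dz\big)^2=\sigma_{{\rm ou}}^2\left<F'\right>^2$, and the remainder is declared $O(C^2)$. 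Your Mehler/Hermite route produces exactly this power series, $\Psi(C)=\sum_{k\ge 1}\frac{d_k^2}{k!}C^k$ with $d_1=\sigma_{{\rm ou}}\left<F'\right>$ by Stein's lemma, and combined with (\ref{eq:corZG3}) it gives (\ref{eq:corrY12}). Its advantage is that the $O(C^2)$ remainder is immediate and fully justified: since $|{\cal C}_Z|\le 1$, the whole tail obeys $\sum_{k\ge 2}\frac{d_k^2}{k!}|C|^k\le |C|^2\sum_{k\ge 2}\frac{d_k^2}{k!}$, and the last sum is finite because $F$ is bounded. The paper does not spell out its $O(C^2)$ bound, so your version is in fact the more careful one; it is also in the spirit of the paper's own use of Hermite decompositions in Appendix A (there applied to $F^2$).

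Two small points. First, your handling of the tail $k>N$ is phrased as ``bound it by an arbitrarily small constant and then let $N\to\infty$''; a bound by a constant independent of $s$ cannot yield $o(r_s)$, so you must retain at least a factor $|r_s|^2$ in the tail estimate --- which $|r_s|\le 1$ gives you for free as above, and which in fact makes the splitting at $N$ unnecessary altogether. Second, condition (\ref{eq:cond}) concerns the Hermite coefficients of $F^2$, not those of $F$, and does not ``transfer'' to $F$ by boundedness in any direct way; but none of that is needed here, since the plain $L^2(p)$ summability $\sum_k d_k^2/k!<\infty$ suffices.
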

Consequently,  the process $\sigma_t^\eps$ possesses long-range correlation properties (i.e. its correlation function 
is not integrable at infinity).

\begin{proof}
  The fact that $\sigma_t^\eps$ is a stationary random process with mean
  $\left< F \right>$ is straightforward in view of the definition (\ref{def:stochmodel})
  of $\sigma_t^\eps$.  
 
For any $t,s$, the vector $\sigma_{{\rm ou}}^{-1} (Z^\eps_t ,Z^\eps_{t+s})$ is a Gaussian
random vector with mean $(0,0)$ and $2\times 2$ covariance matrix
$$
{\bf C}^\eps =  \left( \begin{array}{cc}
1 &{\cal C}_Z(s/\eps) \\
{\cal C}_Z(s/\eps) & 1 \end{array} \right).
$$
Therefore, denoting ${F}_c(z) = F(\sigma_{{\rm ou}}z) - \left< F \right>$,
the covariance function of the process $\sigma_t^\eps$ is
\begin{eqnarray*}
{\rm Cov} (\sigma_t^\eps , \sigma_{t+s}^\eps) 
&=& \EE \big[{F}_c(\sigma_{{\rm ou}}^{-1} Z_t^\eps)  {F}_c(\sigma_{{\rm ou}}^{-1} Z_{t+s}^\eps) \big]
\\
&=&
\frac{1}{2 \pi  \sqrt{ \det {\bf C}^\eps}}
\iint_{\RR^2} {F}_c(z_1) {F}_c(z_2) 
\exp\Big( - \frac{(z_1 ,z_2) {{\bf C}^\eps}^{-1} (z_1 ,z_2)^T}{2 }\Big) dz_1 dz_2 
\\
&=& 
\Psi\Big( {\cal C}_Z \Big(\frac{s}{\eps}\Big)\Big) ,
\end{eqnarray*}
with
\begin{eqnarray*}
\Psi(C) = 
\frac{1}{2 \pi \sqrt{1- C^2} }
\iint_{\RR^2}  {F}_c(z_1) {F}_c(z_2) \exp
\Big( - \frac{z_1^2+z_2^2 -2  C  z_1 z_2}{2 (1 -
C^2 )} \Big) dz_1 dz_2 \, .
\end{eqnarray*}
This shows that ${\rm Cov} (\sigma_t^\eps , \sigma_{t+s}^\eps) $ is a function of $s/\eps$ only.
Moreover, the function $\Psi$ can be expanded in powers of $C$ for small $C$,
\begin{eqnarray*}
\Psi(C) &=& 
\frac{1}{2\pi  }
\iint_{\RR^2}  {F}_c(z_1) {F}_c(z_2) \exp
\Big( - \frac{z_1^2+z_2^2 }{2  }\Big) dz_1 dz_2
\\
&&+ C 
\frac{1}{2\pi }
\iint_{\RR^2}  z_1 z_2 {F}_c(z_1) {F}_c(z_2) \exp
\Big( - \frac{z_1^2+z_2^2 }{2 }\Big) dz_1 dz_2 +O(C^2), \quad \quad C \ll 1,
\end{eqnarray*}
which gives with (\ref{eq:corZG3}) the form (\ref{eq:corrY12}) of the 
correlation function for  $\sigma_t^\eps$.
\end{proof}

\section{The Option Price}
\label{sec:option}
Our aim is to  compute the option price defined as the martingale
\begin{equation}
M_t =\EE\big[ h(X_T) |{\cal F}_t \big]  ,
\end{equation}
where $h$ is a smooth function. Weaker assumptions are, in fact, possible for $h$,
as we only need to control the function $Q^{(0)}_t(x)$ defined below rather than $h$.

We introduce the operator
\begin{equation}
{\cal L}_{\rm BS} (\sigma) = \partial_t +\frac{1}{2} \sigma^2 x^2 \partial_x^2 ,
\end{equation}
that is, the standard Black--Scholes operator at zero interest rate and
(constant) volatility $\sigma$.   

We next exploit the fact that the price process is a
martingale to obtain an approximation, by constructing an explicit
function  $Q_t^\eps(x)$, so that $Q_T^\eps(x)=h(x)$ and  
$Q_t^\eps(X_t)$ is a martingale up to first-order terms. Then  $Q_t^\eps(X_t)$ 
gives the approximation  for $M_t$ to this order.
   
The following proposition gives the first-order correction to the 
expression for  the martingale $M_t$ in the regime of small $\eps$.
\begin{proposition}
\label{prop:main}
When $\eps$ is small,
we have
\begin{equation}
M_t = Q_t^\eps(X_t) +o(\eps^{1-H}) ,
\end{equation}
where
\begin{equation}
\label{def:Qt}
Q_t^\eps(x) = Q_t^{(0)}(x) +    \big( x^2\partial_x^2 Q_t^{(0)}(x)\big) \phi_t^\eps
+\eps^{1-H} \widetilde{\sigma} \rho Q_t^{(1)}(x)   .
\end{equation}
The function $Q_t^{(0)}(x) $ is deterministic and given by the Black--Scholes formula with constant volatility $\overline{\sigma}$,
\begin{equation}
\label{eq:bs0}
{\cal L}_{\rm BS} (\overline\sigma) Q_t^{(0)}(x) =0,   \quad \quad Q_T^{(0)}(x) = h(x) .
\end{equation}
The parameters $\overline{\sigma}^2$ and $\widetilde{\sigma}$ are deterministic and given by
\begin{equation} 
\overline{\sigma}^2 = \left< F^2\right> = \int_\RR F(\sigma_{{\rm ou}} z )^2 p(z) dz,
\quad \quad 
\widetilde{\sigma} =\left< F\right> = \int_\RR F(\sigma_{{\rm ou}} z ) p(z) dz,
\end{equation}
where $p(z)$ is the pdf of the standard normal distribution. 
The random component $\phi_t^\eps$ is given by
\begin{equation}
\label{def:phit}
\phi_t^\eps= \EE\Big[ \frac{1}{2} \int_t^T \big( (\sigma_s^\eps)^2 -\overline{\sigma}^2 \big) ds \big| {\cal F}_t\Big].
\end{equation}
The function $Q_t^{(1)}(x) $ is the deterministic correction
\begin{equation}
\label{def:Q1t}
Q_t^{(1)}(x) = x \partial_x \big(x^2 \partial_x^2 Q_t^{(0)}(x)  \big) D_{t} ,
\end{equation}
with $D_{t}$ defined by
\begin{equation}
D_{t} = \overline{D} (T-t)^{H+\frac{1}{2} }, \quad \quad 
\overline{D} =  \frac{
\left< FF' \right>}{\Gamma(H+\frac{3}{2})}
= \frac{
1
}{\Gamma(H+\frac{3}{2})}
\int_\RR FF'(\sigma_{{\rm ou}} z ) p(z) d z
 .
\label{def:DtT}
\end{equation}
\end{proposition}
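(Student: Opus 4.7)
The plan is to use the martingale method: I construct $Q_t^\eps(X_t)$ so that $Q_T^\eps(X_T)=h(X_T)$ and the drift of $dQ_t^\eps(X_t)$, when integrated against $\EE[\cdot\,|\,\mathcal{F}_t]$, contributes only $o(\eps^{1-H})$. Since $\phi_T^\eps=0$ and $D_T=\overline{D}(T-T)^{H+1/2}=0$, the ansatz (\ref{def:Qt}) satisfies $Q_T^\eps(X_T)=Q_T^{(0)}(X_T)=h(X_T)$; hence $M_t-Q_t^\eps(X_t)=\EE[\int_t^T\mathrm{drift}_s\,ds\,|\,\mathcal{F}_t]$, and the proof reduces to an explicit drift computation together with a size estimate.

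I apply It\^o's formula to each of the three pieces of $Q_t^\eps(X_t)$. Writing $A_t(x):=x^2\partial_x^2 Q_t^{(0)}(x)$ and $\mathcal{D}:=x\partial_x$, the commutation $[\mathcal{D},\mathcal{L}_{\rm BS}(\overline\sigma)]=0$ implies that both $A_t$ and $\mathcal{D}A_t$ solve $\mathcal{L}_{\rm BS}(\overline\sigma)u=0$, so the only drifts arise from the mismatch $\tfrac{1}{2}((\sigma_t^\eps)^2-\overline\sigma^2)x^2\partial_x^2(\cdot)$. The leading Black--Scholes residual $\tfrac{1}{2}((\sigma_t^\eps)^2-\overline\sigma^2)A_t(X_t)$ from $Q_t^{(0)}(X_t)$ is itself not small. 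To kill it, I use that $\phi_t^\eps$ is a conditional expectation: the process $\psi_t:=\phi_t^\eps+\tfrac{1}{2}\int_0^t((\sigma_s^\eps)^2-\overline\sigma^2)\,ds$ is a martingale on the $W$-filtration (since $\sigma^\eps$ is $W$-measurable), so the Brownian martingale representation theorem gives
\[ d\phi_t^\eps = -\tfrac{1}{2}((\sigma_t^\eps)^2-\overline\sigma^2)\,dt + \Phi_t^\eps\,dW_t \]
for some predictable $\Phi^\eps$. Inserting this into the product-rule expansion of $A_t(X_t)\phi_t^\eps$ produces the drift $-\tfrac{1}{2}((\sigma_t^\eps)^2-\overline\sigma^2)A_t(X_t)$ that cancels the BS residual, leaving the cross-variation drift $\rho\sigma_t^\eps X_t\partial_x A_t(X_t)\Phi_t^\eps$.

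The purpose of the $Q_t^{(1)}$ corrector is to cancel this cross-variation. I identify $\Phi_t^\eps$ via the Clark--Ocone formula: since the Malliavin derivative of $Z_s^\eps$ in the direction of $W_t$ is $\sigma_{\rm ou}\mathcal{K}^\eps(s-t)\mathbf{1}_{s\geq t}$,
\[ \Phi_t^\eps = \sigma_{\rm ou}\int_t^T \EE\big[F(Z_s^\eps)F'(Z_s^\eps)\,\big|\,\mathcal{F}_t\big]\,\mathcal{K}^\eps(s-t)\,ds. \]
Rescaling $v=(s-t)/\eps$, replacing the conditional expectation by its stationary value $\langle FF'\rangle$, and using the large-$v$ asymptotics $\mathcal{K}(v)\sim v^{H-3/2}/\Gamma(H-\tfrac{1}{2})$ from Section~\ref{sec:fou} together with $\Gamma(H+\tfrac{3}{2})=(H+\tfrac{1}{2})\Gamma(H+\tfrac{1}{2})$ yields
\[ \Phi_t^\eps = \eps^{1-H}\,\frac{\sigma_{\rm ou}\langle FF'\rangle}{\Gamma(H+\tfrac{1}{2})}\,(T-t)^{H-1/2}+o(\eps^{1-H}) = -\eps^{1-H}\,\partial_t D_t + o(\eps^{1-H}). \]
Using $X_t\partial_x A_t=\mathcal{D}A_t$ and replacing $\sigma_t^\eps$ by its mean $\widetilde\sigma$ matches $\rho\widetilde\sigma\mathcal{D}A_t(X_t)\Phi_t^\eps$ against the deterministic drift $\eps^{1-H}\widetilde\sigma\rho\mathcal{D}A_t(X_t)\partial_t D_t$ produced by the $Q_t^{(1)}$ piece, so these cancel to leading order.

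The hard part is making the remaining residuals quantitatively small. These are of two types: (a) products such as $\phi_t^\eps\cdot((\sigma_t^\eps)^2-\overline\sigma^2)$ and $\eps^{1-H}D_t\cdot((\sigma_t^\eps)^2-\overline\sigma^2)$ coming from the auxiliary drifts of $A_t(X_t)\phi_t^\eps$ and $\mathcal{D}A_t(X_t)D_t$; (b) the fluctuation $\rho(\sigma_t^\eps-\widetilde\sigma)\mathcal{D}A_t(X_t)\Phi_t^\eps$ plus the error in replacing $\EE[FF'(Z_s^\eps)|\mathcal{F}_t]$ by $\langle FF'\rangle$. After $\EE[\int_t^T\cdot\,ds\,|\,\mathcal{F}_t]$ is applied, each becomes a double integral weighted by covariances of the form $\mathrm{Cov}(F^j(Z_s^\eps),F^k(Z_u^\eps))$ against the kernel $\mathcal{K}^\eps$; the target is $O(\eps^{2-2H})=o(\eps^{1-H})$. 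Because the fOU correlation decays only as $(s/\eps)^{2H-2}$ rather than exponentially, classical mixing estimates do not apply. My plan is to expand each $F^j(Z^\eps)$ into Hermite polynomials as in Appendix~\ref{sec:hermite}, so that the $k$-th chaos component has correlation decaying like $(s/\eps)^{k(2H-2)}$, and then sum the resulting series under the summability hypothesis (\ref{eq:cond}) on $F$. This is the main technical obstacle and is precisely where the condition (\ref{eq:cond}) is used.
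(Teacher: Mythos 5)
Your construction is, up to presentation, the paper's own proof: the same ansatz (\ref{def:Qt}) with the same terminal check, the same cancellation of the Black--Scholes residual through the decomposition $d\phi_t^\eps=-\frac12((\sigma_t^\eps)^2-\overline\sigma^2)\,dt+d\psi_t^\eps$ with $\psi^\eps$ a martingale, the same bracket $\Phi^\eps_t=\sigma_{\rm ou}\int_t^T\EE\big[FF'(Z_s^\eps)\,|\,{\cal F}_t\big]{\cal K}^\eps(s-t)\,ds$ (the paper derives it in Lemma \ref{lem:1} by an explicit It\^o expansion of the conditional Gaussian expectation rather than by Clark--Ocone, but with $G=\frac12(F^2-\overline\sigma^2)$ one has $G'=FF'$ and the formulas agree), the same asymptotics $\Phi^\eps_t=-\eps^{1-H}\partial_tD_t+o(\eps^{1-H})$, and the same four residual terms.

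The gap is in the residual estimates, which you rightly call the crux but whose structure you misstate. They do not reduce to ``double integrals weighted by covariances ${\rm Cov}(F^j(Z_s^\eps),F^k(Z_u^\eps))$'' with target $O(\eps^{2-2H})$, for two reasons. First, every residual carries a Greek weight such as $\big(x^2\partial_x^2( x^2 \partial_x^2 )\big)Q_s^{(0)}(X_s)$ evaluated along the price path; this is an adapted semimartingale correlated with the volatility and cannot be pulled out of the expectation. The paper handles it by a freezing argument: partition $[t,T]$ into $N$ subintervals, replace the weight by its value at the left endpoints, bound the frozen sums via Lemmas \ref{lem:4}--\ref{lem:6}, control the freezing error by the $|s-t_k|^{1/2}$ mean-square continuity of the weight, and send $\eps\to0$ before $N\to\infty$. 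Second, and more seriously, for the dominant residual $\int_t^T Y_s\,G(Z_s^\eps)\phi_s^\eps\,ds$ a second-moment (covariance) computation cannot succeed: Cauchy--Schwarz with $\sup_s\EE[(\phi_s^\eps)^2]^{1/2}\asymp\eps^{1-H}$ yields exactly the borderline order $O(\eps^{1-H})$, not $o(\eps^{1-H})$. The paper's Lemma \ref{lem:4} gains the missing smallness by iterated conditioning --- rewriting the conditional expectation of $\int_s^T G(Z_u^\eps)\phi_u^\eps\,du$ as $\frac12\EE\big[\big(\int_s^T G(Z_u^\eps)du\big)^2\,|\,{\cal F}_s\big]$ --- followed by the fourth-moment bound (\ref{eq:boundIepst}) of Lemma \ref{lem:a}. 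That fourth-moment bound is where the Hermite expansion and hypothesis (\ref{eq:cond}) actually enter (through the diagram formula for the fourth moment of each chaos, summed using (\ref{eq:cond})), not through the per-chaos covariance decay $(s/\eps)^{k(2H-2)}$ you invoke. So your toolbox is the right one, but the reduction as you describe it would stall at the borderline order $O(\eps^{1-H})$ on the main residual unless you add the iterated-conditioning and higher-moment steps.
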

As shown in Lemma \ref{lem:3} (first item),
as $\eps \to 0$, the zero-mean random variable $\eps^{H-1} \phi_t^\eps$
has a variance that converges to $ \sigma_{\phi}^2 (T-t)^{2H} $, with
\begin{equation}
\label{def:sigmaphi}
\sigma_{\phi}^2 = 
\left<FF'\right>^2\Big( \frac{1}{\Gamma(2H+1) \sin (\pi H)} -\frac{1}{2H \Gamma(H+\frac{1}{2})^2}\Big) .
\end{equation}
Moreover,  it converges in distribution to a Gaussian random variable  with mean zero and variance $ \sigma_{\phi}^2 (T-t)^{2H} $.
This shows that the two corrective terms in (\ref{def:Qt}) are of the same order $\eps^{1-H}$,
but the first one is random, zero-mean and approximately Gaussian distributed,
while the second one is deterministic.

\begin{proof}
For any smooth function $q_t(x)$, we have by It\^o's formula
\begin{eqnarray*}
dq_t(X_t) &=& \partial_t q_t(X_t) dt +  \big( x \partial_x q_t\big) (X_t) \sigma_t^\eps dW_t^*
+\frac{1}{2}  \big( x^2 \partial_x^2 q_t\big) (X_t) (\sigma_t^\eps)^2 dt\\
&=&
{\cal L}_{\rm BS}(\sigma_t^\eps) q_t(X_t) dt +  \big( x \partial_x q_t\big) (X_t) \sigma_t^\eps dW_t^*  ,
\end{eqnarray*}
where the last term is a martingale.
Therefore, by (\ref{eq:bs0}), we have
\begin{eqnarray}
\label{eq:Q01}
dQ_t^{(0)}(X_t)  &=&
\frac{1}{2}\big( (\sigma_t^\eps)^2-\overline{\sigma}^2\big) \big( x^2 \partial_x^2 \big) Q_t^{(0)}(X_t) dt + dN^{(0)}_t ,
\end{eqnarray}
where $N_t^{(0)}$ is a martingale
$$
dN_t^{(0)}= \big( x \partial_x \big) Q_t^{(0)} (X_t) \sigma_t^\eps dW_t^* .
$$
Note also that in Eq.~(\ref{eq:Q01})  (and below), we use the notation
\begin{eqnarray*}
\big( x^2 \partial_x^2 \big) Q_t^{(0)}(X_t) = \left(  \big( x^2 \partial_x^2 \big) Q_t^{(0)}(x) \right)\big|_{ x=X_t}  . 
\end{eqnarray*} 

Let $\phi_t^\eps$ be defined by (\ref{def:phit}).
We have
$$
\phi_t^\eps = \psi_t^\eps - \frac{1}{2} \int_0^t \big( (\sigma_s^\eps)^2 -\overline{\sigma}^2 \big) ds ,
$$
where the martingale $\psi_t^\eps$ is defined by
\begin{equation}
\label{def:Kt}
\psi_t^\eps = 
\EE \Big[  \frac{1}{2} \int_0^T \big( (\sigma_s^\eps)^2 -\overline{\sigma}^2 \big) ds \big| {\cal F}_t\Big] .
\end{equation}
We can write
$$
\frac{1}{2}\big( (\sigma_t^\eps)^2-\overline{\sigma}^2\big)
 \big( x^2 \partial_x^2 \big) Q_t^{(0)}(X_t) dt =
 \big( x^2 \partial_x^2 \big) Q_t^{(0)}(X_t) d\psi_t^\eps -
 \big( x^2 \partial_x^2 \big) Q_t^{(0)}(X_t) d\phi_t^\eps .
$$
By It\^o's formula,
\begin{eqnarray*}
d \big[ \phi_t^\eps \big( x^2 \partial_x^2 \big) Q_t^{(0)}(X_t)\big] &=&
\big( x^2 \partial_x^2 \big) Q_t^{(0)}(X_t) d\phi_t^\eps+
\big( x\partial_x\big(  x^2 \partial_x^2 \big)\big) Q_t^{(0)}(X_t) \sigma_t^\eps \phi_t^\eps dW_t^*
\\
&&+ {\cal L}_{\rm BS}(\sigma_t^\eps)  \big(  x^2 \partial_x^2 \big) Q_t^{(0)}(X_t)  \phi_t^\eps dt
\\
&&
+ \big( x\partial_x\big(  x^2 \partial_x^2 \big)\big) Q_t^{(0)}(X_t) \sigma_t^\eps d\left< \phi^\eps ,W^*\right>_t  .
\end{eqnarray*}
Because ${\cal L}_{\rm BS} (\sigma_t^\eps)= {\cal L}_{\rm BS}(\overline{\sigma}) + \frac{1}{2} \big( (\sigma_t^\eps)^2 - \overline{\sigma}^2\big)\big( x^2 \partial_x^2 \big)$
and ${\cal L}_{\rm BS}(\overline{\sigma})  \big(  x^2 \partial_x^2 \big) Q_t^{(0)}(x)=0$, this gives
\begin{eqnarray*}
d \big[ \phi_t^\eps \big( x^2 \partial_x^2 \big) Q_t^{(0)}(X_t)\big] &=&
-\frac{1}{2} \big( (\sigma_t^\eps)^2 - \overline{\sigma}^2\big)
\big( x^2 \partial_x^2 \big) Q_t^{(0)}(X_t) dt\\
&&+\frac{1}{2} \big( (\sigma_t^\eps)^2 - \overline{\sigma}^2\big)
\big( x^2\partial_x^2\big(  x^2 \partial_x^2 \big)\big)  Q_t^{(0)}(X_t) \phi_t^\eps dt\\
&&+ \big( x\partial_x\big(  x^2 \partial_x^2 \big) \big) Q_t^{(0)}(X_t) \sigma_t^\eps  d\left< \phi^\eps ,W^*\right>_t  
\\
&&  +
\big( x\partial_x\big(  x^2 \partial_x^2 \big)\big) Q_t^{(0)}(X_t) \sigma_t^\eps \phi_t^\eps dW_t^*+
\big( x^2 \partial_x^2 \big) Q_t^{(0)}(X_t) d\psi_t^\eps
.
\end{eqnarray*}
We have $\left< \phi^\eps ,W^*\right>_t = \left< \psi^\eps ,W^*\right>_t = \rho \left< \psi^\eps ,W\right>_t$, and therefore
\begin{eqnarray*}
d \big[( \phi_t^\eps \big( x^2 \partial_x^2 \big) Q_t^{(0)}(X_t)\big]
&=& -\frac{1}{2} \big( (\sigma_t^\eps)^2 - \overline{\sigma}^2\big)
\big( x^2 \partial_x^2 \big) Q_t^{(0)}(X_t) dt\\
&&+\frac{1}{2} \big( (\sigma_t^\eps)^2 - \overline{\sigma}^2\big)
\big( x^2\partial_x^2\big(  x^2 \partial_x^2 \big)\big)   Q_t^{(0)}(X_t) \phi_t^\eps dt
\\
&&+ \rho \big( x\partial_x\big(  x^2 \partial_x^2 \big) \big) Q_t^{(0)}(X_t)  \sigma_t^\eps    d\left< \psi^\eps ,W \right>_t   \\
&&+ dN^{(1)}_t ,
\end{eqnarray*}
where $N^{(1)}_t$ is a martingale
$$
dN^{(1)}_t =\big( x\partial_x\big(  x^2 \partial_x^2 \big)\big) Q_t^{(0)}(X_t) \sigma_t^\eps \phi_t^\eps dW_t^*
+
 \big( x^2 \partial_x^2 \big) Q_t^{(0)}(X_t) d\psi_t^\eps .
$$
Therefore,
\begin{eqnarray}
\nonumber
d \big[ Q_t^{(0)}(X_t)+ \phi_t^\eps \big( x^2 \partial_x^2 \big) Q_t^{(0)}(X_t)\big]
&=&  \frac{1}{2} \big( (\sigma_t^\eps)^2 - \overline{\sigma}^2\big)
\big( x^2\partial_x^2\big(  x^2 \partial_x^2 \big)\big) Q_t^{(0)}(X_t) \phi_t^\eps dt
\\
\nonumber
&&+ \rho   \big( x\partial_x\big(  x^2 \partial_x^2 \big) \big) Q_t^{(0)}(X_t) \sigma_t^\eps  d\left< \psi^\eps ,W \right>_t   \\
&&+  dN^{(0)}_t +dN^{(1)}_t  .
\end{eqnarray}

The deterministic function $Q^{(1)}_t$ defined by (\ref{def:Q1t}) satisfies
$$
{\cal L}_{\rm BS}(\overline{\sigma}) Q^{(1)}_t(x) = -
\big( x\partial_x \big( x^2 \partial_x^2 Q^{(0)}_t(x)\big)\big) \theta_{t} , \quad \quad Q^{(1)}_T(x) = 0,
$$
where $\theta_{t} = -dD_t/dt$ is such that
$$
d\left< \psi^\eps,W\right>_t = \big( \eps^{1-H}   \theta_{t} + \widetilde{\theta}^\eps_{t} \big)   dt,
$$
as shown in Lemmas \ref{lem:1}-\ref{lem:2} with $\widetilde{\theta}^\eps_{t}$ characterized 
in Eq. (\ref{eq:phit}).
By applying It\^o's formula, we obtain
\begin{eqnarray*}
dQ_t^{(1)}(X_t)  
&=&
{\cal L}_{\rm BS}(\sigma_t^\eps) Q_t^{(1)}(X_t) dt +  \big( x \partial_x \big) Q_t^{(1)} (X_t) \sigma_t^\eps dW_t^* \\
&=& {\cal L}_{\rm BS}(\overline\sigma) Q_t^{(1)}(X_t) dt + 
\frac{1}{2} \big( (\sigma_t^\eps)^2 -\overline{\sigma}^2 \big) \big( x^2 \partial_x^2 \big) Q_t^{(1)}(X_t) dt  \\
&&
+
 \big( x \partial_x \big) Q_t^{(1)}  (X_t) \sigma_t^\eps dW_t^*  \\
 &=& \frac{1}{2} \big( (\sigma_t^\eps)^2 -\overline{\sigma}^2 \big) \big( x^2 \partial_x^2 \big) Q_t^{(1)}(X_t) dt 
 - \big( x\partial_x \big( x^2 \partial_x^2 \big)\big)Q^{(0)}_t(X_t) \theta_{t} dt
+ dN^{(2)}_t ,
\end{eqnarray*}
where $N^{(2)}_t$ is a martingale
$$
dN^{(2)}_t =  \big( x \partial_x\big) Q_t^{(1)} (X_t) \sigma_t^\eps dW_t^*   .
$$
Therefore,
\begin{eqnarray}
\nonumber
&&
d \big[ Q_t^{(0)}(X_t)+ \phi_t^\eps \big( x^2 \partial_x^2 \big) Q_t^{(0)}(X_t)
+\eps^{1-H} \rho \widetilde{\sigma} Q_t^{(1)}(X_t)  
\big] \\
\nonumber
&&
 = \frac{1}{2} \big( (\sigma_t^\eps)^2 - \overline{\sigma}^2\big)
\big( x^2\partial_x^2\big(  x^2 \partial_x^2 \big)\big) Q_t^{(0)}(X_t) \phi_t^\eps dt
+\frac{\eps^{1-H}}{2} \rho \widetilde{\sigma} \big( (\sigma_t^\eps)^2 -\overline{\sigma}^2 \big) \big( x^2 \partial_x^2 \big) Q_t^{(1)}(X_t) dt  \\
\nonumber
 &&
\quad 
+\eps^{1-H}  \rho   \big( x\partial_x\big(  x^2 \partial_x^2 \big) \big) Q_t^{(0)}(X_t) 
(\sigma_t^\eps-\widetilde{\sigma}) \theta_{t}dt 
+  \rho   \big( x\partial_x\big(  x^2 \partial_x^2 \big) \big) Q_t^{(0)}(X_t)  \sigma_t^\eps \widetilde{\theta}_{t}^{\eps}dt \\
&&
\quad
+ d N^{(0)}_t + dN^{(1)}_t +\eps^{1-H} \rho \widetilde{\sigma}dN^{(2)}_t .
\label{eq:proof1b}
\end{eqnarray}
We next  show that the first four terms of the right-hand side are smaller than $\eps^{1-H}$.
We introduce, for any $t \in [0,T]$,
\begin{eqnarray}
R^{(1)}_{t,T} &=& \int_t^T  \frac{1}{2} \big( x^2\partial_x^2\big(  x^2 \partial_x^2 \big)\big) Q_s^{(0)}(X_s)  \big( (\sigma_s^\eps)^2 - \overline{\sigma}^2\big) \phi_s^\eps
 ds , \\
R^{(2)}_{t,T} &=& \int_t^T \frac{\eps^{1-H}}{2} \rho \widetilde{\sigma} \big( x^2 \partial_x^2 \big) Q_s^{(1)}(X_s)  \big( (\sigma_s^\eps)^2 -\overline{\sigma}^2 \big)ds , \\
R^{(3)}_{t,T} &=& \int_t^T \eps^{1-H}  \rho   \big( x\partial_x\big(  x^2 \partial_x^2 \big) \big) Q_s^{(0)}(X_s) 
 \theta_{s}   (\sigma_s^\eps-\widetilde{\sigma})ds  , \\
R^{(4)}_{t,T} &=& \int_t^T \rho   \big( x\partial_x\big(  x^2 \partial_x^2 \big) \big) Q_s^{(0)}(X_s)  \sigma_s^\eps \widetilde{\theta}_{s}^{\eps}ds .
\end{eqnarray}

We show that,  for $j=1,2,3,4$,
\begin{equation}
\label{eq:estimeRj}
\displaystyle \lim_{\eps \to 0} \eps^{H-1} \sup_{t \in [0,T]} \EE \big[ (R^{(j)}_{t,T})^2 \big]^{1/2} =0 .
\end{equation}

{\it Step 1: Proof of (\ref{eq:estimeRj}) for $j=1$.}\\
We denote
$$
Y^{(1)}_s = \big( x^2\partial_x^2\big(  x^2 \partial_x^2 \big)\big) Q_s^{(0)}(X_s)
$$
and
\begin{equation}
\label{def:gammaeps}
\gamma_t^\eps = \frac{1}{2} \int_0^t \big( (\sigma_s^\eps)^2 -\overline{\sigma}^2 \big) \phi_s^\eps ds ,
\end{equation}
so that we can write 
$$
R^{(1)}_{t,T} = \int_t^T Y^{(1)}_s \frac{d\gamma_s^\eps }{ds} ds .
$$
Note that $Y^{(1)}_s$ is a bounded semimartingale with bounded quadratic variations,
so that its mean square increments $\EE[ (Y^{(1)}_s-Y^{(1)}_{s'})^2]$ are uniformly bounded by $K|s-s'|$.
Let $N$ be a positive integer. We denote $t_k=t+(T-t)k/N$. We have
\begin{eqnarray*}
R^{(1)}_{t,T} &=& \sum_{k=0}^{N-1} \int_{t_k}^{t_{k+1}}Y^{(1)}_s \frac{d\gamma_s^\eps }{ds} ds =R^{(1,a)}_{t,T} +R^{(1,b)}_{t,T}  ,\\
R^{(1,a)}_{t,T} &=&\sum_{k=0}^{N-1} \int_{t_k}^{t_{k+1}}Y^{(1)}_{t_k} \frac{d\gamma_s^\eps }{ds} ds = \sum_{k=0}^{N-1} Y^{(1)}_{t_k} \big( \gamma_{t_{k+1}}^\eps -
 \gamma_{t_{k}}^\eps \big) , \\
 R^{(1,b)}_{t,T} &=&\sum_{k=0}^{N-1} \int_{t_k}^{t_{k+1}}\big( Y^{(1)}_{s}-Y^{(1)}_{t_k}\big)  \frac{d\gamma_s^\eps }{ds} ds  .
\end{eqnarray*}
Note that we obtain by Minkowski's inequality 
\begin{eqnarray*}
\EE \big[ (R^{(1,a)}_{t,T})^2 \big]^{1/2} &\leq & 2 \sum_{k=0}^{N} \|Y^{(1)}\|_\infty 
\EE[ (\gamma_{t_{k}}^\eps)^2]^{1/2} \leq  2  (N+1) \|Y^{(1)}\|_\infty \sup_{s\in [0,T]}
\EE[ (\gamma_{s}^\eps)^2]^{1/2}  ,
\end{eqnarray*}
 so that, by Lemma \ref{lem:4}, we have, for any fixed $N$,
$$
 \lim_{\eps \to 0} \eps^{H-1} \sup_{t \in [0,T]} \EE \big[ (R^{(1,a)}_{t,T})^2 \big]^{1/2} =0 .
$$
On the other hand,
\begin{eqnarray*}
\EE \big[ (R^{(1,b)}_{t,T})^2 \big]^{1/2} &\leq & \|F\|_\infty^2 \sum_{k=0}^{N-1} \int_{t_k}^{t_{k+1}} \EE[ \big( Y^{(1)}_{s}-Y^{(1)}_{t_k}\big)^4]^{1/4}
\EE [ (\phi_s^\eps )^4]^{1/4} ds \\
&\leq & K \sum_{k=0}^{N-1} \int_{t_k}^{t_{k+1}} (s-t_k)^{1/2} ds
 \sup_{s\in [0,T]} \EE [ (\phi_s^\eps )^4]^{1/4}   \\
&\leq & \frac{K'}{\sqrt{N}} \sup_{s\in [0,T]} \EE [ (\phi_s^\eps )^4]^{1/4}   .
\end{eqnarray*}
 Therefore, 
by Lemma \ref{lem:3} (fourth item), we get
$$
 \limsup_{\eps \to 0} \eps^{H-1} \sup_{t \in [0,T]} \EE \big[ (R^{(1)}_{t,T})^2 \big]^{1/2} \leq  \limsup_{\eps \to 0} \eps^{H-1} \sup_{t \in [0,T]} \EE \big[ (R^{(1,b)}_{t,T})^2 \big]^{1/2} \leq \frac{K'}{\sqrt{N}} .
$$
Because this is true for any $N$, we get the desired result.\\

{\it Step 2: Proof of (\ref{eq:estimeRj}) for $j=2$.}\\
We denote
$$
Y^{(2)}_s =   \rho \widetilde{\sigma} \big( x^2 \partial_x^2 \big) Q_s^{(1)}(X_s)  
$$
and
\begin{equation}
\label{def:kappaeps}
\kappa_t^\eps = \frac{\eps^{1-H}}{2}  \int_0^t \big( (\sigma_s^\eps)^2 -\overline{\sigma}^2 \big)ds  ,
\end{equation}
so that we can write 
$$
R^{(2)}_{t,T} = \int_t^T Y^{(2)}_s \frac{d\kappa_s^\eps }{ds} ds .
$$
Note that $Y^{(2)}_s$ is a bounded semimartingale with bounded quadratic variations.
Let $N$ be a positive integer. We denote as above $t_k=t+(T-t)k/N$. We then have
\begin{eqnarray*}
R^{(2)}_{t,T} &=& \sum_{k=0}^{N-1} \int_{t_k}^{t_{k+1}}Y^{(2)}_s \frac{d\kappa_s^\eps }{ds} ds =R^{(2,a)}_{t,T} +R^{(2,b)}_{t,T}  ,\\
R^{(2,a)}_{t,T} &=&\sum_{k=0}^{N-1} \int_{t_k}^{t_{k+1}}Y^{(2)}_{t_k} \frac{d\kappa_s^\eps }{ds} ds = \sum_{k=0}^{N-1} Y^{(2)}_{t_k} \big( \kappa_{t_{k+1}}^\eps -
 \kappa_{t_{k}}^\eps \big) , \\
 R^{(2,b)}_{t,T} &=&\sum_{k=0}^{N-1} \int_{t_k}^{t_{k+1}}\big( Y^{(2)}_{s}-Y^{(2)}_{t_k}\big)  \frac{d\kappa_s^\eps }{ds} ds  .
\end{eqnarray*}
Then,  on the one hand,
\begin{eqnarray*}
\EE \big[ (R^{(2,a)}_{t,T})^2 \big]^{1/2} &\leq & 2 \sum_{k=0}^{N} \|Y^{(2)}\|_\infty 
\EE[ (\kappa_{t_{k}}^\eps)^2]^{1/2}  \leq  2  (N+1) \|Y^{(2)}\|_\infty \sup_{s\in [0,T]}
\EE[ (\kappa_{s}^\eps)^2]^{1/2}  ,
\end{eqnarray*}
so that, by Lemma \ref{lem:6}, we obtain
$$
 \lim_{\eps \to 0} \eps^{H-1} \sup_{t \in [0,T]} \EE \big[ (R^{(2,a)}_{t,T})^2 \big]^{1/2} =0 .
$$
On the other hand,
\begin{eqnarray*}
\EE \big[ (R^{(2,b)}_{t,T})^2 \big]^{1/2} &\leq & \eps^{1-H} \|F\|_\infty^2 \sum_{k=0}^{N-1} \int_{t_k}^{t_{k+1}} \EE[ \big( Y^{(2)}_{s}-Y^{(2)}_{t_k}\big)^2]^{1/2}ds \\
&\leq & K \eps^{1-H} \sum_{k=0}^{N-1} \int_{t_k}^{t_{k+1}} (s-t_k)^{1/2} ds  \\
&\leq & \frac{K' \eps^{1-H}}{\sqrt{N}}   .
\end{eqnarray*}
Therefore, we get
$$
 \limsup_{\eps \to 0} \eps^{H-1} \sup_{t \in [0,T]} \EE \big[ (R^{(2)}_{t,T})^2 \big]^{1/2} \leq  \limsup_{\eps \to 0} \eps^{H-1} \sup_{t \in [0,T]} \EE \big[ (R^{(2,b)}_{t,T})^2 \big]^{1/2} \leq \frac{K'}{\sqrt{N}} .
$$
Because this is true for any $N$, we get the desired result.\\

{\it Step 3: Proof of (\ref{eq:estimeRj}) for $j=3$.}\\
This proof follows the same lines as the proof of Step 2 with
\begin{equation}
\label{def:etaeps}
\eta_t^\eps = \eps^{1-H}  \int_0^t \big( \sigma_s^\eps  -\widetilde{\sigma} \big)   ds   ,
\end{equation}
instead of $\kappa_{t}^\eps$, and using the fact that $\theta_t$ is bounded. We then  get the desired result by Lemma \ref{lem:5}.\\

{\it Step 4: Proof of (\ref{eq:estimeRj}) for $j=4$.}\\
We have
$$
\EE \big[ (R^{(4)}_{t,T})^2 \big]^{1/2}  \leq K \int_t^T  \EE \big[  (\widetilde{\theta}_{s}^{\eps})^2 \big]^{1/2} ds  \leq K' \sup_{s\in [0,T]}  \EE \big[  (\widetilde{\theta}_{s}^{\eps})^2 \big]^{1/2} .
$$
By Lemma \ref{lem:2}, we obtain
$$
 \lim_{\eps \to 0} \eps^{H-1} \sup_{t \in [0,T]} \EE \big[ (R^{(4)}_{t,T})^2 \big]^{1/2} =0 .
$$

We can now complete the proof of Proposition \ref{prop:main}.
In (\ref{def:Qt}),  we introduced the approximation 
$$
Q_t^\eps(x) = 
Q_t^{(0)}(x)+ \phi_t^\eps \big( x^2 \partial_x^2 \big) Q_t^{(0)}(x)
+\eps^{1-H} \rho \widetilde{\sigma} Q_t^{(1)}(x).
$$
We then have
$$
Q_T^\eps(x) = h(x),
$$
because $Q_T^{(0)}(x)=h(x)$, $\phi_T^\eps=0$, and $Q^{(1)}_T(x)=0$.
Let us denote
\begin{eqnarray}
R_{t,T} &=&R^{(1)}_{t,T}+
R^{(2)}_{t,T}+
R^{(3)}_{t,T}+
R^{(4)}_{t,T},\\
N_t &=&  \int_0^t d N^{(0)}_s + dN^{(1)}_s +\eps^{1-H} \rho \widetilde{\sigma}dN^{(2)}_s  .
\end{eqnarray}
By (\ref{eq:proof1b}) we have
$$
Q_T^\eps(X_T)
 - Q_t^\eps(X_t) = R_{t,T} + N_T-  N_t.
$$
Therefore,
\begin{eqnarray}
\nonumber
M_t &=& \EE \big[ h(X_T) |{\cal F}_t \big] = 
\EE \big[ Q_T^\eps(X_T) |{\cal F}_t \big] = Q_t^\eps(X_t) +\EE \big[ R_{t,T} |{\cal F}_t \big]+
\EE \big[ N_T-N_t |{\cal F}_t \big] \\
&=&
Q_t^\eps(X_t) +\EE \big[ R_{t,T} |{\cal F}_t \big]
 ,
\end{eqnarray}
which gives the desired result, because $\EE \big[ R_{t,T}  |{\cal F}_t \big]$ is of order $o(\eps^{1-H})$ in $L^2$.
\end{proof}

\section{Call Price Correction and Implied Volatility}\label{sec:implied}
We denote the Black--Scholes call price, with current time $t$, maturity $T$, strike $K$,
underlying value $x$, and volatility $\sigma$, by $C_{\rm BS}(t,x;K,T;\sigma)$, so that
$Q_t^{(0)}$ in Eq. (\ref{eq:bs0}) is
$$
 Q_t^{(0)}(x)=C_{\rm BS}(t,x;K,T;\overline{\sigma}) .
 $$
Indeed,   $C_{\rm BS}$ gives an explicit formula for the price when
volatility is constant. In the case with stochastic volatility as considered here, no explicit 
pricing formula exists. As shown in Eq.~(\ref{def:Qt}), however,  we can 
get an asymptotic expression for the price in the case with the stochastic volatility model (\ref{def:stochmodel0})
as a correction to $Q_t^{(0)}(x)$, the Black--Scholes price
evaluated at the effective, or  ``homogenized'', volatility $\bar\sigma$.  
 Here, we show that this corrected price takes on a rather simple, generic form in the two 
  parameters: relative time to maturity and moneyness.
  This representation then leads to a simple representation for the implied volatility,
  as we show below. 
The long-range character of the volatility fluctuations indeed has a strong impact 
on the form of the implied volatility, and  this observation is important in a calibration context.

   We denote the time to maturity by $\tau=T-t$, and we introduce the 
 characteristic  diffusion time $\bar\tau = 2/\overline{\sigma}^2$ and
  the dimensionless effective skewness factor
\begin{equation}\label{eq:aF}
 a_F =    \eps^{1-H}  \frac{\rho  \widetilde{\sigma} \overline{D} 
  \bar\tau^{H} }{ 2^{3/2} \overline{\sigma}   } =
 \eps^{1-H} \frac{ \widetilde{\sigma} 
 \rho 
  \left<FF'\right> \bar\tau^{H} }{ 2^{3/2} \overline{\sigma} \Gamma(H+{3}/{2})}  ,
\end{equation}
 with  $\overline{\sigma},$  $\widetilde{\sigma}$   
 and $\overline{D}$ given in  
Proposition \ref{prop:main} and  the correlation $\rho$ introduced in Eq. (\ref{eq:corr}).

\begin{lemma}
\label{lem:1N}
    The price correction in  Eq.  (\ref{def:Qt}),   normalized by the strike $K$, can 
be written in the form
 \begin{eqnarray}    \nonumber 
   &&   
  \frac{1}{K}   \left(\phi_t^\eps \big( x^2 \partial_x^2 \big) Q_t^{(0)}(x)
+\eps^{1-H} \rho \widetilde{\sigma} Q_t^{(1)}(x)   \right)   \\ &&  =   
    \left(  \frac{    e^{-d_1^2/2} \frac{x}{K}   }{  \sqrt{ \pi }  } \right) 
\left\{
\frac{\phi_t^\eps}{{2}} \left( \frac{\tau}{\bar\tau} \right)^{-1/2}  
    +
   a_F   \left[
     \left( \frac{\tau}{\bar\tau}\right)^{H}    +      \left( \frac{\tau}{\bar\tau}\right)^{H-1}    {\log \Big(\frac{K}{x} \Big)}  \right]   
    \right\} ,  \label{eq:pricecorr3}  
 \end{eqnarray}
 with
 \begin{eqnarray}
  d_1 =  \sqrt{ \frac{  \bar{\tau} }{ 2 \tau } } \Big[ \frac{\tau}{\bar\tau}  - \log \Big(\frac{K}{x}\Big)\Big] .
 \end{eqnarray}
 \end{lemma}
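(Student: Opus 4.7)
The proof is essentially a direct computation using the explicit form of the Black-Scholes gamma, together with the dimensional change of variables $\tau/\bar\tau$ and $\log(K/x)$. My plan is as follows.

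First, since $Q_t^{(0)}(x) = C_{\rm BS}(t,x;K,T;\overline{\sigma})$, I would recall the explicit Black-Scholes formula at zero rate, $C_{\rm BS} = x N(d_1) - K N(d_2)$ with $d_1 = (\log(x/K)+\overline{\sigma}^2\tau/2)/(\overline{\sigma}\sqrt{\tau})$, and compute the gamma
\begin{equation*}
\partial_x^2 Q_t^{(0)}(x) = \frac{e^{-d_1^2/2}}{x\overline{\sigma}\sqrt{2\pi\tau}},
\qquad
x^2\partial_x^2 Q_t^{(0)}(x) = \frac{x e^{-d_1^2/2}}{\overline{\sigma}\sqrt{2\pi\tau}}.
\end{equation*}
Using $\bar\tau = 2/\overline{\sigma}^2$ one has $\overline{\sigma}\sqrt{2\tau} = 2\sqrt{\tau/\bar\tau}$, which converts the prefactor to $\tfrac{1}{2\sqrt{\pi}}(\tau/\bar\tau)^{-1/2}$. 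Dividing the first corrector $\phi_t^\eps\,x^2\partial_x^2 Q_t^{(0)}(x)$ by $K$ then yields the first term of the right-hand side of \eqref{eq:pricecorr3}. Checking that the expression for $d_1$ rewrites as the stated $(\tau/\bar\tau - \log(K/x))/\sqrt{2\tau/\bar\tau}$ is an immediate algebraic manipulation.

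Next I would compute $x\partial_x(x^2\partial_x^2 Q_t^{(0)})$. Differentiating $\tfrac{x e^{-d_1^2/2}}{\overline{\sigma}\sqrt{2\pi\tau}}$ and using $\partial_x d_1 = 1/(x\overline{\sigma}\sqrt{\tau})$ gives
\begin{equation*}
x\partial_x\bigl(x^2\partial_x^2 Q_t^{(0)}(x)\bigr)
= \frac{x e^{-d_1^2/2}}{\overline{\sigma}\sqrt{2\pi\tau}}\Bigl(1-\frac{d_1}{\overline{\sigma}\sqrt{\tau}}\Bigr),
\end{equation*}
and the explicit form of $d_1$ shows that $1-d_1/(\overline{\sigma}\sqrt{\tau}) = \tfrac{1}{2}+\log(K/x)/(\overline{\sigma}^2\tau) = \tfrac{1}{2}+\log(K/x)/(2\tau/\bar\tau)$. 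Hence
\begin{equation*}
x\partial_x\bigl(x^2\partial_x^2 Q_t^{(0)}(x)\bigr)
= \frac{x e^{-d_1^2/2}}{\sqrt{\pi}}\Bigl[\tfrac{1}{4}(\tau/\bar\tau)^{-1/2}+\tfrac{1}{4}(\tau/\bar\tau)^{-3/2}\log(K/x)\Bigr].
\end{equation*}

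Finally I would substitute this into $Q_t^{(1)}(x) = x\partial_x(x^2\partial_x^2 Q_t^{(0)})\,D_t$ with $D_t = \overline{D}\bar\tau^{H+1/2}(\tau/\bar\tau)^{H+1/2}$ and divide by $K$. Using the identity $\overline{\sigma}\sqrt{\bar\tau} = \sqrt{2}$ one obtains $\overline{D}\bar\tau^{H+1/2} = 2^{3/2}\overline{\sigma}^{-1}\overline{D}\bar\tau^{H}\cdot \sqrt{2}/\sqrt{2}$, so that from the definition \eqref{eq:aF} the factor $\eps^{1-H}\rho\widetilde{\sigma}\overline{D}\bar\tau^{H+1/2}$ collapses to $4 a_F$. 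The brackets $[\tfrac{1}{4}(\tau/\bar\tau)^{H}+\tfrac{1}{4}(\tau/\bar\tau)^{H-1}\log(K/x)]$ then combine into $a_F[(\tau/\bar\tau)^H + (\tau/\bar\tau)^{H-1}\log(K/x)]$, yielding the second contribution in \eqref{eq:pricecorr3}. Adding the two pieces recovers the announced formula.

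There is no real obstacle in this proof; the only delicate point is bookkeeping of the $\sqrt{2}$ factors coming from the interplay of $\bar\tau = 2/\overline{\sigma}^2$ with the gamma prefactor $1/\sqrt{2\pi\tau}$ and the coefficient $2^{3/2}$ in the definition of $a_F$. I would therefore make the change of variable $\tau/\bar\tau$ early and systematically, so that every power of $\overline{\sigma}$ disappears before the Gaussian factor $e^{-d_1^2/2}(x/K)/\sqrt{\pi}$ is pulled out.
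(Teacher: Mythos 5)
Your proposal is correct and follows essentially the same route as the paper: a direct computation with the explicit Black--Scholes greeks and the substitution $\bar\tau=2/\overline{\sigma}^2$ (the paper channels the gamma and its logarithmic derivative through the vega via the identities $\partial_\sigma C_{\rm BS}=(T-t)\overline{\sigma}\,x^2\partial_x^2 C_{\rm BS}$ and $x\partial_x\partial_\sigma C_{\rm BS}=\big(\tfrac12+\log(K/x)/(\overline{\sigma}^2(T-t))\big)\partial_\sigma C_{\rm BS}$, while you differentiate the gamma directly, which is the same calculation). The only blemish is the intermediate identity $\overline{D}\bar\tau^{H+1/2}=2^{3/2}\overline{\sigma}^{-1}\overline{D}\bar\tau^{H}\cdot\sqrt{2}/\sqrt{2}$, whose right-hand side is off by a factor of $2$ as written (the correct factor is $\sqrt{2}\,\overline{\sigma}^{-1}$), although the conclusion you draw from it, $\eps^{1-H}\rho\widetilde{\sigma}\overline{D}\bar\tau^{H+1/2}=4a_F$, is correct.
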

Here,  the  dimensionless random and deterministic 
correction coefficients 
 are small,
 \begin{eqnarray}
   \phi_t^\eps  =O\left( \left(  \frac{ \eps }{ \bar\tau } \right)^{1-H} \left( \frac{   \tau }{ \bar\tau } \right)^{H} \right)  
     , \quad \quad a_F = O\left(  \frac{ \eps }{ \bar\tau } \right)^{1-H}  ,
 \end{eqnarray}
 where  we used the fact that  $\phi_t^\eps$ as defined in  Proposition \ref{prop:main} 
 is centered and with standard deviation
 \begin{equation}\label{eq:phisd}
{\rm Var} \big( {\phi_t^\eps}  \big)^{1/2} 
=   
 \left(  \frac{ \eps }{ \bar\tau } \right)^{1-H} \left( \frac{   \tau }{ \bar\tau } \right)^{H} 
 \left(  \bar\tau  {\sigma_\phi } \right)   + o(\eps^{1-H}),
\end{equation}
with  $\sigma_\phi$ defined by Eq.~(\ref{def:sigmaphi}) (see also Eq.~(\ref{eq:stdev}) in Lemma \ref{lem:3}).
We comment in more detail about the 
statistical structure of $\phi_t^\eps$ in the next section. 

It follows from the above that the  normalized price correction  depends on the two parameters - the moneyness  $K/x$ and
the relative time to maturity $\tau/\bar\tau$ -  and exhibits a term structure in  fractional powers of 
relative time to maturity. 

In Figure \ref{fig1_H06} we show the relative price correction 
in Eq.~(\ref{eq:pricecorr3}) as a function of relative time to maturity $\tau/\bar\tau$   for three values of the moneyness $K/x$.
The solid lines plot the mean relative price correction, and the dashed lines give the mean plus/minus
one standard deviation.  We use here $H=0.6$, $a_F=0.1$, and $\left((\eps/\bar\tau)^{(1-H)} \bar\tau \sigma_\phi\right)=0.04$. 
The mean relative price correction is largest for a mid-range of times to maturity.  For very short times to maturity relative to the 
characteristic diffusion time, the effect of the volatility  fluctuations  are small, while for long times the rapid mean reversion
``averages'' out the effect of the fluctuations.  
Note,  however,  that at-the-money  the random component 
of the price correction decays slowly as
\[
   \left(    \frac{\tau}{\bar{\tau}}    \right)^{H-1/2} , 
\]  
as $\tau \to 0$ while  ``around-the-money'' with the moneyness $K/x$ being different from one, the decay has the form
\[
   \left(  {\frac{\tau}{\bar{\tau}}}   \right)^{H-1/2}  \exp \Big( -  \frac{\bar{\tau} |\log(K/x)|^2}{4\tau}  \Big) .
\] 
This reflects the fact that the vega is diverging in this limit so that
the sensitivity to volatility fluctuations becomes strong. We remark that this
would affect calibration schemes using at-the-money data.
Moreover,  results regarding short time asymptotics 
for the coherent implied volatility become questionable in this context as the dominating 
contribution comes from the random component of the price correction.
Note also that the parameters are not calibrated to market data; this will be considered in another publication.

\begin{figure}
\begin{center}
\begin{tabular}{c}
\includegraphics[width=8.4cm]{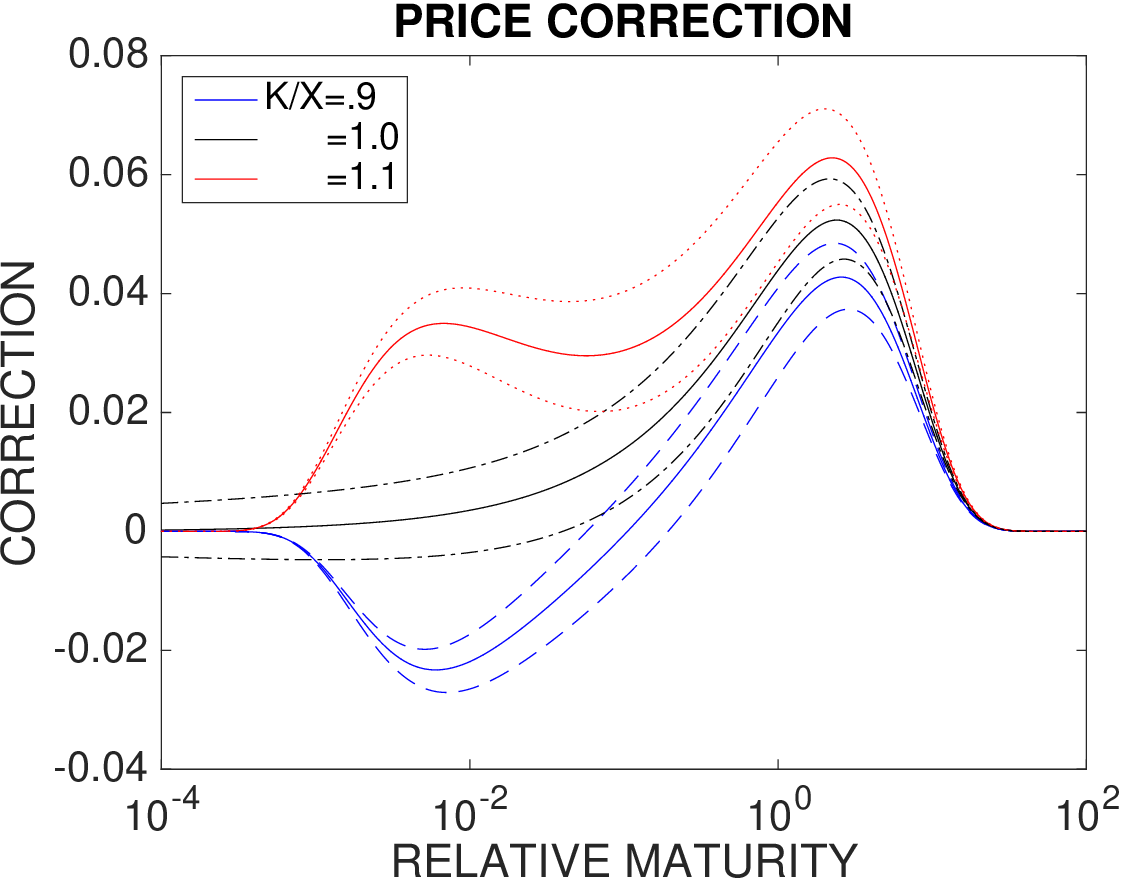}
\end{tabular}
\end{center}
\caption{Price correction as a function of the relative time to maturity $\tau/\bar\tau$.  The  three solid lines correspond (from bottom  to top)
to the mean price correction for $K/X=0.9$, $1.0$, and $1.1$, respectively.   The dashed/dotted  lines correspond to the mean $\pm 1$ standard deviation. 
Here $H=0.6, a_F=0.1$, and $\left((\eps/\bar\tau)^{(1-H)} \bar\tau \sigma_\phi\right)=0.04$. 
\label{fig1_H06}
}
\end{figure}

In Figure \ref{fig_surff}, we show the price  correction surface as a function
of the relative time to maturity  $\tau/\bar\tau$   and the moneyness $K/x$. 
The figure shows that the 
price correction is  large when the time to maturity is of the order of the characteristic diffusion time.

\begin{figure}
\begin{center}
\begin{tabular}{c}
\includegraphics[width=8.4cm]{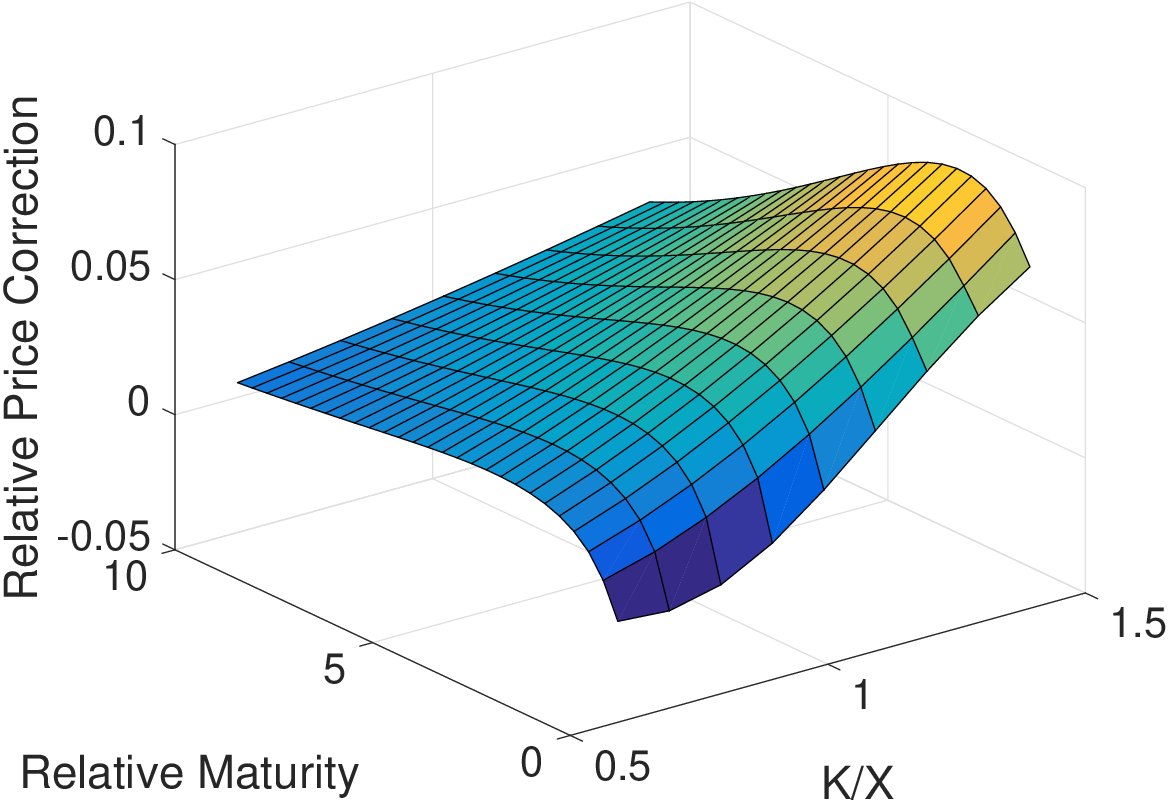}
\end{tabular}
\end{center}
\caption{The price  correction surface as a function of the relative time to maturity $\tau/\bar\tau$ and the moneyness $K/X$.
The parameters are like those in Figure \ref{fig1_H06}.
\label{fig_surff}
}
\end{figure}
 
We next present the proof of Lemma \ref{lem:1N}. 
 \begin{proof}
 \label{lem:price}
  For the  European call option with payoff $h(x)=(x-K)_+$, we have 
\begin{eqnarray*}
\nonumber
C_{\rm BS}(t,x;K,T;\sigma)  &=& 
x \Phi
\left(\frac{1}{ {\sigma}\sqrt{T-t}}  \log \left( \frac{x}{K} \right)+ \frac{ {\sigma} \sqrt{T-t} }{2}  \right)
\\
&&- 
K \Phi
\left(\frac{1}{ {\sigma}\sqrt{T-t}}   \log \left(\frac{x}{K}\right)- \frac{ {\sigma} \sqrt{T-t} }{2} \right),
\end{eqnarray*}
where $\Phi$ is the cumulative distribution function  of the standard normal distribution.
We then have, in particular, the ``Greek'' relationships for the call price
\begin{eqnarray*}
 {\partial_\sigma} C_{\rm BS}  = (T-t) \overline\sigma x^2 \partial_x^2  C_{\rm BS}, \quad \quad
 x\partial_x {\partial_{\sigma}} C_{\rm BS}  =  
 \left(
     \frac{1}{2}   + \frac{\log \frac{K}{x}}{\overline\sigma^2(T-t)}  \right) \partial_\sigma C_{\rm BS}  .
\end{eqnarray*}
We then get
\begin{eqnarray}\label{eq:modes}
 x^2 \partial_x^2  Q_t^{(0)}(x)
    &=&          \frac{1}{\overline\sigma (T-t) }  {\partial_{\bar\sigma}} C_{\rm BS}(t,x;K,T;\overline\sigma) ,    \\
   x \partial_x    x^2 \partial_x^2  Q_t^{(0)}(x)  & = & 
   \left[
     \frac{1}{2\overline\sigma(T-t)}   + \frac{\log \frac{K}{x} }{\overline\sigma^3(T-t)^2}  \right]  {\partial_{\bar\sigma}}
      C_{\rm BS}(t,x;K,T;\overline\sigma) ,
 \end{eqnarray}
 where the ``vega''  is given by
 \begin{eqnarray}\label{eq:vega}
    {\partial_{\sigma}} C_{\rm BS}(t,x;K,T;\overline\sigma)  =
           \frac{x e^{-d_1^2/2} \sqrt{T-t} }{\sqrt{2\pi}}, \quad  \quad
            d_1 = 
             \frac{  \frac{1}{2} \sigma^2 (T-t)  - \log \frac{K}{x} }{ \sigma \sqrt{T-t} }   .
 \end{eqnarray}
 Then, with
 $Q_t^{(1)}(x)$ given  in Eq.  (\ref{def:Q1t}),
 we can identify the form of  the  price correction  as
 \begin{eqnarray}
 \nonumber   
   &&   \hspace*{-0.15in}
     \phi_t^\eps \big( x^2 \partial_x^2 \big) Q_t^{(0)}(x)
+\eps^{1-H} \rho \widetilde{\sigma} Q_t^{(1)}(x)    \\ 
&&     \hspace*{-0.15in}=   \nonumber 
    \phi_t^\eps \big( x^2 \partial_x^2 \big) Q_t^{(0)}(x)
+\eps^{1-H} \rho \widetilde{\sigma}  D(t) x \partial_x    x^2 \partial_x^2  Q_t^{(0)}(x)    \\ 
&&   \hspace*{-0.15in} =
\phi_t^\eps     
\left( \frac{x e^{-d_1^2/2} }{ \overline{\sigma}  \sqrt{2\pi(T-t)}} \right)
+\eps^{1-H}  \left(  \frac{x \rho \widetilde{\sigma}   \overline{D}   e^{-d_1^2/2}  }{  \sqrt{2\pi }  } \right) 
   \left[
     \frac{ ({T-t})^{H}  }{2\overline\sigma }   + \frac{\log \frac{K}{x} }{\overline\sigma^3(T-t)^{1-H}}  \right]   ,
     \label{pc} 
 \end{eqnarray}
 which in turn gives (\ref{eq:pricecorr3}).
\end{proof}

 We next consider the implied volatility associated with the price correction.  
For the stochastic volatility model in Eq. (\ref{def:stochmodel0}),
we  want to identify the implied volatility  $I_t$ so that in terms of the corrected price
in Lemma \ref{prop:main}, we have 
\begin{eqnarray}\label{eq:cal}
 C_{\rm BS}(t,x;K,T;I_t)  =
Q_t^{(0)}(x )+ \phi_t^\eps \big( x^2 \partial_x^2 \big) Q_t^{(0)}(x)
+\eps^{1-H} \rho \widetilde{\sigma} Q_t^{(1)}(x).
\end{eqnarray}

We define the relative implied volatility correction $\delta I_t$ by 
\begin{equation}\label{def:di}
  I_t=\overline{\sigma}(1  + \delta I_t) .
\end{equation}

\begin{lemma}
\label{lem:2N}
The relative implied volatility correction has the form 
 \begin{eqnarray}  \label{eq:implied2} 
\delta I_t  & =&   
 \frac{\phi_t^\eps}{2}   \left( \frac{\tau}{\bar\tau} \right)^{-1} 
 +a_F 
\Big[  \left( \frac{\tau}{\bar\tau} \right)^{H-1/2} +  \left( \frac{\tau}{\bar\tau} \right)^{H-3/2} {\log\Big(\frac{K}{X_t}\Big)} \Big]
    +    o(\eps^{1-H}) ,
\end{eqnarray}
where  $\phi_t^\eps$ is defined  by (\ref{def:phit}) and $a_F$ by (\ref{eq:aF}). 
\end{lemma}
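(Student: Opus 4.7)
The plan is to invert the calibration equation (\ref{eq:cal}) by Taylor expanding the Black-Scholes price in its volatility argument around $\overline{\sigma}$. Since $Q_t^{(0)}(x)=C_{\rm BS}(t,x;K,T;\overline\sigma)$ and $I_t=\overline{\sigma}(1+\delta I_t)$, a first-order expansion yields
\begin{equation*}
C_{\rm BS}(t,x;K,T;I_t)-Q_t^{(0)}(x)=\overline{\sigma}\, \delta I_t \,\partial_\sigma C_{\rm BS}(t,x;K,T;\overline\sigma)+O\bigl((\delta I_t)^2\bigr),
\end{equation*}
so that matching with the right-hand side of (\ref{eq:cal}) gives
\begin{equation*}
\delta I_t = \frac{1}{\overline{\sigma}\,\partial_\sigma C_{\rm BS}(t,x;K,T;\overline\sigma)}\Bigl[\phi_t^\eps (x^2\partial_x^2)Q_t^{(0)}(x)+\eps^{1-H}\rho \widetilde{\sigma} Q_t^{(1)}(x)\Bigr]+O\bigl((\delta I_t)^2\bigr).
\end{equation*}

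Next I would substitute the closed-form expression for the bracketed quantity that was already computed in the proof of Lemma \ref{lem:1N} (equation (\ref{pc}) there), together with the Vega formula (\ref{eq:vega}). The key cancellations are that the common factor $x e^{-d_1^2/2}$ drops out, and the remaining scalar prefactor reduces using $\overline{\sigma}\sqrt{T-t}=\sqrt{2\tau/\bar\tau}$ to produce a multiplicative factor $(\tau/\bar\tau)^{-1/2}$. Distributing this factor through the three terms of (\ref{eq:pricecorr3}) gives exactly
\begin{equation*}
\delta I_t = \frac{\phi_t^\eps}{2}\Bigl(\frac{\tau}{\bar\tau}\Bigr)^{-1} + a_F\Bigl[\Bigl(\frac{\tau}{\bar\tau}\Bigr)^{H-1/2}+\Bigl(\frac{\tau}{\bar\tau}\Bigr)^{H-3/2}\log(K/X_t)\Bigr],
\end{equation*}
which is the claimed leading-order formula.

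The main obstacle is to control the $O((\delta I_t)^2)$ remainder and confirm it is $o(\eps^{1-H})$ in the appropriate sense. The deterministic part of $\delta I_t$ is of order $\eps^{1-H}$ by (\ref{eq:aF}), so its square contributes $O(\eps^{2(1-H)})=o(\eps^{1-H})$ whenever $H<1$. The stochastic part $\phi_t^\eps$ has standard deviation of order $\eps^{1-H}$ by (\ref{eq:phisd}), so $\EE[(\phi_t^\eps)^2]^{1/2}\cdot \delta I_t=o(\eps^{1-H})$ as well. One has to be a little careful near $\tau=0$ since the prefactors $(\tau/\bar\tau)^{-1}$ and $(\tau/\bar\tau)^{H-3/2}$ blow up, but for any fixed time to maturity bounded away from zero the Taylor expansion is uniformly valid and the error of the pricing approximation from Proposition \ref{prop:main}, propagated through division by the $O(1)$ normalized Vega, remains $o(\eps^{1-H})$, yielding (\ref{eq:implied2}).
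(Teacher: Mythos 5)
Your proposal is correct and follows essentially the same route as the paper: the paper's proof also inverts the calibration relation by dividing the price correction (\ref{pc}) by the Vega (\ref{eq:vega}) — i.e.\ a first-order expansion of $C_{\rm BS}$ in its volatility argument around $\overline{\sigma}$ — and then rewrites the result in terms of $\tau/\bar\tau$ and $a_F$. Your explicit treatment of the $O((\delta I_t)^2)$ remainder is a welcome addition that the paper leaves implicit.
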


In Figure \ref{fig2_H06}, we show the implied volatility correction 
in Eq.~(\ref{eq:implied2}) as a function of relative time to maturity $\tau/\bar\tau$   for three values of the moneyness $K/x$.
We  again used $H=0.6, a_F=0.1$ and $\left((\eps/\bar\tau)^{(1-H)} \bar\tau \sigma_\phi\right)=0.04$. 
Note that due to the form of the  ``vega''  (which is the sensitivity of the price to the volatility), 
 the form of the implied volatility surface is very different
from that of the price correction.   
\begin{figure}
\begin{center}
\begin{tabular}{c}
\includegraphics[width=8.4cm]{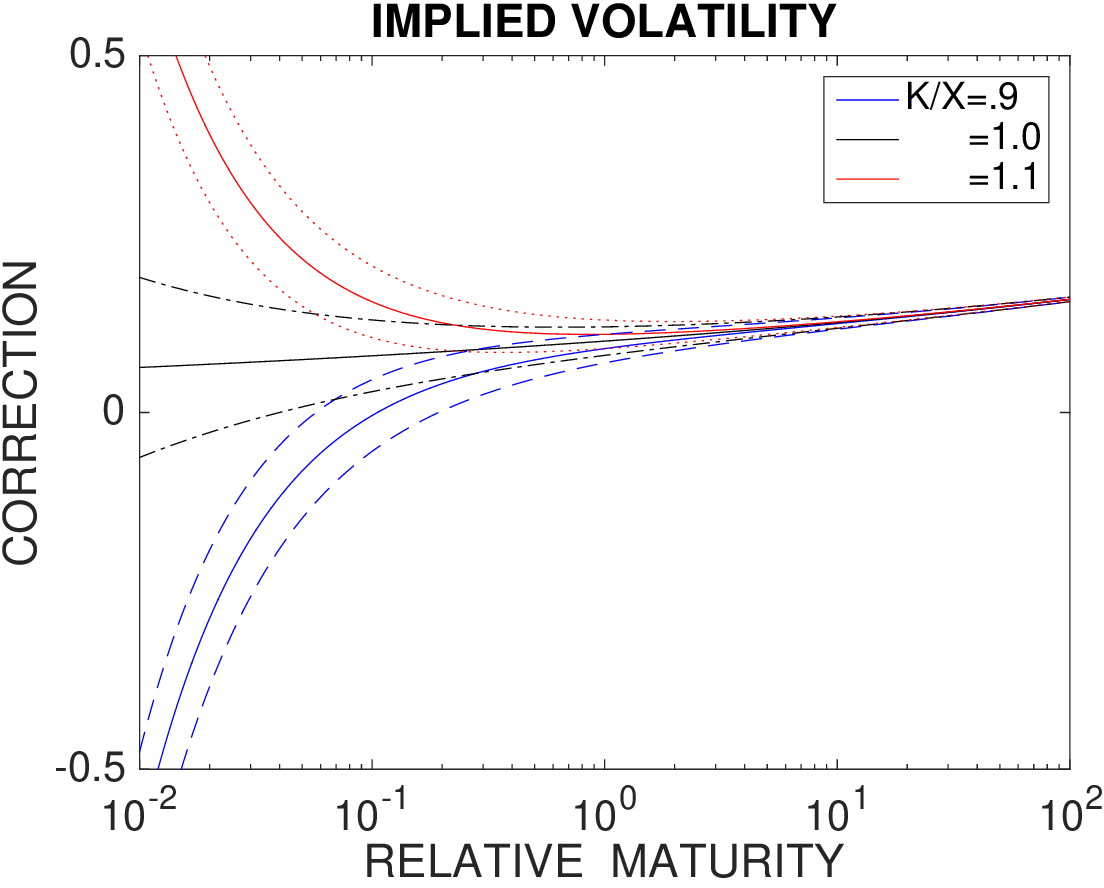}
\end{tabular}
\end{center}
\caption{The implied volatility correction as a function of the relative time to maturity  $\tau/\bar\tau$.   
The  three solid lines correspond (from bottom  to top) to the mean  implied volatility correction
for $K/X=0.9$, $1.0$, and $1.1$, respectively.   The dashed/dotted  lines correspond to the mean  
$\pm 1$ standard deviation. 
\label{fig2_H06}
}
\end{figure}
In Figure \ref{fig_surffi}, we show the implied volatility correction surface as a function
of the relative time to maturity  $\tau/\bar\tau$   and the moneyness $K/x$. 
  \begin{figure}
\begin{center}
\begin{tabular}{c}
\includegraphics[width=8.4cm]{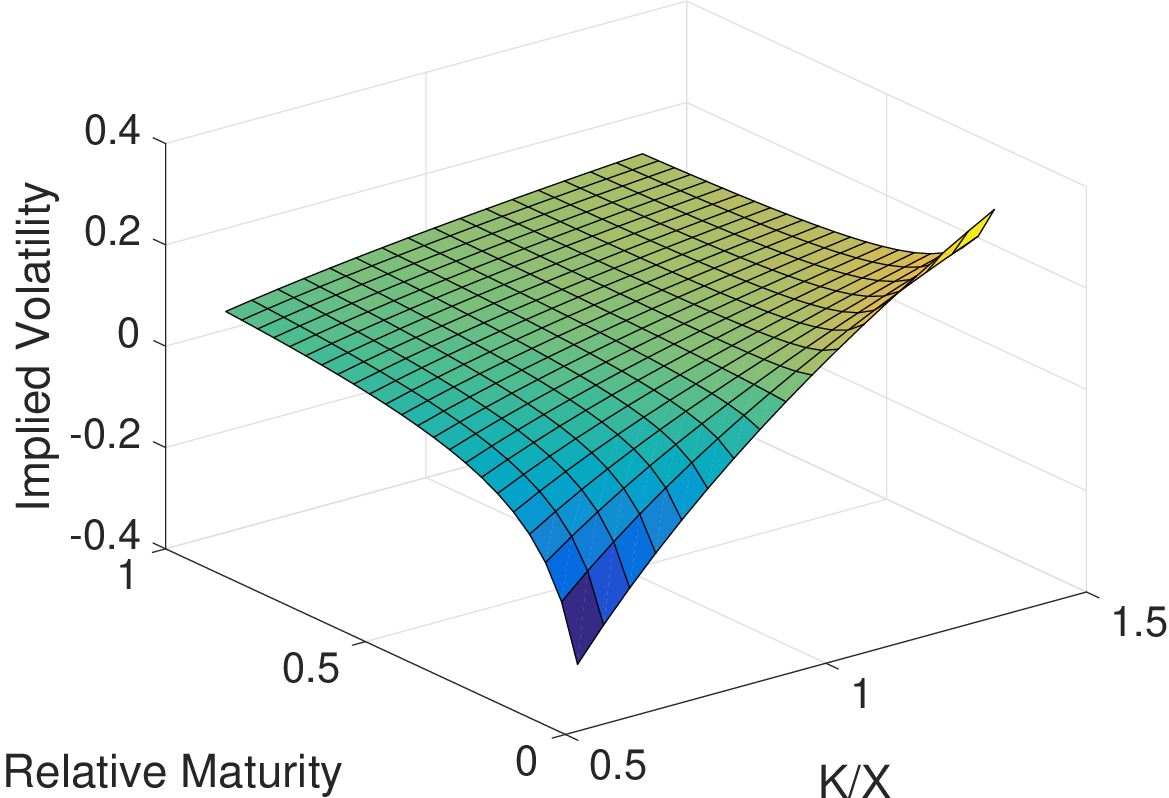}
\end{tabular}
\end{center}
\caption{  The mean implied volatility correction surface as a function of the relative time to maturity  $\tau/\bar\tau$ and the moneyness $K/X$.
The parameters are like those in Figure \ref{fig2_H06}.
\label{fig_surffi}
}
\end{figure}

\begin{proof} 
We   find  by using Eqs. (\ref{pc}) and (\ref{eq:vega})   that
the implied volatility is given by
\begin{equation}
I_t = \overline{\sigma} +     \frac{\phi_t^\eps}{\overline{\sigma}(T-t)}
+\eps^{1-H}  \widetilde{\sigma} \rho D_{t}\Big[ \frac{1}{2\overline{\sigma}(T-t)} + \frac{\log \frac{K}{X_t}}{\overline{\sigma}^3 (T-t)^2} \Big]
+o (\eps^{1-H})  .
\end{equation}
  Because $D_{t}$ is deterministic and given by (\ref{def:DtT}), we can then write
\begin{eqnarray} \label{eq:i3}
I_t & =&  \overline{\sigma} +  \frac{\phi_t^\eps}{\overline{\sigma}(T-t)}    \\ && \nonumber \hbox{}
+ \eps^{1-H} \frac{ \widetilde{\sigma}
 \rho  \left<FF'\right>}{ \overline{\sigma} \Gamma(H+\frac{3}{2})}
 \Big[ \frac{1}{2} (T-t)^{H-\frac{1}{2}}+ \frac{\log \frac{K}{X_t}}{\overline{\sigma}^2 (T-t)^{\frac{3}{2}-H}} \Big]
+o (\eps^{1-H})  
   ,
 \end{eqnarray}
and the Lemma follows.
\end{proof}

 The first two terms in Eq. (\ref{eq:i3}) can be combined and rewritten as (up to terms of order $o(\eps^{1-H})$)
\begin{equation}
\overline{\sigma} +  \frac{\phi_t^\eps}{\overline{\sigma}(T-t)} = \EE\Big[ \frac{1}{T-t} \int_t^T (\sigma_s^\eps)^2 ds \big| {\cal F}_t \Big]^{1/2} 
+o (\eps^{1-H}).
\end{equation}
Because $D_{t}$ is deterministic and given by (\ref{def:DtT}), we can then write
\begin{eqnarray} \nonumber
I_t & =&   \EE\Big[ \frac{1}{T-t} \int_t^T (\sigma_s^\eps)^2 ds \big| {\cal F}_t \Big]^{1/2} 
\\
&&+  \overline\sigma  a_F
\left[  \left( \frac{\tau}{\bar\tau} \right)^{H-1/2} +  \left( \frac{\tau}{\bar\tau} \right)^{H-3/2} {\log \left(\frac{K}{X_t}\right)} \right]
 +o (\eps^{1-H})  ,
\end{eqnarray}
so that the implied volatility is the  expected effective volatility over the remaining time horizon
conditioned on the present and with an added skewness correction.

In view of Eq. (\ref{eq:phisd}), when the  time to maturity is short,
the fourth term (in $\tau^{H-\frac{3}{2}}$) dominates in (\ref{eq:implied2}).
We remark here that this is related to the fact that the small parameter in our problem
is the mean-reversion time, so that for any time to maturity of order one in this regime
the volatility has enough time to fluctuate and mean revert, giving a price correction
  as in Lemma  \ref{lem:1N}.
Moreover, because the ``vega'',  ${\partial_\sigma} C_{\rm BS}$,
is small away from the money (see Eq. (\ref{eq:vega})),  we get
 a strong moneyness dependence, and the implied volatility blows up 
 when the time to maturity goes to zero.

When the time to maturity is long, the third term 
(in $\tau^{H-\frac{1}{2}}$) dominates in (\ref{eq:implied2}).
The long-range dependence gives smooth volatility fluctuations, 
which  gives an implied volatility that blows up when the
time to maturity goes to infinity. The current value of the underlying is less
important in this long-time-to-maturity regime.

\section{The t-T Process and the Stochastic Implied Surface}\label{sec:tT}

We introduced in Eq.~(\ref{def:phit}) the stochastic correction coefficient  
$ \phi_{t}^\eps  \equiv  \phi_{t,T}^\eps $,
 which  gives the random component of the price correction and the implied volatility.
  Note that  we explicitly display here the dependence on maturity $T$. 
If the volatility process had been a  Markovian process, then the correction 
 would have been deterministic,  as in \cite{fouque11}.  The presence of long-range memory  in the volatility
process means that information from the  past (volatility path)  must be carried forward, and
this makes the price correction relative to the price at the homogenized  volatility a stochastic 
process; this is also the case for the implied volatility. 
 
Here we discuss the statistical structure of the random field,  which describes  the implied volatility
surface in the scaling regime that we consider.  
The implied volatility is the central quantity in typical calibration processes.
To design efficient
estimators for both the coherent and incoherent parts  of the implied volatility, 
as well as to characterize
the resulting estimation precision, it is important to understand the statistical 
fluctuations of the observed implied surface.  
We give a precise characterization of these fluctuations below.
The fluctuations of the implied volatility for long times to maturity  (relative to $\bar\tau$)
become strong when the Hurst exponent is large,
because the large Hurst exponent gives strong temporal coherence
and large correction to the anticipated volatility.  
On the other hand, for short times to maturity,
the fluctuations become large when the Hurst exponent is small, because the small Hust exponent gives
a rough process with large fluctuations even over very small intervals.  
 It is also interesting to note that
the correlation structure of the implied volatility surface, in fact, encodes information about
the long-range character of the underlying stochastic volatility. 
Observing, for instance, at-the-money implied volatility fluctuations as a function of current time for fixed time
to maturity gives information that makes it possible to estimate the Hurst exponent
and to check for the consistency of the modeling  framework. 
In \cite{liv}, observed at-the-money implied 
volatility was used to estimate the Hurst exponent. The authors found a coefficient that was slightly
larger than the corresponding estimates using historical data and explained this discrepancy 
in terms of a smoothing effect due to the remaining time to maturity. 
To construct and interpret estimators of this kind,
a model for the implied surface as a random field   relating it  to the 
underlying volatility parameters is clearly essential.

In order to understand the implied volatility random field,
note first  that it follows from  Lemma \ref{lem:3} that as
 $\eps \to 0$, the random process 
 $\eps^{H-1} \phi_{t,T}^\eps  / [\sigma_{\phi}  (T-t)^{H}]  $, $t  <  T$,
converges in distribution (in the sense of finite-dimensional distributions) 
to a Gaussian stochastic  process $\psi_{t,T}$, $t  <  T$, {\it the normalized
t-T correction process},
with mean zero, variance one, and covariance 
$ \EE[\psi_{t,T} \psi_{t',T'}]=   {\cal C}_\phi(t,t';T,T') $ for any 
$t\in [0,T)$, $t ' \in [0,T')$.   
The four-parameter function  ${\cal C}_\phi$ is given by Eq.  (\ref{eq:sCdef}).
We will discuss next in more detail the  t-T process $ \psi_{t,T} $,
a   two-parameter process  of current time $t$ and maturity $T$.
 This process is defined on  $0<t<T$;
 it is a non-stationary Gaussian process, and it is scaled to have constant unit variance. 
As we see below, close to maturity $t\approx T$, the process is strongly affected
by the presence of the  maturity boundary.

Let us first consider the case  of a fixed maturity $T$ and introduce the process
\ba\label{eq:p}
\psi_0(t;T)  = \psi_{t,{T}}  ,\quad \quad t \in [0,T] .
\ea   
When the times are short relative to the time to maturity, i.e. for $|t-t'| \ll     {T}-t$,
   it follows from Eq.~(\ref{eq:sCdef}) that  the  process $(\psi_0(t;T))_{t \in [0,T]}$ 
    decorrelates as
 $$
\EE \big[\psi_0(t;T)\psi_0(t';T)\big]   \sim  1 - \frac{|t-t'|}{2(T-t)},
 $$
which means that it decorrelates as a Markovian process for short times.  
 More generally, the autocovariance function of $(\psi_0(t;T))_{t \in [0,T]}$ is
 \begin{eqnarray*}
 &&\EE \big[\psi_0(t;T)\psi_0(t';T)\big] ={\cal C}(\Delta_0(t,t';T)) , \\
&& {\cal C}(\Delta)  = \frac{\int_{0}^\infty du \big[ \big(u+\frac{|\Delta|+1}{\sqrt{1-\Delta^2}} \big)^{H-\frac{1}{2}}- u ^{H-\frac{1}{2}} \big] 
 \big[ \big( u+\frac{|\Delta|+1}{\sqrt{1-\Delta^2}} \big)^{H-\frac{1}{2}}- \big(u+   \frac{2 |\Delta|}{\sqrt{1-\Delta^2}}   \big)^{H-\frac{1}{2}} \big]}{\int_0^\infty du  \big[ (1+u)^{H-\frac{1}{2}}-u^{H-\frac{1}{2}} \big]^2}   ,
\end{eqnarray*}
with
 \ba
 \label{eq:d0def}
 \Delta_0(t,t';T) =  \frac{t'-t}{|2 {T}-(t+t')|}   ,
 \ea
 which shows that the correlation function of the process $(\psi_0(t;T))_{t \in [0,T]}$  depends only on this relative separation,
giving a situation with a canonical relative decorrelation that depends only on the times to maturity
  $\tau= {T}-t, \tau'= {T}-t'$.  
  Therefore, we introduce  the process $(\psi_1(\tau;T))_{\tau \in [0,T]}$  defined by
\ba
  \psi_1(\tau;T)  =\psi_{T-\tau,T}   ,\quad \quad  \tau \in [0,T] .
\ea
The process $(\psi_1(\tau;T))_{\tau \in [0,T]}$ is Gaussian with mean zero and autocovariance function
$$
 \EE \big[\psi_1(\tau;T)\psi_1(\tau';T)\big] ={\cal C}(\Delta_1(\tau, \tau')) ,
 $$
 with ${\cal C}$ as above and
 \ba
 \label{eq:ddef}
 \Delta_1(\tau, \tau') = \frac{\tau-\tau'}{|\tau+\tau'|}  .
\ea
 For $|\tau-\tau'| \ll \tau$, the process decorrelates on the time scale $\tau$ so that the process fluctuations 
  become more rapid close to maturity. 
Close to maturity, the price  fluctuations become small. When we magnify 
 them, however, we see fluctuations on small time scales when the 
 time to maturity is short,  which reflects  the self-similarity
 of the driving volatility factor. 
 In Figure \ref{fig_corr}, we show the correlation function $\Delta_1\mapsto {\cal C}(\Delta_1)$
 as a function of the relative separation time $\Delta_1\in [-1,1]$ and for $H=0.6$. 
The process decorrelates as a Markovian process for short times; indeed, as one of the
times to maturity goes to zero (relative to the other time to maturity),  the correlation goes rapidly to zero.    

\begin{figure}
\begin{center}
\begin{tabular}{c}
\includegraphics[width=8.4cm]{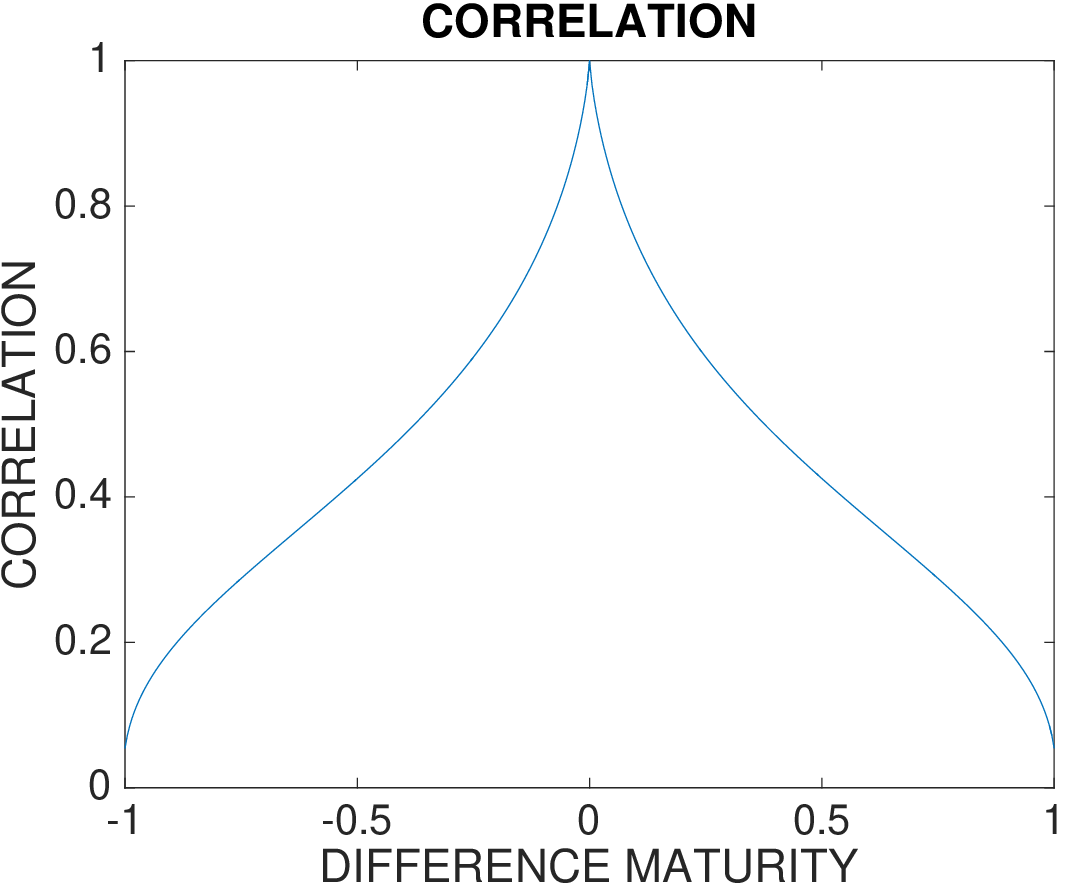}
\end{tabular}
\end{center}
\caption{Autocovariance function of the  t-T process  $\psi_1(\tau;1)$ as a function of the relative time to maturity separation 
   $\Delta_1= (\tau-\tau')/|\tau+\tau'|$ with $H=0.6$. The correlation decays approximately linearly at the origin and  rapidly
   as one of the times  to maturity goes to zero. 
 \label{fig_corr}
}
\end{figure}

Note that it follows from the expression (\ref{eq:ddef}) for $\Delta_1$ that it is {scale invariant}, in that 
$\Delta_1(a \tau, a \tau') = \Delta_1( \tau,  \tau')$ for $a>0$,  giving rapid fluctuations for short times to maturity. 
The process indeed has a self-similar property. We have in distribution
\ban
      \big( \psi_1(\tau;1)  \big)_{\tau \in [0,1]}  \sim     \big(   \psi_1(  \tau T ; T )    \big)_{\tau \in [0,1]}  ,
\ean 
for any $T>0$.
 In Figure \ref{fig_real1}, we show two realizations  of the process   $ \psi_1(\tau;1) $ as a function of time to maturity $\tau$.
 \begin{figure}
\begin{center}
\begin{tabular}{c}
\includegraphics[width=8.4cm]{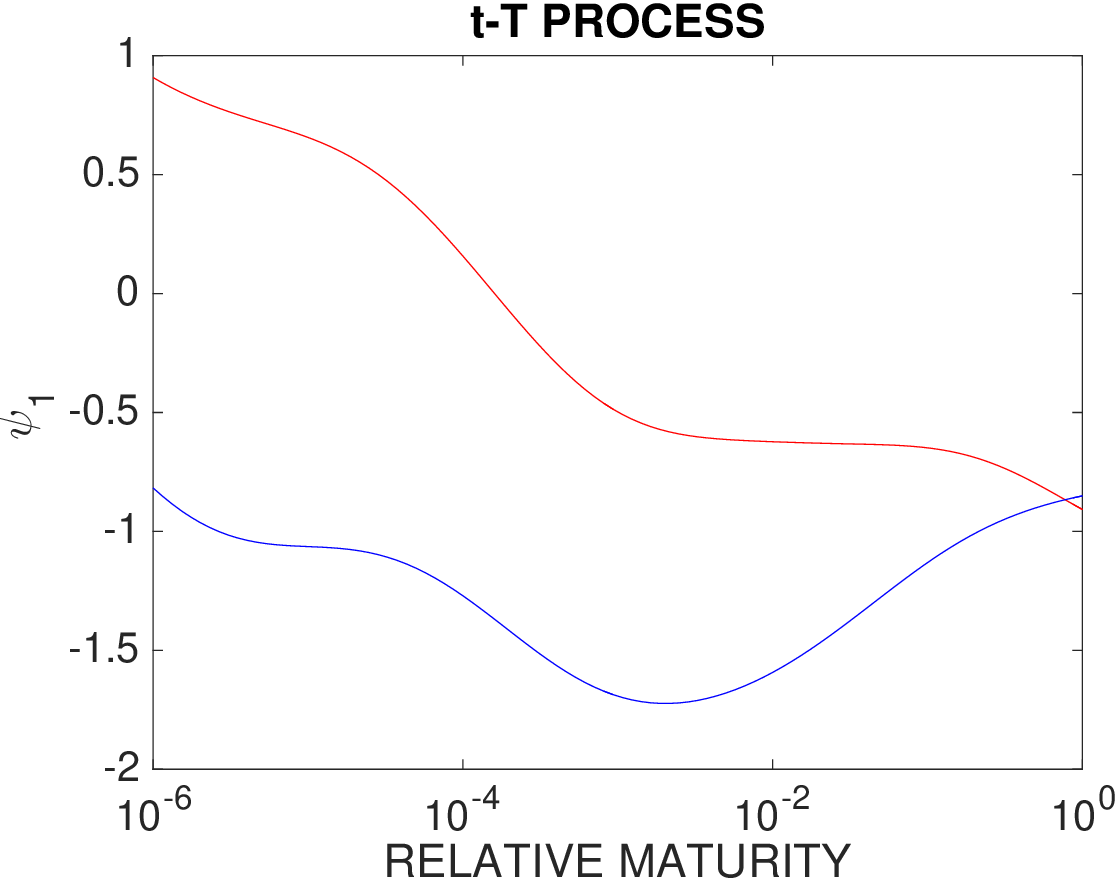}
\end{tabular}
\end{center}
\caption{Realizations of the process $\psi_1(\tau;1)$   as a function of the time to maturity $\tau$ for fixed maturity $T=1$ with $H=0.6$.
 \label{fig_real1}
}
\end{figure}

One can also investigate the structure of the t-T process for a fixed time to maturity $\tau$, as a function of time  $t$.
Thus, if we observe  the price for a given time to maturity, we would like to know how the price correction (and the 
implied volatility)  would  fluctuate with respect to  the current time, or time translation.
Accordingly, we consider the process
\ba
\psi_2(t;\tau)  = \psi_{t, {\tau+t}}  ,\quad \quad t \geq 0,
\ea   
for fixed $\tau>0$.  
The process $(\psi_2(t;\tau))_{t \in [ 0 ,\infty)}$ is Gaussian with mean zero and autocovariance function
\begin{eqnarray}
\label{eq:ssCdef5}
&&   \EE \big[\psi_2(t;\tau)\psi_2(t';\tau)\big] ={\cal C}_2( \Delta_2(t,t';\tau)) ,\\ 
&& {\cal C}_2( \Delta )= \frac{\int_{0}^\infty du \big[ (u+1 )^{H-\frac{1}{2}}- u ^{H-\frac{1}{2}} \big] 
 \big[ (u+1+|\Delta| )^{H-\frac{1}{2}}-(u+|\Delta| )^{H-\frac{1}{2}} \big]}{\int_0^\infty du 
  \big[ (1+u)^{H-\frac{1}{2}}-u^{H-\frac{1}{2}} \big]^2} ,
\nonumber
\end{eqnarray}
with
 \ba
 \Delta_2(t,t';\tau) =\frac{t'-t}{\tau}      .
 \ea    
 The expression of $\Delta_2$ shows that the  coherence time  of this  process  is proportional to the time to maturity $\tau$. 
We see again that the rescaled implied volatility surface fluctuations are more rapid when they are close to maturity.
 We also see that  on transects parallel to the maturity boundary 
 in the $t,T$ plane, these fluctuations are stationary. 
 This is consistent with the fact that we have  an underlying consistent model with a stationary volatility driving factor. 
   The fluctuations, moreover, have  a self-similar property. We have in distribution
\ban
      \big( \psi_2(t;1)  \big)_{t \in [0,\infty)}  \sim     \big(   \psi_2(  \tau t ; \tau )    \big)_{t \in [0,\infty)}  ,
\ean 
for any $\tau>0$.
The autocovariance function of $(\psi_2(t;1))_{t \in [0,\infty)} $ is plotted in Figure \ref{fig_corr2}.
 In the figure, one can see the rapid decay at the origin followed by a long-range behavior.  This shows how
 the implied  surface decorrelates as we move in time.  In Figure \ref{fig_corr3}, we show
 the autocorrelation function  in a {\it log-log} plot with the dashed line corresponding to the  
correlation decay $|t'-t|^{2H-2}$. 
In Figure \ref{fig_real2}, we show two realizations  of  the process $\psi_2(t;1)$.

 \begin{figure}
\begin{center}
\begin{tabular}{c}
\includegraphics[width=8.4cm]{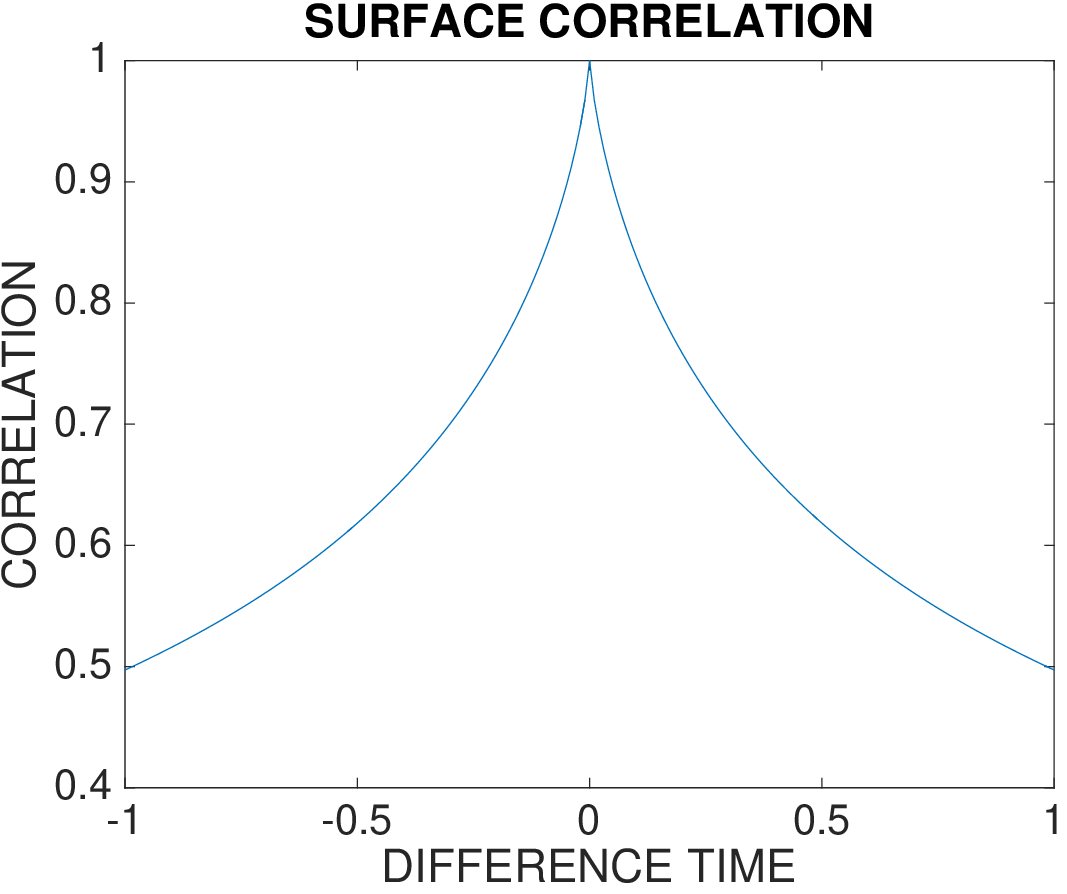}
\end{tabular}
\end{center}
\caption{ Autocovariance function of the  t-T process  $\psi_2(t;1)$ as a function of the time $t'-t$
for fixed time to maturity $\tau=1$ 
 with $H=0.6$.   On the short time scales, the process decorrelates as a Markovian process;  on the long 
 time scales, it exhibits long-range correlations.  
 \label{fig_corr2}
}
\end{figure}
\begin{figure}
\begin{center}
\begin{tabular}{c}
\includegraphics[width=8.4cm]{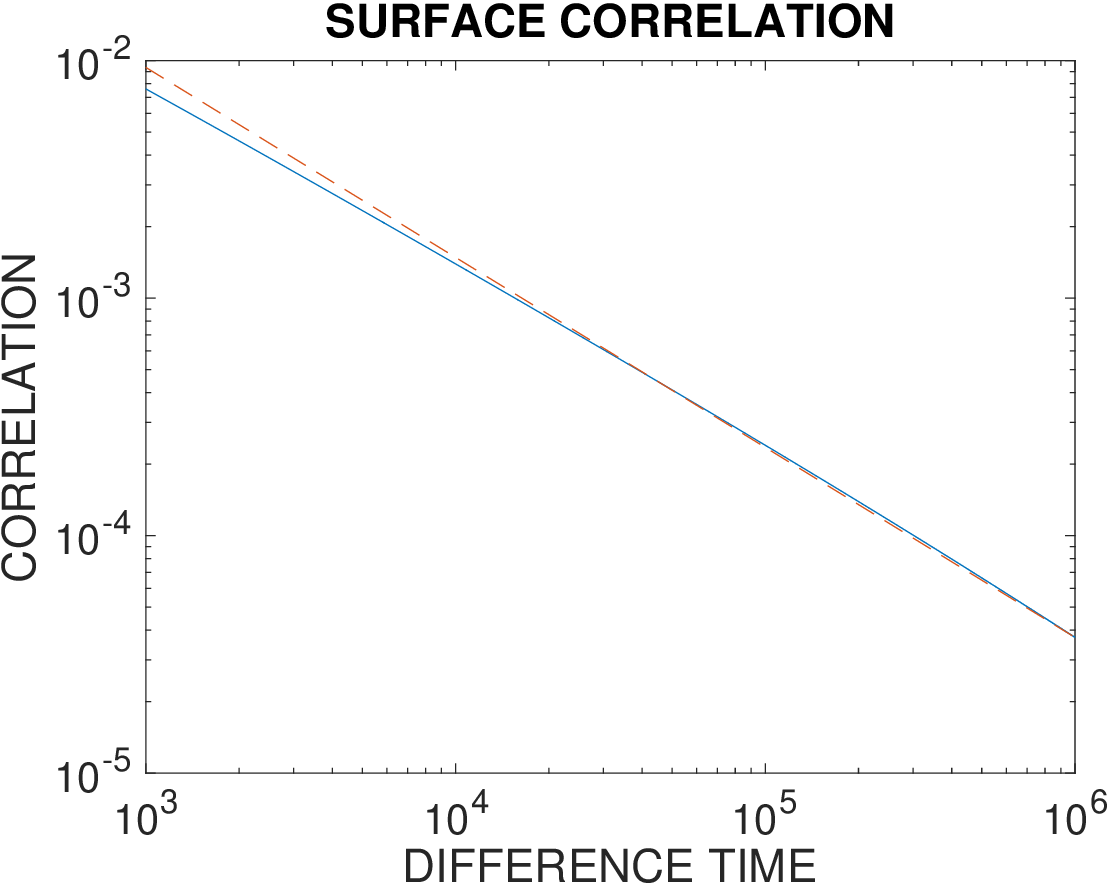}
\end{tabular}
\end{center}
\caption{ Autocovariance function of the  t-T process  $\psi_2(t;1)$  as in Figure \ref{fig_corr2}, 
but on a {\it log-log} scale with the dashed  line showing  the   decay $|t'-t|^{2H-2}$.
 \label{fig_corr3}
}
\end{figure}
  \begin{figure}
\begin{center}
\begin{tabular}{c}
\includegraphics[width=8.4cm]{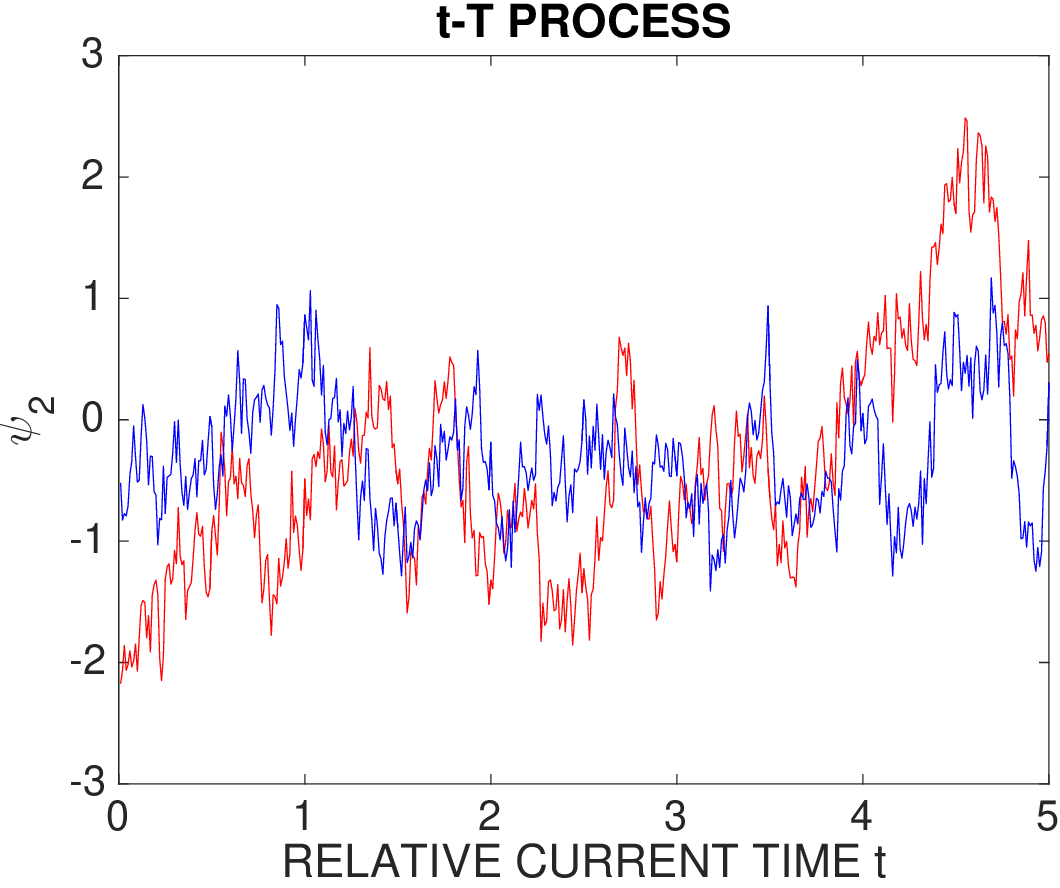}
\end{tabular}
\end{center}
\caption{Realizations of the process $\psi_2(t;1)$ with $H=0.6$.
 \label{fig_real2}
}
\end{figure}

Finally, it is of interest to consider the case where we evaluate the stochastic correction factor
as a function of time to maturity for the fixed current time $t$,
\ba
\psi_3(\tau ; t)=     \psi_{t,{t+\tau}} ,\quad \quad \tau \geq 0  .
\ea   
 The process $(\psi_3(\tau;t))_{\tau \in [0,\infty) }$ is Gaussian with mean zero and autocovariance function
\begin{eqnarray*}
&&   \EE \big[\psi_3(\tau;t)\psi_3(\tau';t)\big]  ={\cal C}_3( \Delta_3(\tau,\tau') ) , \\
&& {\cal C}_3( \Delta)=  
  \frac{\int_{0}^\infty du \big[ (u+1/\sqrt{1+|\Delta|})^{H-\frac{1}{2}}- u ^{H-\frac{1}{2}} \big] 
 \big[ (u+\sqrt{1+|\Delta|} )^{H-\frac{1}{2}}-u^{H-\frac{1}{2}} \big]}{\int_0^\infty du  \big[ (1+u)^{H-\frac{1}{2}}-u^{H-\frac{1}{2}} \big]^2} ,
  \end{eqnarray*}
 with
\ba
\label{eq:pp}
 \Delta_3(\tau,\tau') =    \frac{\tau-\tau'}{  \tau  \wedge \tau' }     .
 \ea    
This covariance function is plotted in Figure \ref{fig_corr4}. 
 Note that it follows from the expression (\ref{eq:pp}) for $\Delta_3$   that it is {scale invariant} in that 
$\Delta_3(a\tau, a\tau') = \Delta_3( \tau,  \tau')$ for $a>0$,  so that again the process
fluctuates more rapidly for small maturities.
The distribution of the process $(\psi_3(\tau;t))_{\tau\in [0,\infty)}$ does not depend on $t$, and 
it has a self-similar property. For any $a>0$, 
we have in distribution
\ban
     \big(  \psi_3(\tau;t)   \big)_{\tau \in [0,\infty)} \sim      \big(  \psi_3(a \tau;t)  \big)_{\tau \in [0,\infty)}     .
\ean 
In Figure \ref{fig_real3}  we show two  realizations of the process $(\psi_3(\tau;t))_{\tau \in [0,1)} $.  
 \begin{figure}
\begin{center}
\begin{tabular}{c}
\includegraphics[width=8.4cm]{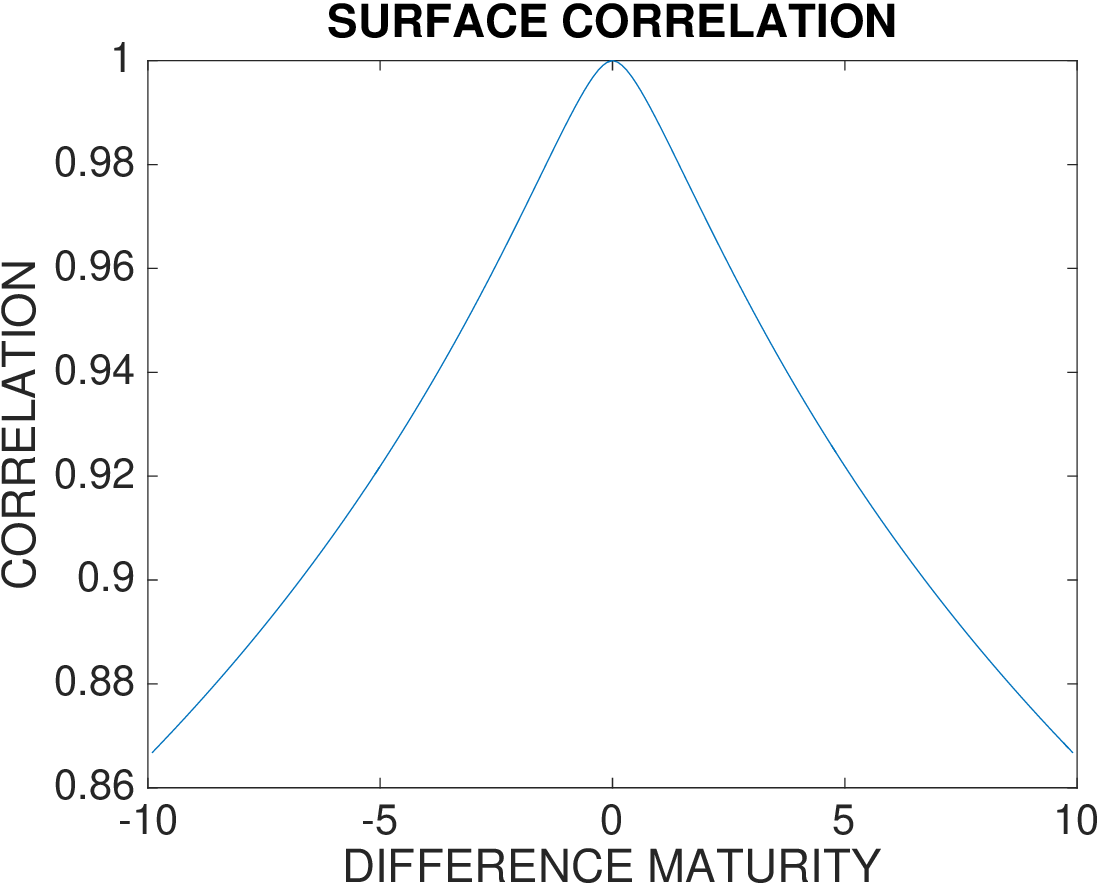}
\end{tabular}
\end{center}
\caption{  Autocovariance function of the  t-T process  $\psi_3(\tau;1)$  as a function of  the relative  time to maturity  separation
   $\Delta_3=(\tau-\tau')/(\tau\wedge\tau')$ with $H=0.6$. Note that the correlation function  exhibits slow decay.
    \label{fig_corr4}
}
\end{figure}
\begin{figure}
\begin{center}
\begin{tabular}{c}
\includegraphics[width=8.4cm]{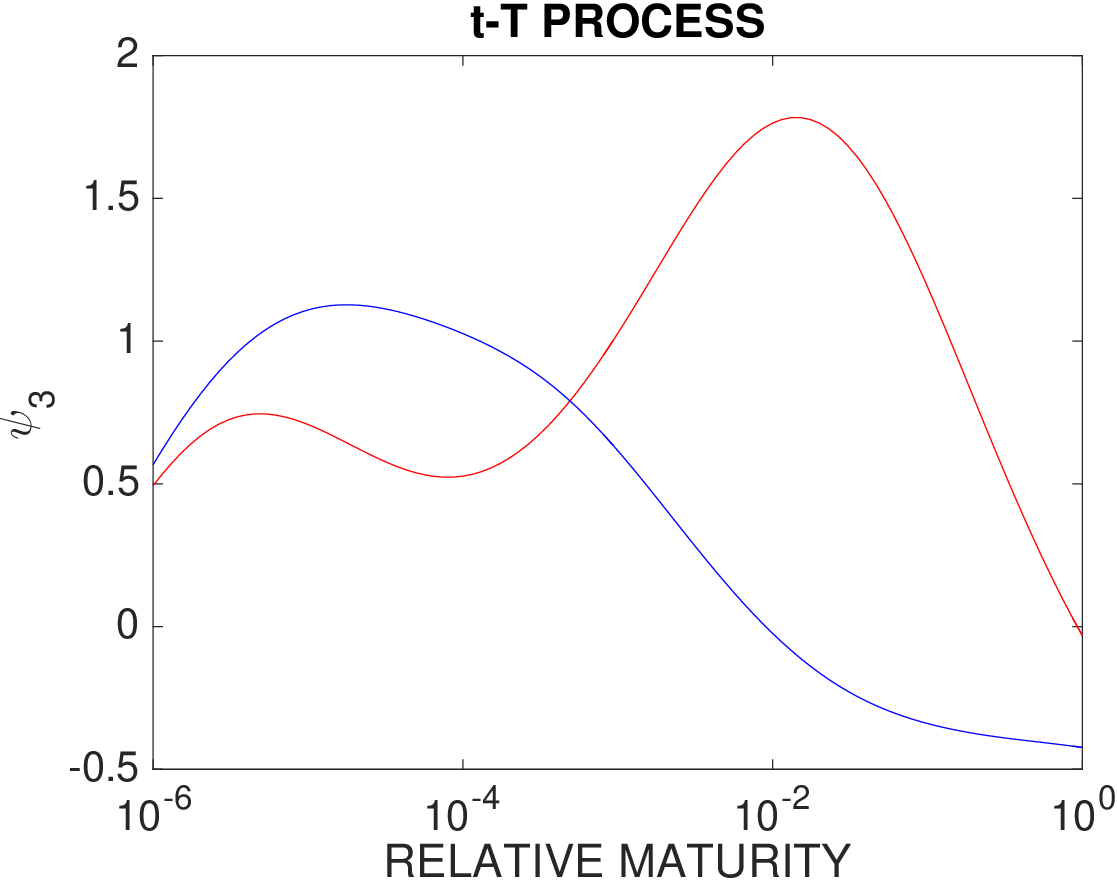}
\end{tabular}
\end{center}
\caption{    Realizations of the process $\psi_3(\tau;1)$  for fixed current time $t=1$ and $H=0.6$, 
with the smooth and slow decay of the correlations giving a smooth time-to-maturity dependence. 
 \label{fig_real3}
}
\end{figure}

\section{Conclusion}
\label{sec:conclusion}
We have considered a continuous time stochastic volatility model with long-range correlation properties.
We have addressed the regime of fast mean reversion. This makes it possible to derive
an explicit expression for the approximate European call option price and the implied volatility.
Specifically the volatility is a smooth function of a fractional Ornstein--Uhlenbeck 
process. 
Analysing such a non-Markovian situation is challenging. To the 
best of our knowledge, we present the first analytical expression for the 
price approximation for general maturities when the volatility fluctuations are of order one. So far
the price computations for such situations have been based on numerical 
approximations.  The main result from the applied view point is then the form
of the fractional term structure that we obtain for the implied volatility surface.
Indeed, we get an implied volatility that grows large with time to maturity while
generating a strong skew for short times to maturity, which is consistent with common observations.
We stress that in our formulation,  the mean-reversion time is small compared to any
fixed maturity as we consider a fast mean reverting process.  Let us note, finally, that we 
have considered the case of processes with long-range correlation properties with the Hurst exponent
$H>1/2$ explaining the  growth of implied volatility for large maturity.

\section*{Acknowledgements}
This paper is based upon research supported in part by Centre Cournot, Fondation Cournot, and Universit\'e Paris Saclay
(chaire D'Alembert).

\bigskip

\begin{appendices}

\section{Hermite Decomposition of the Stochastic Volatility Model}
\label{sec:hermite}
We denote
\begin{equation}
\label{def:Ftilde}
\widetilde{F}(z) = F(\sigma_{{\rm ou}} z)^2 .
\end{equation}
Because $\EE[\widetilde{F}(Z)^2]<\infty$ is finite when $Z$ is a standard normal variable,
the function $\widetilde{F}$ can be expanded in terms of the Hermite polynomials
\begin{equation}
H_k(z) = (-1)^k e^{z^2/2} \frac{d^k}{dz^k} e^{-z^2/2} ,
\end{equation}
and the series
\begin{equation}
\sum_{k=0}^\infty \frac{C_k}{k!} H_k(z) ,
\end{equation}
with
\begin{equation}
C_k = \EE \big[ H_k(Z) \widetilde{F}(Z) \big] =  \int_\RR H_k(z) \widetilde{F}(z) p(z) dz,
\label{def:Ck}
\end{equation}
converges in $L^2(\RR, p(z)dz)$ to $\widetilde{F}(z)$.
The Hermite polynomials satisfy
$$
\EE[H_k(Z) H_j(Z)] = \int_\RR H_k(z) H_j(z) p(z) dz = \delta_{kj} k!   ,
$$
and we have $\sum_{k=0}^\infty \frac{C_k^2}{k!} = \EE[\widetilde{F}(Z)^2]<\infty$.
Note that $C_0= \left< F^2 \right>$.

\begin{lemma}
\label{lem:a}
If there exists $\alpha >2$ such that the function $\widetilde{F}$ defined by (\ref{def:Ftilde}) satisfies
\begin{equation}
\label{hyp:H}
\sum_{k=0}^\infty \frac{\alpha^k C_k^2}{k!} < \infty ,
\end{equation}
then the random process
\begin{equation}
\label{def:Iepst}
I_t^\eps = \int_0^t F^2( {Z}^\eps_s) - \left< F^2 \right>ds
\end{equation}
satisfies
\begin{equation}
\label{eq:boundIepst}
\sup_{t \in [0,T]} \EE[ (I^\eps_t)^4 ] \leq K \eps^{4-4H} ,
\end{equation}
for some constant $K$.
\end{lemma}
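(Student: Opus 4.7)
The plan is to expand the integrand in the Hermite basis and reduce the fourth moment of $I_t^\eps$ to a diagrammatic computation of expectations of products of Hermite polynomials at correlated standard Gaussians. Writing $\zeta_s^\eps = Z_s^\eps/\sigma_{{\rm ou}}$, which is standard Gaussian at each time $s$ with stationary covariance ${\cal C}_Z((s-s')/\eps)$, the expansion from Appendix \ref{sec:hermite} gives
\[
Y_s := F^2(Z_s^\eps) - \langle F^2 \rangle = \sum_{k \geq 1} \frac{C_k}{k!} H_k(\zeta_s^\eps),
\]
so that $I_t^\eps$ decomposes into its projections on each Wiener chaos. The two ingredients I would use are the classical diagram formula
\[
\EE\Bigl[\prod_{i=1}^4 H_{k_i}(\zeta_{s_i}^\eps)\Bigr] = \sum_{(p_{ij})} \frac{\prod_i k_i!}{\prod_{i<j} p_{ij}!} \prod_{i<j} r_{ij}^{p_{ij}}, \qquad r_{ij} = {\cal C}_Z\!\Bigl(\frac{s_i-s_j}{\eps}\Bigr),
\]
(sum over symmetric matrices $p_{ij} \geq 0$ with $\sum_{j\neq i} p_{ij} = k_i$), and the long-memory scaling $\int_0^t\!\int_0^t |{\cal C}_Z((u-v)/\eps)|\,du\,dv \leq K t^{2H} \eps^{2-2H}$, which follows from ${\cal C}_Z(w) \sim |w|^{2H-2}$ at infinity.

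For the fourth moment I would write $\EE[(I_t^\eps)^4] = \int_{[0,t]^4} \EE[\prod_i Y_{s_i}]\,ds$ and decompose the four-point expectation by the Wick/cumulant identity for centered variables,
\[
\EE[Y_{s_1} Y_{s_2} Y_{s_3} Y_{s_4}] = \sum_{\text{3 pairings}} \EE[Y_a Y_b]\,\EE[Y_c Y_d] + \kappa_4(s_1,s_2,s_3,s_4).
\]
In the diagram language this corresponds to separating those diagrams whose support disconnects $\{1,2,3,4\}$ into two pairs (the Wick pairings) from those whose support is a connected multigraph on all four vertices (the four-point cumulant $\kappa_4$). Integrating yields $\EE[(I_t^\eps)^4] = 3\,(\EE[(I_t^\eps)^2])^2 + \int \kappa_4\,ds$; a preliminary (and strictly easier) second-moment calculation, driven by the $k=1$ Hermite mode, gives $\EE[(I_t^\eps)^2] = O(\eps^{2-2H})$, so the Wick part is already $O(\eps^{4-4H})$ and it remains to prove $\int |\kappa_4|\,ds = o(\eps^{4-4H})$.

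For each connected diagram with edge support $E$ and multiplicities $p_e \geq 1$ I would first use $|r_{ij}| \leq 1$ to bound $|r_{ij}|^{p_{ij}} \leq |r_{ij}|$ on every support edge, and then pick a spanning tree $T \subset E$ and drop the non-tree edges by the same inequality. The change of variables $u = s_1$, $w_e = (s_i-s_j)/\eps$ for the three tree edges $e = (i,j) \in T$ (Jacobian $\eps^3$) decouples the marginal integrals, and each one-dimensional $\int|{\cal C}_Z|\,dw$ truncated to scale $t/\eps$ contributes $(t/\eps)^{2H-1}$, giving
\[
\int_{[0,t]^4}\prod_{e\in T}|{\cal C}_Z((s_i-s_j)/\eps)|\,ds \leq K\,\eps^3 \cdot t \cdot (t/\eps)^{3(2H-1)} = K\, t^{6H-2}\,\eps^{6-6H},
\]
which is $o(\eps^{4-4H})$ since $H<1$ forces $6-6H > 4-4H$. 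Each connected diagram therefore contributes $o(\eps^{4-4H})$ to the integrated cumulant, multiplied by a combinatorial prefactor $\prod_i|C_{k_i}|/\prod_{i<j}p_{ij}!$.

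The main obstacle is to sum these prefactors over all connected diagrams, and this is precisely where hypothesis (\ref{eq:cond}) enters. The bound $|C_k| \leq K\sqrt{k!}\,\alpha^{-k/2}$ extracted from $\sum\alpha^k C_k^2/k!<\infty$, combined with the multinomial estimate $k_i! \leq d_i^{k_i}\prod_{j\neq i} p_{ij}!$ (where $d_i$ is the degree of vertex $i$ in the support), yields a per-diagram bound
\[
\frac{\prod_i|C_{k_i}|}{\prod_{i<j}p_{ij}!} \leq K^4\,\alpha^{-Q}\prod_i d_i^{k_i/2}, \qquad Q = \sum_{i<j} p_{ij},
\]
which reduces the combinatorial sum for each connected topology to a multiple geometric series in the edge multiplicities whose convergence is controlled by the per-vertex ratio $d_i^{1/2}/\alpha$. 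For the 4-cycle topology, in which every vertex has $d_i = 2$, convergence reduces precisely to $2/\alpha < 1$ — which is exactly the hypothesis $\alpha > 2$ and explains why it is stated this way. Handling the remaining connected topologies by the same mechanism and combining the per-diagram $O(\eps^{6-6H})$ integral bound with the resulting convergent prefactor sum yields $\int|\kappa_4|\,ds = O(\eps^{6-6H})$, which together with the Wick contribution gives the claimed bound $\EE[(I_t^\eps)^4] \leq K\eps^{4-4H}$.
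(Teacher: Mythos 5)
Your overall architecture --- Hermite/chaos expansion, the diagram formula, a Wick-versus-cumulant split, and a spanning-tree bound giving $\eps^{6-6H}$ for the connected part --- is a legitimate and in some respects sharper route than the paper's (the paper settles for $\eps^{4-4H}$ on every diagram by keeping only two covariance factors, and never needs the cumulant decomposition because it first applies Minkowski's inequality in $L^4$ to $I_t^\eps=\sum_m C_m I_{t,m}^\eps$ and then invokes Taqqu's moment formula chaos by chaos). The genuine gap is in your final combinatorial summation. Your per-diagram bound reads $\prod_i|C_{k_i}|/\prod_{i<j}p_{ij}!\leq K^4\prod_{i<j}\big(\sqrt{d_id_j}/\alpha\big)^{p_{ij}}$, so the multiple geometric series over edge multiplicities converges only if $\sqrt{d_id_j}<\alpha$ for \emph{every} edge of the support graph. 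You checked the $4$-cycle, where all degrees equal $2$ and the ratio is $2/\alpha$, and concluded that $\alpha>2$ is exactly what is needed; but the connected supports on four vertices also include the ``paw'' (triangle plus pendant edge), the diamond ($K_4$ minus an edge) and $K_4$ itself, which contain edges joining vertices of degrees $(2,3)$ or $(3,3)$ and hence carry ratios $\sqrt{6}/\alpha$ and $3/\alpha$. For $2<\alpha\le 3$ your series over those topologies diverges.

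This cannot be repaired by sharpening either side of the estimate. Take $C_k^2=k!\,\alpha^{-k}k^{-2}$, which satisfies (\ref{eq:cond}); for the $K_4$ diagrams with all $p_{ij}=p$ (so $k_i=3p$) the exact prefactor is $|C_{3p}|^4/(p!)^6\asymp \big((3p)!/(p!)^3\big)^2\alpha^{-6p}/\mathrm{poly}(p)\asymp(3/\alpha)^{6p}/\mathrm{poly}(p)$, which blows up term by term when $\alpha<3$; and the associated integral $\int_{[0,t]^4}\prod_{i<j}|{\cal C}_Z((s_i-s_j)/\eps)|^{p}\,ds$ is bounded below by a positive power of $\eps$ that does not decay in $p$, since $|{\cal C}_Z|$ is of order one near the origin. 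Hence the sum of the absolute values of the connected-diagram contributions is genuinely infinite for such an $F$, and any argument that bounds $\int|\kappa_4|$ by that sum must fail. What saves the lemma is a regrouping \emph{before} taking absolute values: the paper's choice is $\EE[(I_t^\eps)^4]^{1/4}\le\sum_m|C_m|\,\EE[(I_{t,m}^\eps)^4]^{1/4}$, which confines the diagram counting to a single chaos $m$ at a time (total count at most $(4m)!/(2^{2m}(2m)!\,(m!)^4)$, controlled by Stirling) and then closes the chaos sum by Cauchy--Schwarz against $\sum_m\alpha^mC_m^2/m!$, which is where $\alpha>2$ actually enters. You would need to insert such a device (Minkowski across chaoses, or a weighted Cauchy--Schwarz across $(k_1,\dots,k_4)$) before applying $|r_{ij}|^{p_{ij}}\le|r_{ij}|$; as written, the last step of your argument does not close.
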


\begin{proof}
Denoting $\widetilde{Z}^\eps_t =\sigma_{{\rm ou}}^{-1} {Z}^\eps_t$,
which is a zero-mean Gaussian process with covariance function
$\EE [ \widetilde{Z}^\eps_t \widetilde{Z}^\eps_{t+s}] = {\cal C}_Z(s/\eps)$,
we have
$$
I^\eps_t
= 
 \int_0^t \widetilde{F}(\widetilde{Z}^\eps_s) - \left< F^2 \right>ds =\sum_{m=1}^\infty C_m I_{t,m}^\eps  ,
$$
where
$$
I_{t,m}^\eps = \frac{1}{m!} \int_0^t H_m(\widetilde{Z}^\eps_s) ds ,\quad \quad m\geq 1.
$$
From \cite[Lemma 2.2]{taqqu78}, the fourth-order moment of $I_{t,m}^\eps$ can be expanded as
$$
\EE [ (I_{t,m}^\eps)^4 ] = \frac{1}{2^m (2m)!} \sum \int_0^t \cdots \int_0^t dt_1 dt_2 dt_3 dt_4 
\prod_{\ell=1}^m {\cal C}_Z\Big(\frac{t_{i_\ell}-t_{j_\ell}}{\eps}\Big)  ,
$$
where the sum is over all indices $i_1,j_1,\ldots,i_{2m},j_{2m}$ such that:\\
i) $i_1,j_1,\ldots,i_{2m},j_{2m} \in \{1,2,3,4\}$,\\
ii) $i_1\neq j_1$, $\ldots$, $i_{2m}\neq j_{2m}$,\\
iii) each number $1,2,3,4$ appears exactly $m$ times in $(i_1,j_1,\ldots,i_{2m},j_{2m})$.\\
The number $N_{2m}$ of terms in this sum is, therefore, smaller than $(4m)! / m!^4$ (it would be exactly this cardinal without the second condition;
therefore it is smaller than this number).\\
Because ${\cal C}_Z(s) \leq 1 \wedge K |s|^{2H-2}$ for some constant $K$, we have, for any $t \in [0,T]$,
$$
\EE [ (I_{t,m}^\eps)^4 ] \leq \frac{1}{2^{2m} (2m)!} \sum \int_0^T \cdots \int_0^T dt_1 dt_2 dt_3 dt_4
\prod_{\ell=1}^{2m}
1\wedge K \big(\frac{|t_{i_\ell}-t_{j_\ell}|}{\eps}\big)^{2H-2} .
$$
For each term of the sum, we apply the change of variables $s_1=t_{i_1}$, $s_2=t_{j_1}$, 
$s_3=t_{\min( \{1,2,3,4\} \backslash \{i_1,j_1\})}$,
$s_4=t_{\max( \{1,2,3,4\} \backslash \{i_1,j_1\})}$.
In the product, we keep the first term: $K(|s_1-s_2|/\eps)^{2H-2}$,
and the first term that has $s_3$ in it: $K(|s_3-s_j|/\eps)^{2H-2}$, so that
we can write, for any $t \in [0,T]$,
\begin{eqnarray*}
\EE [ (I_{t,m}^\eps)^4 ] &\leq& \frac{N_{2m} K^2}{2^{2m} (2m)!}  \int_0^T \cdots \int_0^T ds_1 ds_2 ds_3 ds_4
\big(\frac{|s_1-s_2|}{\eps}\big)^{2H-2} 
\Big[ \big(\frac{|s_3-s_1|}{\eps}\big)^{2H-2} \\
&&\quad \quad \quad \quad +
 \big(\frac{|s_3-s_2|}{\eps}\big)^{2H-2} 
+
 \big(\frac{|s_3-s_4|}{\eps}\big)^{2H-2} \Big] \\
  &\leq& K' \frac{(4m)!}{2^{2m} (2m)!m!^4} \eps^{4-4H}  ,
\end{eqnarray*}
for some constant $K'$ (that depends on $H$ and $T$), because $s^{2H-2}$ is integrable over $[0,T]$.
By Stirling's formula, we obtain
$$
\frac{(4m)!}{2^{2m} (2m)!m!^4} \simeq \frac{2^{2m}}{m!^2} \frac{1}{\sqrt{2}\pi m} .
$$
Therefore, by Minkowski's inequality, we have, for any $t \in [0, T]$,
\begin{eqnarray*}
\EE [ (I_t^\eps)^4) ]^{1/4} &\leq& \sum_{m=1}^\infty |C_m| \EE [ (I^\eps_m)^4) ]^{1/4}
\leq K'' \eps^{1-H} \sum_{m=1}^\infty |C_m| \Big(\frac{2^{m}}{m!}\Big)^{1/2} 
\\
&\leq& 
K'' \eps^{1-H} \Big( \sum_{m=1}^\infty \frac{\alpha^m C_m^2}{m!}\Big)^{1/2} \Big( \sum_{m=1}^\infty \frac{2^m}{\alpha^m}\Big)^{1/2} ,
\end{eqnarray*}
for some constant $K''$,
which gives the desired result.
\end{proof}

The hypothesis (\ref{hyp:H}) in Lemma \ref{lem:a} requires some smoothness for the function $\widetilde{F}$.
The following lemma gives a sufficient condition.

\begin{lemma}
If the function $\widetilde{F}$ defined by (\ref{def:Ftilde}) is of the form
\begin{equation}
\widetilde{F}(x) = \int_{-\infty}^x f(y) dy ,
\end{equation}
where the Fourier transform of the function $f$ satisfies $|\hat{f}(\nu)| \leq C \exp(-\nu^2)$ for some $C>0$,
then there exists $K >0$ such that, for any $k\geq 0$,
\begin{equation}
\label{hyp:H2}
\frac{C_k^2}{k!} \leq K 3^{-k}.
\end{equation}
\end{lemma}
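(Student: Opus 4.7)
The plan is to exploit the Rodrigues representation $H_k(z) p(z) = (-1)^k p^{(k)}(z)$ to transfer all $k$ derivatives onto $\widetilde F$ by integration by parts. Since $\widetilde F = F^2$ is bounded with bounded derivatives and each $p^{(j)}(z) = (-1)^j H_j(z) p(z)$ decays super-exponentially at infinity, the boundary terms vanish and one obtains, for $k \geq 1$,
\[
C_k = \int_\RR \widetilde F^{(k)}(z)\, p(z)\, dz = \int_\RR f^{(k-1)}(z)\, p(z)\, dz.
\]
Next, using the Fourier inversion formula for $f$ together with the characteristic function identity $\int_\RR e^{i\nu z} p(z)\, dz = e^{-\nu^2/2}$, a Fubini swap (justified by the assumed Gaussian decay of $\hat f$) rewrites this as
\[
C_k = \frac{1}{2\pi}\int_\RR (i\nu)^{k-1}\, \hat f(\nu)\, e^{-\nu^2/2}\, d\nu.
\]

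Plugging in the hypothesis $|\hat f(\nu)| \leq C e^{-\nu^2}$ yields
\[
|C_k| \leq \frac{C}{2\pi}\int_\RR |\nu|^{k-1} e^{-3\nu^2/2}\, d\nu,
\]
and the substitution $u = 3\nu^2/2$ evaluates the right-hand side to a constant times $(2/3)^{k/2}\, \Gamma(k/2)$. Squaring therefore gives
\[
C_k^2 \leq K' \, (2/3)^k \, \Gamma(k/2)^2
\]
for some $K'$ independent of $k$.

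The final step is to divide by $k!$ and apply Stirling's formula, which produces the elementary estimate $\Gamma(k/2)^2 / k! \leq K'' \, 2^{-k} / k^{3/2}$. Combining the two bounds gives $C_k^2 / k! \leq K\, 3^{-k}/k^{3/2}$, which implies the claim (the extra polynomial factor is simply discarded). The case $k=0$ is handled separately since $C_0 = \langle F^2 \rangle$ is finite by the boundedness of $F$. I expect the main obstacle to be essentially bookkeeping: justifying the $k$-fold integration by parts uniformly in $k$, checking the Fubini swap against $\hat f$, and ensuring that the Stirling cancellation really produces base $3$ rather than a weaker constant. What is delicate is the numerology: it is the precise interaction between the Gaussian factor $(2/3)^k$ coming from the decay of $\hat f$ and the combinatorial factor $2^{-k}$ coming from $\Gamma(k/2)^2/k!$ that yields exactly the target $3^{-k}$.
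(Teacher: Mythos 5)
Your proposal is correct and follows essentially the same route as the paper: integration by parts to write $C_k=\int_\RR f^{(k-1)}(z)p(z)\,dz$, Parseval/Fourier inversion to get $C_k=\frac{1}{2\pi}\int_\RR (i\nu)^{k-1}\hat f(\nu)e^{-\nu^2/2}\,d\nu$, the Gaussian bound yielding $|C_k|\leq C(2/3)^{k/2}\Gamma(k/2)$, and Stirling to extract the base $3^{-k}$. Your explicit accounting of the $\Gamma(k/2)^2/k!\sim c\,2^{-k}k^{-3/2}$ cancellation is exactly the step the paper leaves implicit in its final appeal to Stirling's formula.
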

The inequality (\ref{hyp:H2})
is sufficient to ensure that the hypothesis (\ref{hyp:H}) is fulfilled.
We may, for instance, consider
\begin{equation}
\widetilde{F}(x) =  \int_{-\infty}^{x} e^{-y^2/4} dy  \mbox{ or }
\widetilde{F}(x) =  \int_{-\infty}^{x} {\rm sinc}^2(y) dy .
\end{equation}

\begin{proof}
The function $\widetilde{F}$ is of class ${\cal C}^\infty$, and
we have, for any $k\geq 1$, using integration by parts,
$$
C_ k = \int_\RR \widetilde{F}(z) H_k(z) p(z) dz =  \int_\RR \widetilde{F}^{(k)}(z) p(z) dz =  \int_\RR {f}^{(k-1)}(z) p(z) dz .
$$
By Parseval formula, we have
$$
C_k= \frac{1}{2\pi}
\int_\RR e^{-\nu^2/2} (i\nu)^{k-1} \hat{f}(\nu) d\nu .
$$
Because $|\hat{f}(\nu)| \leq C \exp(-\nu^2)$, we obtain
$$
|C_k|\leq C \int_\RR e^{-3\nu^2/2} |\nu|^{k-1}  d\nu = C \Big( \frac{2}{3} \Big)^{\frac{k}{2}}
\int_0^\infty e^{-s} s^{\frac{k}{2}-1} ds = C\Big( \frac{2}{3} \Big)^{\frac{k}{2}} \Gamma\Big(\frac{k}{2}\Big) ,
$$
which gives the desired result using Stirling's formula $\Gamma(z) \sim z^{z-1/2} e^{-z} \sqrt{2\pi}$.
\end{proof}

\section{Technical Lemmas}\label{sec:app}
We denote
\begin{equation}
\label{def:G}
G(z) = \frac{1}{2} \big( F(z)^2 - \overline{\sigma}^2\big) .
\end{equation}
The martingale $\psi^\eps_t$ defined by (\ref{def:Kt}) has the form
\begin{equation}
\psi_t^\eps = 
\EE \Big[   \int_0^T G(Z_s^\eps)  ds \big| {\cal F}_t\Big] .
\end{equation}

\begin{lemma}
\label{lem:1}
$(\psi_t^\eps)_{t\in [0,T]}$ is a square-integrable martingale and
\begin{equation}
d \left< \psi^\eps, W\right>_t =  \vartheta^\eps_{t} dt ,
\quad \quad \vartheta^\eps_{t} = \sigma_{{\rm ou}} \int_t^T \EE \big[ G'(Z_s^\eps)|{\cal F}_t \big]{\cal K}^\eps(s-t) ds   .
\end{equation}
\end{lemma}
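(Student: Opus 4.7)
The plan is to verify the martingale and square-integrability properties by elementary means, then identify the $W$-covariation via the Clark--Ocone formula applied to the terminal variable $\psi^\eps_T = \int_0^T G(Z_s^\eps)\,ds$.

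For the first part, observe that $F$ is bounded by hypothesis, so $G(z) = \tfrac12(F(z)^2 - \overline{\sigma}^2)$ is bounded, whence $|\psi^\eps_T| \leq T\|G\|_\infty$ a.s. The process $\psi_t^\eps = \EE[\psi_T^\eps \mid {\cal F}_t]$ is therefore a uniformly bounded (hence square-integrable) martingale by the tower property, and admits a continuous modification via the Brownian martingale representation below.

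For the second part, I would work in the two-dimensional Brownian filtration generated by $(W,B)$ and apply the Clark--Ocone formula. Since $Z_s^\eps$ is a linear functional of $W$ alone (see the representation $Z_s^\eps = \sigma_{{\rm ou}} \int_{-\infty}^s {\cal K}^\eps(s-v)\,dW_v$ following from (\ref{eq:kdef})), and $G \in {\cal C}^1$ with $G' = FF'$ bounded (since $F$ is smooth, bounded, with bounded derivative), each $G(Z_s^\eps)$ lies in the Malliavin Sobolev space with Malliavin derivative w.r.t.\ $B$ equal to zero and
\begin{equation*}
D_u^W G(Z_s^\eps) = G'(Z_s^\eps)\,\sigma_{{\rm ou}}\,{\cal K}^\eps(s-u)\,\mathbf{1}_{0 \le u \le s}.
\end{equation*}
Fubini (justified by boundedness of $G'$ and ${\cal K}^\eps \in L^2$) then gives
\begin{equation*}
D_u^W \psi^\eps_T = \sigma_{{\rm ou}} \int_u^T G'(Z_s^\eps)\,{\cal K}^\eps(s-u)\,ds,
\end{equation*}
and the Clark--Ocone representation reads
\begin{equation*}
\psi^\eps_t = \EE[\psi^\eps_T] + \int_0^t \EE\bigl[D_u^W \psi^\eps_T \,\big|\, {\cal F}_u\bigr]\,dW_u,
\end{equation*}
with no $B$-integral. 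Reading off the integrand against $dW_t$ gives exactly $\vartheta^\eps_t = \sigma_{{\rm ou}} \int_t^T \EE[G'(Z_s^\eps)|{\cal F}_t]\,{\cal K}^\eps(s-t)\,ds$, and $d\langle \psi^\eps, W\rangle_t = \vartheta^\eps_t\,dt$ follows.

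The main technical obstacle is justifying the use of Clark--Ocone and the exchange of Malliavin differentiation with the Lebesgue integral in $s$. This is routine under our hypotheses but relies on $\{G(Z_s^\eps)\}_{s\le T}$ sitting in $\mathbb{D}^{1,2}$ uniformly in $s$, which follows from the Gaussian nature of $Z^\eps$ together with boundedness of $G$ and $G'$. A self-contained alternative, avoiding Malliavin calculus altogether, would be to decompose $Z_s^\eps = U_{s,t} + V_{s,t}$ for $s>t$ into the ${\cal F}_t$-measurable part $U_{s,t} = \sigma_{{\rm ou}} \int_{-\infty}^t {\cal K}^\eps(s-v)\,dW_v$ and the independent Gaussian remainder $V_{s,t}$, write the conditional expectation as $\Phi(U_{s,t}) := \EE[G(U_{s,t} + V_{s,t})\mid U_{s,t}]$ by a Gaussian convolution, and apply It\^o's formula to $\Phi(U_{s,t})$ using $dU_{s,t} = \sigma_{{\rm ou}}\,{\cal K}^\eps(s-t)\,dW_t$; integrating in $s$ (with a Fubini argument and noting ${\cal K}^\eps(0^+) = 0$ for $H>1/2$ so that the boundary term at $s=t$ drops out) recovers the same formula for $\vartheta^\eps_t$.
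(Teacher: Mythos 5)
Your argument is correct, but your primary route differs from the paper's. The paper does not invoke Clark--Ocone: it computes $\EE[G(Z_s^\eps)\,|\,{\cal F}_t]$ explicitly as a Gaussian convolution, $\int_\RR G\big(\sigma_{{\rm ou}}\int_{-\infty}^t {\cal K}^\eps(s-u)\,dW_u + \sigma^\eps_{0,s-t}z\big)p(z)\,dz$, applies It\^o's formula in $t$ to this quantity for each fixed $s$, integrates in $s$, and reads off the $dW$-integrand of $\psi^\eps_t = \int_0^t G(Z^\eps_s)\,ds + \int_t^T \EE[G(Z^\eps_s)|{\cal F}_t]\,ds$ from the resulting decomposition --- i.e., exactly the ``self-contained alternative'' you sketch in your last paragraph. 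Your Clark--Ocone derivation is shorter and makes the structural reason for the formula transparent ($\vartheta^\eps_t$ is just the predictable projection of the Malliavin derivative of the terminal payoff), at the price of importing the Malliavin machinery; the paper's computation is longer but entirely elementary and also yields the explicit representation of the drift terms, which is reused downstream. Two small points to tighten: since $Z^\eps$ integrates $W$ over the infinite past, ${\cal F}_0$ is nontrivial, so the representation should read $\psi^\eps_t = \EE[\psi^\eps_T\,|\,{\cal F}_0] + \int_0^t \EE[D_u^W\psi^\eps_T\,|\,{\cal F}_u]\,dW_u$ (the conditional Clark--Ocone formula); this does not affect the bracket. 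Likewise your indicator should be $\mathbf{1}_{u\le s}$ with only the restriction to $u\in[0,T]$ entering the representation. Your observation that ${\cal K}(0^+)=0$ for $H>1/2$ so no boundary term appears in the diffusion coefficient is correct and consistent with the paper's formula, in which the integrand ${\cal K}^\eps(s-t)$ vanishes as $s\downarrow t$.
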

An alternative expression of the bracket $\left< \psi^\eps, W\right>_t$ is given in (\ref{eq:crocKW1}-\ref{eq:crocKW2}).

\begin{proof}
For $t\leq s$, the conditional distribution of $Z_s^\eps$ given ${\cal F}_t$ is Gaussian with mean
$$
\EE \big[  Z_s^\eps |{\cal F}_t \big] =\sigma_{{\rm ou}}
 \int_{-\infty}^t {\cal K}^\eps(s-u) dW_u 
$$
and deterministic variance given by
$$
 {\rm Var} \big( Z_s^\eps |{\cal F}_t\big) = (\sigma_{0,s-t}^\eps)^2,
$$
where we have defined, for any $0\leq t\leq s\leq \infty$,
\begin{equation}
(\sigma_{t,s}^\eps)^2=\sigma_{{\rm ou}}^2
   \int_t^{s} {\cal K}^\eps(u)^2 du .
\end{equation}
 We thus  have  that  the distribution of
$$
\frac{1}{\sigma_{0,s-t}^\eps} \Big( \big( Z_s^\eps  - 
\sigma_{{\rm ou}} 
\int_{-\infty}^t {\cal K}^\eps(s-u) dW_u  \big)   \big| {\cal F}_t 
 \Big) 
  $$
is standard normal.
Therefore, we have
$$
\EE \big[ {G}( Z_s^\eps ) |{\cal F}_t \big] 
= \int_\RR G \Big( \sigma_{{\rm ou}} \int_{-\infty}^t {\cal K}^\eps(s-u) dW_u
+\sigma_{0,s-t}^\eps z\Big) p(z) dz ,
$$
where $p(z)$ is the pdf of the standard normal distribution. 
As a random process in $t$, it is a continuous martingale.
By It\^o's formula, for any $t\leq s$,
\begin{eqnarray*}
\EE \big[ {G}( Z_s^\eps ) |{\cal F}_t \big] &=&
 \int_\RR {G} \Big( \sigma_{{\rm ou}}\int_{-\infty}^0 {\cal K}^\eps(s-v) dW_v
+\sigma_{0,s}^\eps z\Big) p(z) dz  \\
&&
+ \int_0^t  \int_\RR {G}' \Big( \sigma_{{\rm ou}}\int_{-\infty}^u {\cal K}^\eps(s-v) dW_v
+\sigma_{0,s-u}^\eps z\Big) z p(z) dz \partial_u \sigma_{0,s-u}^\eps du   \\
&&
+\sigma_{{\rm ou}} \int_0^t  \int_\RR {G}' \Big( \sigma_{{\rm ou}}\int_{-\infty}^u {\cal K}^\eps(s-v) dW_v
+\sigma_{0,s-u}^\eps z\Big)p(z) dz  {\cal K}^\eps(s-u) dW_u   \\
&&
+\frac{\sigma_{{\rm ou}}^2}{2}  \int_0^t  \int_\RR {G}'' \Big( \sigma_{{\rm ou}}\int_{-\infty}^u {\cal K}^\eps(s-v) dW_v
+\sigma_{0,s-u}^\eps z\Big) p(z) dz {\cal K}^\eps(s-u)^2 du 
\end{eqnarray*}
and
\begin{eqnarray*}
{G}( Z_s^\eps )  &=&{G} \Big( \sigma_{{\rm ou}}\int_{-\infty}^s {\cal K}^\eps(s-v) dW_v\Big)  \\
 &=&  \int_\RR {G} \Big(\sigma_{{\rm ou}} \int_{-\infty}^s {\cal K}^\eps(s-v) dW_v
+\sigma_{0,0}^\eps z\Big) p(z) dz  \\
&=&  \int_\RR {G} \Big(\sigma_{{\rm ou}} \int_{-\infty}^0 {\cal K}^\eps(s-v) dW_v
+\sigma_{0,s}^\eps z\Big) p(z) dz  \\
&&
+ \int_0^s \int_\RR {G}' \Big( \sigma_{{\rm ou}}\int_{-\infty}^u {\cal K}^\eps(s-v) dW_v
+\sigma_{0,s-u}^\eps z\Big) z p(z) dz \partial_u \sigma_{0,s-u}^\eps du   \\
&&
+\sigma_{{\rm ou}}\int_0^s  \int_\RR {G}' \Big( \sigma_{{\rm ou}}\int_{-\infty}^u {\cal K}^\eps(s-v) dW_v
+\sigma_{0,s-u}^\eps z\Big)p(z) dz  {\cal K}^\eps(s-u) dW_u   \\
&&
+\frac{\sigma_{{\rm ou}}^2}{2}  \int_0^s  \int_\RR {G}'' \Big(\sigma_{{\rm ou}} \int_{-\infty}^u {\cal K}^\eps(s-v) dW_v
+\sigma_{0,s-u}^\eps z\Big) p(z) dz {\cal K}^\eps(s-u)^2 du  .
\end{eqnarray*}
Therefore,
\begin{eqnarray*}
\psi^\eps_t &=& \int_0^t  {G}( Z_s^\eps) ds + \int_t^T \EE \big[ {G}( Z_s^\eps ) |{\cal F}_t \big]  ds \\
&=& 
\Big[ \int_\RR \int_0^T {G} \Big( \sigma_{{\rm ou}} \int_{-\infty}^0 {\cal K}^\eps(s-v) dW_v
+\sigma_{0,s}^\eps z\Big)  ds p(z) dz\Big] \\
&& +\int_0^t \Big[ \int_u^T  \int_\RR {G}' \Big(\sigma_{{\rm ou}} \int_{-\infty}^u {\cal K}^\eps(s-v) dW_v
+\sigma_{0,s-u}^\eps z\Big) z p(z) dz \partial_u \sigma_{0,s-u}^\eps ds \Big] du \\
&&+
\sigma_{{\rm ou}}\int_0^t \Big[ \int_u^T \int_\RR {G}'\Big( \sigma_{{\rm ou}}\int_{-\infty}^u {\cal K}^\eps(s-v) dW_v
+\sigma_{0,s-u}^\eps z\Big)p(z) dz  {\cal K}^\eps(s-u) ds \Big]dW_u \\
&& +\frac{\sigma_{{\rm ou}}^2}{2} \int_0^t \Big[ \int_u^T \int_\RR {G}'' \Big(\sigma_{{\rm ou}} \int_{-\infty}^u {\cal K}^\eps(s-v) dW_v
+\sigma_{0,s-u}^\eps z\Big) p(z) dz {\cal K}^\eps(s-u)^2 ds \Big] du .
\end{eqnarray*}
This gives
\begin{eqnarray}
\label{eq:crocKW1}
d \left< \psi^\eps, W\right>_t =\vartheta^\eps_ t  dt ,
\end{eqnarray}
with
\begin{eqnarray}
\vartheta^\eps_t  &=& \sigma_{{\rm ou}}
 \int_t^T \int_\RR {G}'\Big( \sigma_{{\rm ou}}\int_{-\infty}^t {\cal K}^\eps(s-v) dW_v
+\sigma_{0,s-t}^\eps z\Big)p(z) dz  {\cal K}^\eps(s-t) ds  ,
\label{eq:crocKW2}
\end{eqnarray}
which can also be written 
as stated in the Lemma.
\end{proof}

The important properties of the random process $\vartheta^\eps_{t}$ are stated in the following lemma.
\begin{lemma}
\label{lem:2}
For any $t \in [0,T]$, we have
\begin{equation}
\vartheta^\eps_{t} = \eps^{1-H}  \theta_{t} + \widetilde{\theta}^\eps_{t}  ,
\end{equation}
where $\theta_{t} $ is deterministic and defined by
\begin{equation}
\theta_{t} = \overline{\theta} (T-t)^{H -\frac{1}{2}} ,\quad \quad \overline{\theta} = \frac{
\left< G' \right>}{\Gamma(H+\frac{1}{2})}
,
\end{equation}
and $\widetilde{\theta}^\eps_{t}$ is random, but smaller than $\eps^{1-H}$,
\begin{equation}\label{eq:phit}
\limsup_{\eps \to 0} \eps^{H-1} \sup_{t \in [0,T]} \EE \big[ (\widetilde{\theta}^\eps_{t})^2\big]^{1/2} =0.
\end{equation}
\end{lemma}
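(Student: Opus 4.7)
The plan is to split $\vartheta^\eps_t$ into a deterministic leading part (which will reproduce $\eps^{1-H}\theta_t$) and a zero-mean random residual, then control each in $L^2$ uniformly in $t\in[0,T]$. Writing
$$
\vartheta^\eps_t \;=\; \sigma_{{\rm ou}}\left<G'\right>\int_t^T {\cal K}^\eps(s-t)\,ds
\;+\; \sigma_{{\rm ou}}\int_t^T\bigl(\EE[G'(Z^\eps_s)|{\cal F}_t]-\left<G'\right>\bigr){\cal K}^\eps(s-t)\,ds \;=:\; A^\eps_t+B^\eps_t,
$$
one has $A^\eps_t$ deterministic and $B^\eps_t$ centered by stationarity of $Z^\eps$. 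For the deterministic part, rescaling $v=(s-t)/\eps$ gives $A^\eps_t=\sigma_{{\rm ou}}\left<G'\right>\sqrt{\eps}\int_0^{(T-t)/\eps}{\cal K}(v)\,dv$, and the splitting ${\cal K}(v)=v^{H-3/2}/\Gamma(H-1/2)+r(v)$ with $r\in L^1(0,\infty)$ (the latter property holding because $H>1/2$, per the asymptotics of ${\cal K}$ stated at the end of Section~\ref{sec:fou}) yields via $\Gamma(H+1/2)=(H-1/2)\Gamma(H-1/2)$ that $A^\eps_t=\eps^{1-H}\theta_t+O(\sqrt\eps)$ uniformly in $t$. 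Since $H>1/2$ gives $\sqrt\eps=o(\eps^{1-H})$, the deterministic error is absorbed into $\widetilde\theta^\eps_t$.

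For $B^\eps_t$, I would use that the conditional law of $Z^\eps_s$ given ${\cal F}_t$ is Gaussian with mean $m^\eps_{s,t}:=\sigma_{{\rm ou}}\int_{-\infty}^t{\cal K}^\eps(s-v)dW_v$ and deterministic variance $(\sigma^\eps_{0,s-t})^2$, so $\EE[G'(Z^\eps_s)|{\cal F}_t]-\left<G'\right>=\Phi(m^\eps_{s,t},\sigma^\eps_{0,s-t})-\Phi(0,\sigma_{{\rm ou}})$, where $\Phi(m,\sigma):=\int_\RR G'(m+\sigma z)p(z)dz$ is smooth with bounded partial derivatives (inherited from the boundedness of $F,F',F''$). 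I split this as a deterministic piece $\Phi(0,\sigma^\eps_{0,s-t})-\Phi(0,\sigma_{{\rm ou}})$, controlled pointwise by $|\sigma_{{\rm ou}}-\sigma^\eps_{0,s-t}|\lesssim((s-t)/\eps)^{2H-2}\wedge 1$, plus a random centered piece bounded in absolute value by $\|\partial_m\Phi\|_\infty|m^\eps_{s,t}|$. The deterministic contribution integrates against ${\cal K}^\eps$ to an order $o(\eps^{1-H})$ by routine power counting using the ${\cal K}$ asymptotics. For the random piece, stochastic Fubini and the It\^o isometry yield
$$
\EE[(\mbox{random part})^2] \;\leq\; C\sigma_{{\rm ou}}^2\,\eps\int_0^\infty J^\eps(u)^2\,du, \qquad J^\eps(u):=\int_0^{(T-t)/\eps}{\cal K}(w+u){\cal K}(w)\,dw,
$$
with $J^\eps(u)\le{\cal C}_Z(u)$ because $\int_0^\infty{\cal K}(w){\cal K}(w+u)dw={\cal C}_Z(u)$ from the moving-average representation of $Z^\eps$.

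The main obstacle is that ${\cal C}_Z(u)\sim u^{2H-2}/\Gamma(2H-1)$ fails to be square-integrable for $H\geq 3/4$, so the naive bound $\eps\int_0^\infty{\cal C}_Z(u)^2du$ blows up. My fix is to split the $u$-integral at the cutoff $(T-t)/\eps$: for $u\leq(T-t)/\eps$ the truncation of the power-law tail contributes at most $((T-t)/\eps)^{4H-3}$ (logarithmic at the critical $H=3/4$), while for $u>(T-t)/\eps$ the asymptotics ${\cal K}(w+u)\sim u^{H-3/2}/\Gamma(H-1/2)$ uniformly in $w\in[0,(T-t)/\eps]$ give $J^\eps(u)\lesssim u^{H-3/2}((T-t)/\eps)^{H-1/2}$, whose square integrates over $u>(T-t)/\eps$ to the same order $((T-t)/\eps)^{4H-3}$. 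Multiplying by $\eps$, the variance of the random part is $O(\eps^{4-4H})$ for $H>3/4$ and $O(\eps)$ for $H<3/4$; in both regimes the standard deviation is $o(\eps^{1-H})$ (using $H<1$, so $\eps^{2-2H}=o(\eps^{1-H})$, and $H>1/2$, so $\sqrt\eps=o(\eps^{1-H})$). Combined with the deterministic estimates, this yields the desired uniform bound $\limsup_{\eps\to 0}\eps^{H-1}\sup_{t\in[0,T]}\EE[(\widetilde\theta^\eps_t)^2]^{1/2}=0$.
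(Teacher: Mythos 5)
Your treatment of the deterministic part $A^\eps_t=\EE[\vartheta^\eps_t]$ is exactly the paper's: rescale, subtract the power law, use that ${\cal K}(v)-v^{H-3/2}/\Gamma(H-\frac12)\in L^1(0,\infty)$ to get $A^\eps_t=\eps^{1-H}\theta_t+O(\sqrt{\eps})$ uniformly, and absorb the $O(\sqrt\eps)$ since $H>1/2$. The decomposition into mean plus centered fluctuation is also the paper's. Where you diverge is in bounding the fluctuation, and there is one step there that does not hold as written: you first bound the random piece pointwise by $\|\partial_m\Phi\|_\infty\,|m^\eps_{s,t}|$ and then claim that ``stochastic Fubini and the It\^o isometry'' give $\EE[(\mbox{random part})^2]\le C\sigma_{\rm ou}^2\,\eps\int_0^\infty J^\eps(u)^2du$. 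Once you have passed to $|m^\eps_{s,t}|$ the integrand is no longer a stochastic integral in $W$, and the It\^o isometry is unavailable; the inequality $\EE[(\int g_s(m_s)k_s\,ds)^2]\le L^2\,\EE[(\int m_sk_s\,ds)^2]$ for a nonlinear $g_s$ with $|g_s(m)|\le L|m|$ is false in general (e.g.\ for $g_s(m)=L|m|$ with weakly correlated $m_s$, the left side is of order $(\int k_s\EE[m_s^2]^{1/2}ds)^2$ while the right side can be much smaller). To legitimately reach your It\^o-isometry-type bound you would need either a genuine first-order Taylor expansion of $\Phi$ in $m$ with a separately controlled quadratic remainder, or the Gaussian maximal-correlation (Gebelein) inequality applied to the centered functionals $\Phi(m^\eps_{s,t},\cdot)-\left<G'\right>$.

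The good news is that the conclusion survives without that step. Your own pointwise bound, combined with Minkowski's inequality and $\EE[(m^\eps_{s,t})^2]^{1/2}=\sigma^\eps_{s-t,\infty}=\sigma_{\rm ou}(\int_{(s-t)/\eps}^\infty{\cal K}(u)^2du)^{1/2}$, gives
$\EE[(B^\eps_t)^2]^{1/2}\lesssim \eps^{1/2}\int_0^{(T-t)/\eps}{\cal K}(v)\,\big(\int_v^\infty{\cal K}(u)^2du\big)^{1/2}dv$,
which is the paper's route (the paper gets there by Cauchy--Schwarz on the covariance of the conditional expectations together with the bound ${\rm Var}(\EE[G'(Z^\eps_s)|{\cal F}_t])\le\|G''\|_\infty^2(\sigma^\eps_{s-t,\infty})^2$). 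The resulting orders --- $\eps^{1/2}$ for $H<3/4$, $\eps^{1/2}|\ln\eps|$ at $H=3/4$, $\eps^{2-2H}$ for $H>3/4$ --- coincide with the ones your splitting of the $u$-integral produces, and all are $o(\eps^{1-H})$ on $(1/2,1)$. Your auxiliary deterministic piece $\Phi(0,\sigma^\eps_{0,s-t})-\Phi(0,\sigma_{\rm ou})$, controlled via $|\sigma_{\rm ou}-\sigma^\eps_{0,s-t}|\lesssim 1\wedge((s-t)/\eps)^{2H-2}$, is handled correctly and is simply folded into the variance bound in the paper's version. So the proposal is sound in structure and conclusion, but the It\^o-isometry inequality must be replaced by the Minkowski bound (or justified by Gebelein) before it is a complete proof.
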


\begin{proof}
Recall first from Eq. (\ref{eq:kdef}) that
\begin{equation*}
{\cal K}^\eps(t) = \frac{1}{\sqrt{\eps}} {\cal K}\Big(\frac{t}{\eps}\Big),\quad \quad
{\cal K}(t) =\frac{1}{
\sigma_{{\rm ou}}
\Gamma(H+\frac{1}{2})
} 
 \Big[ t^{H - \frac{1}{2}} - \int_0^t (t-s)^{H - \frac{1}{2}} e^{-s} ds \Big] .
\end{equation*} 
The expectation of $  \vartheta^\eps_{t}  $ is then equal to
$$
 \EE\big[ \vartheta^\eps_{t}  \big ] 
= \sigma_{{\rm ou}}  \left< G'\right>  \int_0^{T-t} {\cal K}^\eps(s) ds
= \sigma_{{\rm ou}}  \left< G'\right> \sqrt{\eps} \int_0^{(T-t)/\eps} {\cal K}(s) ds.
$$
Therefore, the difference
$$
 \EE\big[ \vartheta^\eps_{t}  \big ]  - \eps^{1-H} \theta_{t} 
=
\sigma_{{\rm ou}}  \left< G'\right> {\eps}^{1/2} \int_0^{(T-t)/\eps} {\cal K}(s) - \frac{s^{H-\frac{3}{2}}}{
\sigma_{{\rm ou}}
\Gamma(H-\frac{1}{2})} ds
$$
can be bounded by
\begin{equation}
\big| \EE\big[ \vartheta^\eps_{t}  \big ]   - \eps^{1-H} \theta_{t}  \big| 
\leq 
C  \eps^{1/2} ,
\end{equation}
uniformly in $t\in [0,T]$, for some constant $C$,
because ${\cal K}(s) - \frac{s^{H-\frac{3}{2}}}{
\sigma_{{\rm ou}}
\Gamma(H-\frac{1}{2})} $ is in $L^1$.

We have
\begin{eqnarray*}
{\rm Var}(\vartheta^\eps_{t}) &=& 
  \sigma_{{\rm ou}}^2 \int_t^T ds\int_t^T ds'  {\cal K}^\eps(s-t) {\cal K}^\eps(s'-t)  
{\rm Cov}\big(  \EE \big[ G'(Z_s^\eps) |{\cal F}_t \big], \EE \big[ G'(Z_{s'}^\eps) |{\cal F}_{t} \big]\big) \\
&\leq& \sigma_{{\rm ou}}^2 \Big( \int_t^T ds {\cal K}^\eps(s-t)   {\rm Var} \big( \EE \big[ G'(Z_s^\eps) |{\cal F}_t \big]\big)^{1/2}
 \Big)^{2}   \\
&= &  \sigma_{{\rm ou}}^2 \Big( \int_0^{T-t} ds  {\cal K}^\eps(s) {\rm Var} \big( \EE \big[ G'(Z_{s}^\eps) 
|{\cal F}_0 \big]\big)^{1/2} \Big)^2 .
\end{eqnarray*}
The conditional distribution of $Z_t^\eps$ given ${\cal F}_0$ is Gaussian with mean
$$
\EE \big[  Z_t^\eps |{\cal F}_0 \big] = \sigma_{{\rm ou}}  \int_{-\infty}^0 {\cal K}^\eps(t-u) dW_u
$$
and variance
$$
 {\rm Var} \big( Z_t^\eps |{\cal F}_0\big) = (\sigma_{0,t}^\eps)^2 =\sigma_{{\rm ou}}^2 \int_0^{t} {\cal K}^\eps(u)^2 du  .
$$
Therefore,
$$
{\rm Var} \big( \EE \big[ G'(Z_{t}^\eps) |{\cal F}_0 \big]\big)
=
{\rm Var} \Big( \int_\RR G'\big(\EE \big[ Z_{t}^\eps |{\cal F}_0 \big]  +\sigma_{0,t}^\eps z \big) p(z) dz \Big)  .
$$
The random variable $\EE \big[ Z_{t}^\eps |{\cal F}_0 \big] $ is Gaussian with mean zero and variance
\begin{equation*}
(\sigma_{t,\infty}^\eps)^2=\sigma_{{\rm ou}}^2
   \int_t^{\infty} {\cal K}^\eps(u)^2 du ,
\end{equation*} 
so that
\begin{eqnarray}
\nonumber
{\rm Var} \big( \EE \big[ G'(Z_{t}^\eps) |{\cal F}_0 \big]\big)
&=&
 \frac{1}{2} \int_\RR \int_\RR dz dz' p(z) p(z') \int_\RR \int_\RR du du' p(u) p(u') \\
\nonumber &&\times 
\Big[   G'\big(\sigma_{t,\infty}^\eps u +\sigma_{0,t}^\eps z \big) - 
G'\big(\sigma_{t,\infty}^\eps u' +\sigma_{0,t}^\eps z \big)\Big] \\
\nonumber&& \times
\Big[   G'\big(\sigma_{t,\infty}^\eps u +\sigma_{0,t}^\eps z' \big) - 
G'\big(\sigma_{t,\infty}^\eps u' +\sigma_{0,t}^\eps z' \big)\Big] \\
\nonumber&\leq & \|G''\|_\infty^2 (\sigma_{t,\infty}^\eps)^2
\frac{1}{2} \int_\RR \int_\RR du du' p(u) p(u')(u-u')^2 \\
& =  & \|G''\|_\infty^2  (\sigma_{t,\infty}^\eps)^2  .
\label{eq:bornvarcond1}
\end{eqnarray}
Therefore,
\begin{eqnarray*}
{\rm Var}(\vartheta^\eps_{t})^{1/2} &\leq & 
\|G''\|_\infty  \sigma_{{\rm ou}}^2   \int_0^{T-t} ds  {\cal K}^\eps(s) \Big( \int_s^{\infty} du {\cal K}^\eps(u)^2 \Big)^{1/2}
 \\
&\leq & \|G''\|_\infty \sigma_{{\rm ou}}^2 \eps^{1/2} 
 \int_0^{(T-t)/\eps} ds  {\cal K}(s) \Big( \int_s^{\infty} du {\cal K}(u)^2 \Big)^{1/2}
  .
\end{eqnarray*}
Because ${\cal K}(s) \leq 1 \wedge K s^{H-\frac{3}{2}}$, this gives
\begin{equation}
{\rm Var}(\vartheta^\eps_{t})^{1/2} \leq 
C
\left\{ 
\begin{array}{ll}
\displaystyle \eps^{1/2} & \mbox{ if } H < 3/4 ,\\
\displaystyle \eps^{1/2} \ln (\eps) & \mbox{ if } H = 3/4 ,\\
\displaystyle \eps^{2-2H} & \mbox{ if } H > 3/4 ,
\end{array}
\right.
\end{equation}
uniformly in $t\in [0,T]$, for some constant $C$.
This completes the proof of the lemma.
\end{proof}

The random term $\phi^\eps_{t}$ defined by (\ref{def:phit}) has the form
\begin{equation}
\phi_{t,T}^\eps = 
\EE \Big[    \int_t^T G (Z_s^\eps)  ds \big| {\cal F}_t \Big] .
\end{equation}
Here, we write explicitly the argument $T$ (maturity) as we compute the correlations
of these random terms for different maturities.

\begin{lemma}
\label{lem:3}
\begin{enumerate}
\item
For any $t \leq T$, $\phi_{t,T}^\eps$ is a zero-mean random variable with standard deviation of order $\eps^{1-H}$,
\begin{equation}\label{eq:stdev}
\eps^{2H-2}
\EE [ (\phi_{t,T}^\eps)^2] \stackrel{\eps \to 0}{\longrightarrow} \sigma_{\phi}^2 (T-t)^{2H},
\end{equation}
where $\sigma_\phi$ is defined by (\ref{def:sigmaphi}).
\item
The covariance function of $\phi_{t,T}^\eps$ has the following limit for any $t \leq T$, $ t'\leq T'$, with $t\leq t'$,
\begin{equation}\label{eq:cov}
\eps^{2H-2}
\EE [  \phi_{t,T}^\eps \phi_{t',T'}^\eps ] \stackrel{\eps \to 0}{\longrightarrow} \sigma_{\phi}^2 (T-t)^{H}(T'-t')^{H} {\cal C}_\phi(t,t';T,T'),
\end{equation}
where the limit correlation is
\begin{equation}
\label{eq:sCdef}
 {\cal C}_\phi(t,t';T,T') = \frac{\int_{0}^\infty du \big[ (u+r )^{H-\frac{1}{2}}- u ^{H-\frac{1}{2}} \big] \big[ (u+s )^{H-\frac{1}{2}}-(u+q )^{H-\frac{1}{2}} \big]}{\int_0^\infty du  \big[ (1+u)^{H-\frac{1}{2}}-u^{H-\frac{1}{2}} \big]^2} ,
\end{equation}
with
$$
q  = \frac{t'-t}{\sqrt{ (T-t)(T'-t')}} ,\quad \quad r  = \frac{\sqrt{T-t}}{\sqrt{T'-t'}} ,\quad \quad s  = \frac{T'-t}{\sqrt{ (T-t)(T'-t')}} .
$$
\item
As $\eps \to 0$, the random process $\eps^{H-1} \phi_{t,T}^\eps$, $t \leq T$,
converges in distribution (in the sense of finite-dimensional distributions) 
to a Gaussian random process $\phi_{t,T}$, $t \leq T$,
with mean zero and covariance 
$ \eps^{2(H-1)} \EE[\phi_{t,T} \phi_{t',T'}]= \sigma_{\phi}^2 (T-t)^{H} (T'-t')^{H}  {\cal C}_\phi(t,t';T,T') $ for any 
$t\in [0,T]$, $t ' \in [0,T']$, with  $t\leq t'$.
\item  The fourth-order moments of $\eps^{H-1} \phi_{t,T}^\eps$ are uniformly bounded: there exists a constant $K_T$ independent of $\eps$ 
such that
\begin{equation}
\sup_{t \in [0,T]} \EE [ (\phi_{t,T}^\eps)^4]^{1/4} \leq K_T \eps^{1-H}.  
\end{equation}
\end{enumerate}
\end{lemma}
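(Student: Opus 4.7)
My plan is to exploit the Gaussian conditional structure of the volatility factor. Conditional on ${\cal F}_t$, the variable $Z_s^\eps/\sigma_{\rm ou}$ is Gaussian with ${\cal F}_t$-measurable mean $\mu_s^\eps := \sigma_{\rm ou}^{-1}\int_{-\infty}^t {\cal K}^\eps(s-u) dW_u$ and deterministic conditional variance $1-(\nu_s^\eps)^2$, where $(\nu_s^\eps)^2 = \int_{(s-t)/\eps}^\infty {\cal K}(\tilde u)^2 d\tilde u$ is also the variance of $\mu_s^\eps$. Expanding $G(\sigma_{\rm ou} z)=\tfrac12(F(\sigma_{\rm ou} z)^2-\overline\sigma^2)$ in Hermite polynomials as $G(\sigma_{\rm ou} z)=\sum_{k\geq 1}(g_k/k!)H_k(z)$ (the $k=0$ coefficient vanishes since $\EE G=0$) and applying the identity $\EE_N[H_k(\mu+\sqrt{1-\nu^2}N)]=\nu^k H_k(\mu/\nu)$ for a standard normal $N$ yields the clean representation
\ban
\phi_{t,T}^\eps = \sum_{k\geq 1}\frac{g_k}{k!}\int_t^T (\nu_s^\eps)^k H_k\!\Big(\frac{\mu_s^\eps}{\nu_s^\eps}\Big) ds,
\ean
where $\mu_s^\eps/\nu_s^\eps$ is a Gaussian process of unit variance in $s$. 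By Hermite orthogonality, $\EE[(\phi_{t,T}^\eps)^2]=\sum_{k\geq 1}(g_k^2/k!)\int_t^T\int_t^T(\nu_s^\eps\nu_{s'}^\eps\rho_{s,s'}^\eps)^k ds ds'$, where $\rho_{s,s'}^\eps$ is the correlation of $\mu_\cdot^\eps$. Stein's identity gives $g_1=\sigma_{\rm ou}\langle FF'\rangle$.

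For item 1 I isolate the $k=1$ Wiener integral
\ban
\phi_{t,T}^{\eps,(1)} = \sigma_{\rm ou}\langle FF'\rangle \int_{-\infty}^t L_{t,T}^\eps(t-u) dW_u,\quad L_{t,T}^\eps(v)=\int_0^{T-t}{\cal K}^\eps(r+v) dr,
\ean
whose variance is $\sigma_{\rm ou}^2\langle FF'\rangle^2\int_0^\infty L_{t,T}^\eps(v)^2 dv$. Substituting ${\cal K}^\eps(t)=\eps^{-1/2}{\cal K}(t/\eps)$ and rescaling $r=\eps\tilde r$, $v=\eps\tilde v$ reduces the inner integral to $\int_0^{(T-t)/\eps}{\cal K}(\tilde r+\tilde v) d\tilde r$, which by the large-argument asymptotic ${\cal K}(t)\sim t^{H-3/2}/\Gamma(H-\tfrac12)$ and the $L^1$ remainder statement at the end of Section \ref{sec:fou} equals $\Gamma(H+\tfrac12)^{-1}[((T-t)/\eps+\tilde v)^{H-1/2}-\tilde v^{H-1/2}]$ to leading order. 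A further rescaling $\tilde v=((T-t)/\eps)w$ in the outer integral produces $((T-t)/\eps)^{2H}\int_0^\infty[(1+w)^{H-1/2}-w^{H-1/2}]^2 dw$; combining with the $\eps^2$ factor from the two rescalings gives $\sigma_{\rm ou}^2\langle FF'\rangle^2\eps^{2-2H}(T-t)^{2H}\Gamma(H+\tfrac12)^{-2}\int_0^\infty[(1+w)^{H-1/2}-w^{H-1/2}]^2 dw$. The Mandelbrot-van Ness identity $\sigma_H^2=\Gamma(H+\tfrac12)^{-2}[\int_0^\infty((1+s)^{H-1/2}-s^{H-1/2})^2 ds+(2H)^{-1}]$ combined with $\sigma_H^2=1/(\Gamma(2H+1)\sin(\pi H))$ identifies the prefactor as $\sigma_\phi^2$ of (\ref{def:sigmaphi}). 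For $k\geq 2$, the pointwise bounds $|\rho_{s,s'}^\eps|\leq 1$ and $\nu_s^\eps\leq 1$ give $\sum_{k\geq 2}(g_k^2/k!)(\nu_s^\eps\nu_{s'}^\eps)^k\leq (\nu_s^\eps\nu_{s'}^\eps)^2\sum_{k\geq 2}g_k^2/k!$, and the last series converges by (\ref{eq:cond}); a direct scaling shows $\int_t^T(\nu_s^\eps)^2 ds = O((T-t)^{2H-1}\eps^{2-2H})$, hence the $k\geq 2$ total is $O(\eps^{4-4H})=o(\eps^{2-2H})$ for $H<1$. Item 2 follows from the same representation: for $t\leq t'$ the covariance of the two $k=1$ pieces is $\sigma_{\rm ou}^2\langle FF'\rangle^2\int_0^\infty L_{t,T}^\eps(v)L_{t',T'}^\eps(v+(t'-t)) dv$, and the double rescaling produces exactly the integral formula (\ref{eq:sCdef}) with the shift variables $q,r,s$ arising from the two maturity scales.

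Item 3 is then essentially free: $\eps^{H-1}\phi_{t,T}^\eps = \eps^{H-1}\phi_{t,T}^{\eps,(1)}+o_{L^2}(1)$, and the leading term is already Gaussian as a Wiener integral, so finite-dimensional convergence to a Gaussian process with the covariance from item 2 is immediate. Item 4 requires no new analysis: since $G=\tfrac12(F^2-\overline\sigma^2)$, we have $\phi_{t,T}^\eps=\tfrac12\EE[I_T^\eps-I_t^\eps\,|\,{\cal F}_t]$ with $I_\cdot^\eps$ the process in (\ref{def:Iepst}), so conditional Jensen followed by Minkowski yields
\ban
\EE[(\phi_{t,T}^\eps)^4]^{1/4}\leq \tfrac12\EE[(I_T^\eps-I_t^\eps)^4]^{1/4}\leq \tfrac12\big(\EE[(I_T^\eps)^4]^{1/4}+\EE[(I_t^\eps)^4]^{1/4}\big)\leq K_T\eps^{1-H}
\ean
by Lemma \ref{lem:a}. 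The main technical subtlety is the pointwise-versus-uniform replacement of ${\cal K}$ by its tail $\Gamma(H-\tfrac12)^{-1}t^{H-3/2}$ inside the rescaled integrals: the tail has a non-integrable singularity at the origin, but the outer smoothing $\int_0^{(T-t)/\eps}(\cdot) d\tilde r$ cures it, and the $L^1$ statement at the end of Section \ref{sec:fou} then controls the error uniformly in $\tilde v$.
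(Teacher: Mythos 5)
Your proposal is correct, and it reproduces the paper's key objects while organizing the error control differently. For items 3 and 4 you are in fact doing exactly what the paper does: your first-chaos term $\phi^{\eps,(1)}_{t,T}=g_1\int_t^T\mu_s^\eps\,ds$ equals $\left< FF'\right>\EE[\int_t^T Z_s^\eps ds\,|\,{\cal F}_t]$, which is precisely the Gaussian comparison process $\left< G'\right>\check\phi^\eps_{t,T}$ the paper introduces for item 3, and item 4 is the same conditional-Jensen/Minkowski reduction to Lemma \ref{lem:a}. For items 1 and 2 your route is genuinely different in how the remainder is handled: the paper writes $\EE[G(Z_s^\eps)|{\cal F}_t]$ as a Gaussian smoothing of $G$ and Taylor-expands around $\sigma_{\rm ou}z$, obtaining the leading term $V_1^\eps\left< G'\right>^2$ (identical to your $g_1^2\int_0^\infty L^\eps_{t,T}(v)L^\eps_{t',T'}(v+t'-t)\,dv$) plus remainders $V_2^\eps=O(\eps^{4-4H})$ and $V_3^\eps=O(\eps^{3-3H})$ controlled by $\|G'\|_\infty\|G''\|_\infty$; you instead expand in Hermite chaoses and use the orthogonality $\EE[H_k(U)H_j(V)]=\delta_{kj}k!\rho^k$ to kill all cross terms and bound the whole $k\geq 2$ contribution by $\big(\sum_{k\geq 2}g_k^2/k!\big)\big(\int_t^T(\nu_s^\eps)^2 ds\big)^2=O(\eps^{4-4H})$. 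This is more systematic, gives a marginally sharper error, and for this step needs only $G(\sigma_{\rm ou}\cdot)\in L^2(p(z)dz)$ rather than bounded derivatives; the scaling analysis of the first-chaos kernel (tail replacement of ${\cal K}$, the two rescalings, and the Mandelbrot--van Ness identity fixing the constant in (\ref{def:sigmaphi})) then coincides with the paper's treatment of $V_1^\eps$. Two minor points: your definition of $\mu_s^\eps$ carries a spurious factor $\sigma_{\rm ou}^{-1}$ (the conditional mean of $Z_s^\eps/\sigma_{\rm ou}$ is $\int_{-\infty}^t{\cal K}^\eps(s-u)dW_u$ itself), though your displayed formula for $\phi^{\eps,(1)}_{t,T}$ is the correct one; and for the kernel replacement you should note not just uniform smallness but that $\int_{\tilde v}^\infty\big|{\cal K}(w)-w^{H-\frac{3}{2}}/\Gamma(H-\tfrac{1}{2})\big|\,dw=O(\tilde v^{H-\frac{3}{2}})$ is square-integrable in $\tilde v$, so the error contributes $O(\eps^{2})=o(\eps^{2-2H})$ to $\int_0^\infty L^2\,dv$ over the unbounded $v$-domain.
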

\noindent
Note that the mean square increment of the limit process $\phi_{t,T}$ satisfies, for $t,t+h\in [0,T]$,
\begin{eqnarray}
\nonumber
\EE \big[ (\phi_{t,T}-\phi_{t+h,T})^2\big] &=&
\frac{
1
}{\Gamma(H+\frac{1}{2})^2} 
\int_0^\infty du
 \big[ (T-t-h+u)^{H-\frac{1}{2}}-u^{H-\frac{1}{2}} \big]^2 
 \\
 \nonumber
 &&
- 
 \big[ (T-t+u)^{H-\frac{1}{2}}-(u+h)^{H-\frac{1}{2}} \big]^2
 +
  \big[ (u+h)^{H-\frac{1}{2}}-u^{H-\frac{1}{2}} \big]^2
\\
&=& \frac{
(T-t)^{2H-1}}{\Gamma(H+\frac{1}{2})^2}  h  +o(h), \quad h \to 0 .
\end{eqnarray}
This shows that the limit Gaussian process $\phi_{t,T}$
 has the same local regularity (as a function of $t$) as a standard Brownian motion.
 We also have, for any $t <T\leq T+h$,
 \begin{eqnarray}
\EE \big[ (\phi_{t,T+h}-\phi_{t,T})^2\big] &=&
\frac{
(T-t)^{2H-2}}{(2-2H)\Gamma(H-\frac{1}{2})^2} 
h^2  +o(h^2), \quad h \to 0 .
\end{eqnarray}
This shows that the  limit Gaussian process $\phi_{t,T}$
is smooth (mean square differentiable) as a function of the maturity $T$.

\begin{proof}
Let us fix $T_0>0$.
For $t\in [0,T]$, $t ' \in [0,T']$, with $T,T'\leq T_0$, and $t\leq t'$,
the covariance of $\phi_{t,T}^\eps$ is
\begin{eqnarray*}
{\rm Cov}(\phi_{t,T}^\eps,\phi_{t',T'}^\eps) &=&  
\EE\Big[  
\EE \Big[    \int_t^T G (Z_s^\eps)  ds \big| {\cal F}_t \Big] 
\EE \Big[    \int_{t'}^{T'} G (Z_s^\eps)  ds \big| {\cal F}_{t'} \Big] 
\Big]\\
&=&
\EE\Big[  
\EE \Big[    \int_t^T G (Z_s^\eps)  ds \big| {\cal F}_t \Big] 
\EE \Big[    \int_{t'}^{T'} G (Z_s^\eps)  ds \big| {\cal F}_{t} \Big] 
\Big]\\
&=&
\int_0^{T-t} ds \int_{t'-t}^{T'-t} ds' 
{\rm Cov}\big( \EE \big[ G(Z_s^\eps)|{\cal F}_0\big] ,\EE \big[ G(Z_{s'}^\eps)|{\cal F}_0\big] \big) .
\end{eqnarray*}
Then, proceeding as in the proof of the previous lemma, we obtain
\begin{eqnarray*}
{\rm Var}(\phi_{t,T}^\eps) &\leq &  \Big( \int_0^{T-t} ds 
{\rm Var}\big( \EE \big[ G(Z_s^\eps)|{\cal F}_0\big]\big)^{1/2}
\Big)^2\leq  \|G'\|_\infty^2\Big(  \int_0^{T-t} ds  \sigma_{s,\infty}^\eps  \Big)^2 .
\end{eqnarray*}
Because ${\cal K}(s) \leq 1 \wedge K s^{H-\frac{3}{2}}$, this gives
$$
{\rm Var}(\phi_{t,T}^\eps) \leq C_{T_0} \eps^{2-2H} ,
$$
uniformly in $t  \leq T \leq T_0$, for some constant $C_{T_0}$.
More precisely, for $t\in [0,T]$, $t ' \in [0,T']$, with $T,T'\leq T_0$, and $t\leq t'$,
we have
\begin{eqnarray*}
&& {\rm Cov}(\phi_{t,T}^\eps,\phi_{t',T'}^\eps) 
= \int_0^{T-t} ds \int_{t'-t}^{T'-t} ds' \int_\RR \int_\RR dz dz'p(z) p(z') \\
&& \quad \times
\EE\Big[ 
  G\Big(\sigma_{{\rm ou}} \int_{-\infty}^0 {\cal K}^\eps(s-u) dW_u +\sigma^\eps_{0,s} z \Big)  
 G\Big(\sigma_{{\rm ou}} \int_{-\infty}^0 {\cal K}^\eps(s'-u') dW_{u'} +\sigma^\eps_{0,s'} z' \Big)  
\Big]  .
\end{eqnarray*}
Using the fact that $\left<G\right>=0$,
we can write
\begin{eqnarray*}
 {\rm Cov}(\phi_{t,T}^\eps,\phi_{t',T'}^\eps) 
&=&  \int_0^{T-t} ds \int_{t'-t}^{T'-t} ds' \int_\RR \int_\RR dz dz'p(z) p(z')  \\
&&\times 
\EE\Big[ 
\Big(  G\Big(\sigma_{{\rm ou}} \int_{-\infty}^0 {\cal K}^\eps(s-u) dW_u +\sigma^\eps_{0,s} z \Big) - G( \sigma_{{\rm ou}} z) \Big)\\
&&\quad \times
\Big(  G\Big(\sigma_{{\rm ou}} \int_{-\infty}^0 {\cal K}^\eps(s'-u') dW_{u'} +\sigma^\eps_{0,s'} z' \Big)  - G( \sigma_{{\rm ou}} z') \Big)
\Big]  .
\end{eqnarray*}
Therefore,
\begin{eqnarray*}
 {\rm Cov}(\phi_{t,T}^\eps,\phi_{t',T'}^\eps) 
&=&  \int_0^{T-t} ds \int_{t'-t}^{T'-t} ds' \int_\RR \int_\RR dz dz'p(z) p(z') G'(\sigma_{{\rm ou}}z) G'(\sigma_{{\rm ou}}z') \\
&& \times
\EE\Big[ 
\Big(\sigma_{{\rm ou}} \int_{-\infty}^0 {\cal K}^\eps(s-u) dW_u + (\sigma^\eps_{0,s}- \sigma_{{\rm ou}})z \Big)  \\
&&\quad \times
\Big(\sigma_{{\rm ou}} \int_{-\infty}^0 {\cal K}^\eps(s'-u') dW_{u'} +(\sigma^\eps_{0,s'} - \sigma_{{\rm ou}})z' \Big) 
\Big] + V_3^\eps,
\end{eqnarray*}
up to a term $V_3^\eps$,
which is of order $\eps^{3-3H}$,
\begin{eqnarray*}
V_3^\eps 
&\leq & 2 \|G'\|_\infty  \|G''\|_\infty   \int_0^{T-t} ds \int_0^{T'-t} ds' \int_\RR \int_\RR dz dz'p(z) p(z')\\
 && \times
\EE\Big[ 
\Big(\sigma_{{\rm ou}} \int_{-\infty}^0 {\cal K}^\eps(s-u) dW_u + (\sigma^\eps_{0,s}- \sigma_{{\rm ou}})z \Big)^2  \\
&&\quad \times
\Big|\sigma_{{\rm ou}} \int_{-\infty}^0 {\cal K}^\eps(s'-u') dW_{u'} +(\sigma^\eps_{0,s'} - \sigma_{{\rm ou}})z' \Big|
\Big] \\
&\leq & C \|G'\|_\infty  \|G''\|_\infty   \int_0^{T_0-t} ds \int_0^{T_0-t} ds' \int_\RR \int_\RR dz dz'p(z) p(z')\\
 && \times
\Big(\sigma_{{\rm ou}}^2 \int_{-\infty}^0 {\cal K}^\eps(s-u)^2 du + (\sigma^\eps_{0,s}- \sigma_{{\rm ou}})^2z^2 \Big)  \\
&&\quad \times
\Big(\sigma_{{\rm ou}}^2 \int_{-\infty}^0 {\cal K}^\eps(s'-u')^2 du' +(\sigma^\eps_{0,s'} - \sigma_{{\rm ou}})^2 z'^2 \Big)^{1/2} \\
&\leq & C' \|G'\|_\infty  \|G''\|_\infty  \Big[ \int_0^{T_0-t} ds  \int_\RR dz  p(z)  \Big(
(\sigma^\eps_{s,\infty})^2+ (\sigma^\eps_{0,s}- \sigma_{{\rm ou}})^2z^2 \Big) \Big]^{3/2}\\
&\leq & C' \|G'\|_\infty  \|G''\|_\infty  \Big[ \int_0^{T_0-t} ds  
(\sigma^\eps_{s,\infty})^2+ (\sigma^\eps_{0,s}- \sigma_{{\rm ou}})^2   \Big]^{3/2} .
\end{eqnarray*}
Using $(\sigma^\eps_{s,\infty})^2+ (\sigma^\eps_{0,s})^2= \sigma_{{\rm ou}}^2$ and
 \begin{eqnarray}
\nonumber
| \sigma_{{\rm ou}}-\sigma^\eps_{0,s} |&=&
\sigma_{{\rm ou}}  \Big( 1 -\Big(\int_0^{s/\eps} {\cal K}(u)^2 du \Big)^{1/2}\Big)
= \sigma_{{\rm ou}}  \Big( 1 -\Big(1- \int_{s/\eps}^\infty {\cal K}(u)^2 du \Big)^{1/2}\Big)\\
&\leq & \sigma_{{\rm ou}}  \int_{s/\eps}^\infty {\cal K}(u)^2 du \leq  \sigma_{{\rm ou}}  \Big( 1 \wedge K \big(\frac{s}{\eps}\big)^{2H-2} \Big),
\label{eq:estimesigmaeps0s} 
 \end{eqnarray} 
where the fist inequality follows from
$
\sqrt{ 1-x  }  >  1-x 
$ for $0\leq x \leq 1$,
 we get
 \begin{eqnarray*}
V_3^\eps  &\leq & C' \|G'\|_\infty  \|G''\|_\infty  \Big[ \int_0^{T_0-t} ds  2 \sigma_{{\rm ou}}  (\sigma_{{\rm ou}}-\sigma^\eps_{0,s})  \Big]^{3/2} \\
&\leq & C''  \|G'\|_\infty  \|G''\|_\infty \eps^{3-3H} .
\end{eqnarray*}
This gives
\begin{eqnarray*}
 {\rm Cov}(\phi_{t,T}^\eps,\phi_{t',T'}^\eps) 
&=&  \int_0^{T-t} ds \int_{t'-t}^{T'-t} ds' \int_\RR \int_\RR dz dz'p(z) p(z') G'(\sigma_{{\rm ou}}z) G'(\sigma_{{\rm ou}}z') \\
&&\times
\Big( \sigma_{{\rm ou}}^2 \int_0^{\infty} {\cal K}^\eps(s+u) {\cal K}^\eps(s'+u)  du + (\sigma^\eps_{0,s}- \sigma_{{\rm ou}})
(\sigma^\eps_{0,s'} - \sigma_{{\rm ou}}) z z' \Big)  + V_3^\eps\\
&=& V_1^\eps \left< G'\right>^2 + V_2^\eps \sigma_{{\rm ou}}^2 \left< G''\right>^2 + V_3^\eps,
\end{eqnarray*}
with
\begin{eqnarray*}
V_1^\eps &=& \sigma_{{\rm ou}}^2  \int_0^\infty du \Big( \int_0^{T-t} ds {\cal K}^\eps(s+u) \Big)
 \Big( \int_{t'-t}^{T'-t} ds' {\cal K}^\eps(s'+u)\Big) ,\\
V_2^\eps &=&  \Big( \int_0^{T-t} ds  (\sigma^\eps_{0,s}- \sigma_{{\rm ou}})\Big) \Big( \int_{t'-t}^{T'-t} ds'  (\sigma^\eps_{0,s'}- \sigma_{{\rm ou}})\Big) .
\end{eqnarray*}
Using again (\ref{eq:estimesigmaeps0s}), we find that
$$
V_2^\eps \leq C \eps^{4-4H} ,
$$
while
\begin{eqnarray*}
V_1^\eps &=&  \frac{
1
}{\Gamma(H+\frac{1}{2})^2}
\int_0^\infty \big( (T-t+u)^{H-\frac{1}{2}}-u^{H-\frac{1}{2}}\big)\\
&& \hspace*{1.in} \times \big( (T'-t+u)^{H-\frac{1}{2}}-(u+t'-t)^{H-\frac{1}{2}}\big)
 du  \, \eps^{2-2H}\\
 &&+o(\eps^{2-2H}) .
\end{eqnarray*}
Applying the change of variable
$$
u \to (T-t)^{\frac{1}{2}}(T'-t')^{\frac{1}{2}} u  
$$
 gives the first and second items of the lemma with
$$
\sigma_{\phi}^2 = \frac{
\left<G'\right>^2}{\Gamma(H+\frac{1}{2})^2}
\int_0^\infty \big( (1+u)^{H-\frac{1}{2}}-u^{H-\frac{1}{2}}\big)^2 du,
$$
which is equivalent to (\ref{def:sigmaphi}).

In order to prove the third item, we introduce
\begin{equation}
\check{\phi}^\eps_{t,T} =
 \EE \Big[  \int_t^T Z_s^\eps ds \big| {\cal F}_t \Big] ,
\end{equation}
which is a Gaussian random process with mean zero and covariance, 
for $t\in [0,T]$, $t ' \in [0,T']$, with  $t\leq t'$,
\begin{eqnarray*}
{\rm Cov}\big( \check{\phi}^\eps_{t,T},\check{\phi}^\eps_{t',T'} \big) &=& \int_t^Tds \int_{t'}^{T'} ds'
\EE\Big[ \EE[ Z_s^\eps |{\cal F}_t] \EE[ Z_s^\eps |{\cal F}_{t'}] \Big]\\
&=& \int_t^Tds \int_{t'}^{T'} ds'
\EE \Big[ \EE[ Z_s^\eps |{\cal F}_t]  \EE[ Z_s^\eps |{\cal F}_t] \Big]\\
&=& \sigma_{{\rm ou}}^2
\int_0^{T-t} ds \int_{t'-t}^{T'-t} ds'
\EE\Big[ 
\Big(
 \int_{-\infty}^0 {\cal K}^\eps(s-u) dW_u \Big)\Big(
 \int_{-\infty}^0 {\cal K}^\eps(s'-u) dW_u \Big)
 \Big] \\
&=& \sigma_{{\rm ou}}^2
 \int_0^{\infty} du 
 \Big( \int_0^{T-t} ds  {\cal K}^\eps(s+u)  \Big)  \Big( \int_{t'-t}^{T'-t} ds'  {\cal K}^\eps(s'+u)  \Big) .
\end{eqnarray*}
Therefore, for $t_j \in [0,T_j]$, with  $t_1 \leq \cdots \leq t_n$,
the random vector $(\eps^{H-1}\left<G'\right>\check{\phi}^\eps_{t_1,T_1},\ldots,\eps^{H-1}\left<G'\right>\check{\phi}^\eps_{t_n,T_n})$
converges to a Gaussian random vector with mean $0$ and covariance matrix 
$(\sigma_\phi^2 (T_j-t_j)^H (T_l-t_l)^H {\cal C}_\phi(t_j,t_l;T_j,T_l))_{j,l=1}^n$.
In other words, the random process $(\eps^{H-1}\left<G'\right>\check{\phi}^\eps_{t,T})_{0\leq t\leq T <\infty}$ converges in the sense of
finite-dimensional distributions to a Gaussian process $(\phi_{t,T})_{0\leq t\leq T <\infty}$
 with mean $0$ and covariance function $\EE[\phi_{t,T} \phi_{t',T'}]= \sigma_{\phi}^2 (T-t)^{H} (T'-t')^{H}  {\cal C}_\phi(t,t';T,T') $,
 for $t\in [0,T]$, $t ' \in [0,T']$, with  $t\leq t'$.

Moreover, we have
$$
{\rm Var}\big( \check{\phi}^\eps_{t,T} \big)  = \frac{
1
}{\Gamma(H+\frac{1}{2})^2}
\int_0^\infty \big( (1+u)^{H-\frac{1}{2}}-u^{H-\frac{1}{2}}\big)^2 du \, (T-t)^{2H} \eps^{2-2H}  +o(\eps^{2-2H}).
$$
Similarly,
$$
\EE\big[ \check{\phi}^\eps_{t,T} {\phi}^\eps_{t,T} \big] 
=\frac{
\left<G'\right>}{\Gamma(H+\frac{1}{2})^2}
\int_0^\infty \big( (1+u)^{H-\frac{1}{2}}-u^{H-\frac{1}{2}}\big)^2 du \, (T-t)^{2H} \eps^{2-2H}  +o(\eps^{2-2H}).
$$
As a result,
$$
\eps^{2H-2} \EE\big[ ({\phi}^\eps_{t,T} - \left<G'\right> \check{\phi}^\eps_{t,T}  )^2\big] \stackrel{\eps \to 0}{\longrightarrow} 0,
$$
and the random process $(\eps^{H-1}\left<G'\right>  \check{\phi}^\eps_{t,T} )_{0\leq t\leq T <\infty}$
converges in the sense of
finite-dimensional distributions to a Gaussian process $(\phi_{t,T})_{0\leq t\leq T <\infty}$
 with mean $0$ and covariance function $\EE[\phi_{t,T} \phi_{t',T'}]= \sigma_{\phi}^2 (T-t)^{H} (T'-t')^{H}  {\cal C}_\phi(t,t';T,T') $
for $t\in [0,T]$, $t ' \in [0,T']$, with  $t\leq t'$.
This gives the third item of the lemma.

To prove the fourth item of the lemma, we note that
$$
\phi^\eps_{t,T} = \frac{1}{2} \EE \big[ I^\eps_T | {\cal F}_t \big] - \frac{1}{2} I^\eps_t,
$$
where $I^\eps_t$ is defined by (\ref{def:Iepst}).
Therefore,
$$
\sup_{t \in [0,T]} \EE \big[ (\phi^\eps_{t,T} )^4 \big] \leq \sup_{t \in [0,T]}  \EE\big[ (I^\eps_t )^4 \big], 
$$
and the result follows from Lemma \ref{lem:a}, Eq.~(\ref{eq:boundIepst}).
\end{proof}

\begin{lemma}
\label{lem:4}
Let us define, for any $t \in [0,T]$,
\begin{equation}
\gamma^\eps_t =  \frac{1}{2} \int_0^t \big( (\sigma_s^\eps)^2 -\overline{\sigma}^2\big) \phi^\eps_s ds,
\end{equation}
as in (\ref{def:gammaeps}). We have
\begin{equation}
\label{eq:boundgammaepstilde}
\limsup_{\eps\to 0}
\eps^{H-1} 
\sup_{t\in [0,T]} \EE \big[ (\gamma^\eps_t)^2\big]^{1/2} = 0.
\end{equation}
\end{lemma}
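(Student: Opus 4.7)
The naive bound $\EE[(\gamma_t^\eps)^2]^{1/2}\leq \|G\|_\infty\int_0^t\EE[(\phi_s^\eps)^2]^{1/2}ds$, combined with $\EE[(\phi_s^\eps)^2]^{1/2}=O(\eps^{1-H})$ from Lemma \ref{lem:3}(first item), yields only $O(\eps^{1-H})$, matching but not beating the target. The plan is to exploit the identity $(\sigma_s^\eps)^2-\overline{\sigma}^2 = 2G(Z_s^\eps)$, which turns $G(Z_s^\eps)\,ds$ into $\tfrac12\,dI_s^\eps$ for the zero-mean fluctuation process $I_s^\eps$ of Lemma \ref{lem:a}; this process is of size $\eps^{1-H}$ in $L^4$, and an integration by parts will extract a second factor of $\eps^{1-H}$.

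Using the representation $\phi_s^\eps = \psi_s^\eps - \tfrac12 I_s^\eps$ with the ${\cal F}_s$-martingale $\psi^\eps$ of (\ref{def:Kt}), I decompose
\begin{equation*}
\gamma_t^\eps = \tfrac12\int_0^t \psi_s^\eps\,dI_s^\eps - \tfrac14\int_0^t I_s^\eps\,dI_s^\eps.
\end{equation*}
The second integral equals $\tfrac18(I_t^\eps)^2$. For the first, ordinary integration by parts (valid as $I^\eps$ has zero quadratic variation, hence $\langle I^\eps,\psi^\eps\rangle = 0$) gives $\int_0^t \psi_s^\eps\,dI_s^\eps = I_t^\eps\psi_t^\eps - \int_0^t I_s^\eps\,d\psi_s^\eps$, producing the clean decomposition
\begin{equation*}
\gamma_t^\eps = \tfrac12\,I_t^\eps\psi_t^\eps - \tfrac12\int_0^t I_s^\eps\,d\psi_s^\eps - \tfrac18(I_t^\eps)^2.
\end{equation*}

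Each of the three terms is then handled by existing estimates. Lemma \ref{lem:a} gives $\EE[(I_t^\eps)^4]^{1/4}\leq K\eps^{1-H}$, and together with Lemma \ref{lem:3} (fourth item) and $\psi_t^\eps=\phi_t^\eps+\tfrac12 I_t^\eps$ this implies $\EE[(\psi_t^\eps)^4]^{1/4}=O(\eps^{1-H})$, so by Cauchy--Schwarz $\EE[(I_t^\eps\psi_t^\eps)^2]^{1/2}$ and $\EE[(I_t^\eps)^4]^{1/2}$ are both $O(\eps^{2-2H})$. For the stochastic integral, since $\psi^\eps$ is adapted to the Brownian filtration of $W$ and $d\langle\psi^\eps,W\rangle_s = \vartheta_s^\eps\,ds$ by Lemma \ref{lem:1}, martingale representation gives $d\psi_s^\eps = \vartheta_s^\eps\,dW_s$ and $d\langle\psi^\eps\rangle_s = (\vartheta_s^\eps)^2\,ds$; combined with the deterministic pointwise bound $|\vartheta_s^\eps|\leq \sigma_{{\rm ou}}\|G'\|_\infty\int_0^{T-s}{\cal K}^\eps(u)\,du\leq C\eps^{1-H}$, uniform in $s\in[0,T]$ by the asymptotics of ${\cal K}$ in Section \ref{sec:fou}, It\^o's isometry yields
\begin{equation*}
\EE\Bigl[\Bigl(\int_0^t I_s^\eps\,d\psi_s^\eps\Bigr)^2\Bigr] \leq C\eps^{2-2H}\int_0^T\EE[(I_s^\eps)^2]\,ds = O(\eps^{4-4H}).
\end{equation*}
All three contributions are $O(\eps^{2-2H}) = o(\eps^{1-H})$, uniformly in $t\in[0,T]$, which gives the claim.

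The chief obstacle is that a crude $L^2$ bound washes out the oscillatory structure: $G(Z_s^\eps)$ has mean zero and $\phi_s^\eps$ is itself already fluctuation-sized, but Cauchy--Schwarz captures only the magnitude of one factor. The decisive move is to rewrite $G(Z_s^\eps)\,ds$ as $\tfrac12 dI_s^\eps$ and integrate by parts against the martingale $\psi^\eps$, thereby pairing the two smallness factors through the bracket of $\psi^\eps$. A secondary technical point is obtaining the pointwise bound on $\vartheta^\eps$ uniformly up to $s=T$, which requires splitting into the regimes $(T-s)/\eps$ large versus small and using the asymptotics ${\cal K}(u)\sim u^{H-1/2}/\Gamma(H+\tfrac12)$ near $0$ and $u^{H-3/2}/\Gamma(H-\tfrac12)$ at infinity; the bound $\eps^{1/2}\leq\eps^{1-H}$ for $H>1/2$ then harmonizes the two cases.
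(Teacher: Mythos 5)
Your proof is correct, but it follows a genuinely different route from the paper's. The paper works with the tail integral $\Gamma^\eps_t=\int_t^T((\sigma_s^\eps)^2-\overline\sigma^2)\phi_s^\eps\,ds$, expands $\EE[(\Gamma^\eps_t)^2]$ as a fourfold time integral, collapses it with the tower property and conditional Jensen to a single integral of $\EE\big[|\EE[(\int_s^T G(Z^\eps_{s'})ds')^2|{\cal F}_s]|^{3/2}\big]$, and invokes the $L^4$ bound of Lemma \ref{lem:a} to get $\sup_t\EE[(\Gamma^\eps_t)^2]^{1/2}=O(\eps^{3(1-H)/2})$; it never touches the martingale structure of $\psi^\eps$ beyond its definition. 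You instead exploit that structure head-on: writing $((\sigma_s^\eps)^2-\overline\sigma^2)ds=dI_s^\eps$ and $\phi^\eps=\psi^\eps-\tfrac12 I^\eps$, integrating by parts against the finite-variation process $I^\eps$ (so the bracket term vanishes), and identifying $d\psi^\eps_s=\vartheta^\eps_s\,dW_s$ — which is legitimate since the terminal variable is $\sigma(W)$-measurable and $B\perp W$, and is in fact already visible in the explicit decomposition inside the paper's proof of Lemma \ref{lem:1} once the drift terms are cancelled by the martingale property. The deterministic bound $|\vartheta^\eps_s|\le C\eps^{1-H}$ plus It\^o isometry and the $L^4$ bounds of Lemmas \ref{lem:a} and \ref{lem:3} then give all three terms as $O(\eps^{2-2H})$ uniformly in $t$. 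Your argument buys a sharper rate, $\eps^{2-2H}$ versus the paper's $\eps^{3(1-H)/2}$, and makes transparent where the extra smallness comes from (the pairing of $I^\eps$ with the small bracket density $\vartheta^\eps$); the paper's argument is more self-contained in that it needs only moment bounds and conditioning, with no martingale representation or stochastic integration by parts. Both reduce ultimately to Lemma \ref{lem:a}, and either suffices for the stated $o(\eps^{1-H})$ conclusion.
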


\begin{proof}
Let us define, for any $t \in [0,T]$,
\begin{equation}
\Gamma^\eps_t = \int_t^T \big( (\sigma_s^\eps)^2 -\overline{\sigma}^2\big) \phi^\eps_s ds.
\end{equation}
By the definition (\ref{def:Kt}) of $\phi^\eps_s$, we have
$$
  \Gamma^\eps_t  = 2
\int_t^T ds \int_s^T du  \EE\big[ G(Z^\eps_s)G(Z^\eps_u) |{\cal F}_s\big]  .
$$
Therefore,
\begin{eqnarray*}
\EE 
\big[ (\Gamma^\eps_t)^2\big] &=& 2
\int_t^T ds \int_s^T du \int_s^T ds' \int_{s'}^T du'
\EE\Big[    \EE\big[ G(Z^\eps_s)G(Z^\eps_u) |{\cal F}_s\big]
 \EE\big[ G(Z^\eps_{s'})G(Z^\eps_{u'}) |{\cal F}_{s'}\big] \Big]\\
 &=& 2
\int_t^T ds \int_s^T du \int_s^T ds' \int_{s'}^T du'
\EE\Big[   G(Z^\eps_s)G(Z^\eps_u)  
 \EE\big[ G(Z^\eps_{s'})G(Z^\eps_{u'}) |{\cal F}_{s}\big] \Big] \\
 &=& 
\int_t^T ds \int_s^T du 
\EE\Big[    G(Z^\eps_s)G(Z^\eps_u) 
 \EE\big[ \big(\int_s^T G(Z^\eps_{s'}) ds'\big)^2 \big|{\cal F}_{s}\big] \Big] \\
 &=& 
\int_t^T ds 
\EE\Big[    G(Z^\eps_s)\EE\big[ \int_s^T G(Z^\eps_u) du \big| {\cal F}_s\big]
 \EE\big[ \big(\int_s^T G(Z^\eps_{s'}) ds'\big)^2 \big|{\cal F}_{s}\big] \Big] \\ 
 &\leq &
 \|G\|_\infty \int_t^T ds  \EE\Big[\Big|
 \EE\big[ \big(\int_s^T G(Z^\eps_{s'}) ds'\big)^2 \big|{\cal F}_{s}\big] \Big|^{3/2}\Big] \\ 
 &\leq &
 \|G\|_\infty \int_t^T ds  \EE\Big[\Big| \int_s^T G(Z^\eps_{s'}) ds' \Big|^{3}\Big] \\ 
 &\leq &
 \|G\|_\infty \int_t^T ds  \EE\Big[\Big( \int_s^T G(Z^\eps_{s'}) ds' \Big)^{4}\Big]^{3/4}   ,
 \end{eqnarray*} 
 where  in the first inequality  we use that
\begin{eqnarray*}
\Big| \EE\big[ \int_s^T G(Z^\eps_u) du \big| {\cal F}_s\big] \Big| 
\leq 
\Big| \EE\big[ \big( \int_s^T G(Z^\eps_u) du  \big)^2 \big| {\cal F}_s  \big] \Big| ^{1/2}   ,
\end{eqnarray*} 
which follows from the conditional version of Jensen's inequality. 
It follows   by Lemma \ref{lem:a}
that   $\EE \big[ (\Gamma^\eps_t)^2\big]$    is smaller than 
$K' \eps^{3-3H}$ for some constant $K'$.
This proves
\begin{equation}
\label{eq:boundGammaepstilde}
\limsup_{\eps\to 0}
\eps^{H-1} 
\sup_{t\in [0,T]} \EE \big[ (\Gamma^\eps_t)^2\big]^{1/2} = 0.
\end{equation}
Note that $\gamma^\eps_t$ defined by (\ref{def:gammaeps}) is related to $\Gamma^\eps_t$ through
$$
\gamma^\eps_t = 2\left(\Gamma^\eps_0-\Gamma^\eps_t \right) .
$$
Therefore, Eq.~(\ref{eq:boundGammaepstilde}) also implies
$$
\limsup_{\eps\to 0}
\eps^{H-1} 
\sup_{t\in [0,T]} \EE \big[ (\gamma^\eps_t)^2\big]^{1/2} = 0,
$$
which is the desired result.
\end{proof}

\begin{lemma}
\label{lem:5}
Let us define, for any $t \in [0,T]$,
\begin{equation}
{\eta}^\eps_t = \eps^{1-H} \int_0^t \big(  \sigma_s^\eps  -\widetilde{\sigma}\big) ds,
\end{equation}
as in (\ref{def:etaeps}).
We have
\begin{equation}
\label{eq:boundetaepstilde}
\limsup_{\eps\to 0} \eps^{H-1}
\sup_{t\in [0,T]}  \EE \big[ ( {\eta}^\eps_t )^2\big]^{1/2} = 0.
\end{equation}
\end{lemma}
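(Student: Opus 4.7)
\textbf{Proof plan for Lemma~\ref{lem:5}.}

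The plan is to compute the variance of $\eta^\eps_t$ directly, reduce the problem to bounding an iterated covariance integral, and then invoke the decay estimate for ${\cal C}_\sigma$ established in Lemma~\ref{prop:process}.

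First, since $\sigma^\eps_s$ is stationary with $\EE[\sigma^\eps_s] = \left<F\right> = \widetilde\sigma$ by Lemma~\ref{prop:process} (item 1), the random variable $\eta^\eps_t$ has mean zero. By Fubini together with the covariance identity (\ref{eq:corrY1}),
\[
\EE[(\eta^\eps_t)^2] = \eps^{2-2H} \bigl(\left<F^2\right> - \left<F\right>^2\bigr) \int_0^t \int_0^t {\cal C}_\sigma\bigl(\tfrac{s-s'}{\eps}\bigr) \, ds \, ds'.
\]
It therefore suffices to show that the remaining double integral is bounded by a constant times $\eps^{2-2H}$, uniformly in $t\in[0,T]$.

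Next, I would combine two properties of ${\cal C}_\sigma$: it is bounded (since ${\cal C}_\sigma(0)=1$), and by the asymptotic (\ref{eq:corrY12}) there is $K>0$ with $|{\cal C}_\sigma(u)| \le K |u|^{2H-2}$ for $|u| \ge 1$. A case split at $u=1$ immediately gives, using $2H-1>0$,
\[
\int_0^R |{\cal C}_\sigma(u)| \, du \le K' (1 + R^{2H-1}), \qquad R \ge 0.
\]
The change of variable $u = (s-s')/\eps$ in the inner integral, followed by symmetrization in $(s,s')$, then yields
\[
\int_0^t \!\!\int_0^t \bigl|{\cal C}_\sigma\bigl(\tfrac{s-s'}{\eps}\bigr)\bigr| \, ds \, ds' = 2\eps \int_0^t \!\!\int_0^{s/\eps} |{\cal C}_\sigma(u)| \, du \, ds \le K'' \bigl( \eps t + \eps^{2-2H} t^{2H} \bigr).
\]
Because $H > 1/2$, we have $2-2H < 1$, so the term $\eps^{2-2H} T^{2H}$ dominates $\eps T$ as $\eps \to 0$, giving the desired $O(\eps^{2-2H})$ bound uniformly in $t \in [0,T]$.

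Combining these two estimates produces $\sup_{t \in [0,T]} \EE[(\eta^\eps_t)^2]^{1/2} \le C \eps^{2-2H}$, whence
\[
\eps^{H-1} \sup_{t \in [0,T]} \EE[(\eta^\eps_t)^2]^{1/2} \le C \eps^{1-H} \to 0 \quad \text{as } \eps \to 0,
\]
which is (\ref{eq:boundetaepstilde}). No substantial obstacle arises; the only delicate point is to split the range of $s/\eps$ at $1$, so that the boundedness of ${\cal C}_\sigma$ near the origin and its polynomial decay at infinity are each used in the regime where they are sharp.
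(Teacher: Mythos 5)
Your proposal is correct and follows essentially the same route as the paper: compute the variance via the covariance identity of Lemma \ref{prop:process}, bound the double integral of ${\cal C}_\sigma((s-s')/\eps)$ by $O(\eps^{2-2H})$ using the power-law decay (the paper uses the pointwise bound ${\cal C}_\sigma(u)\le K|u|^{2H-2}$ for all $u$, while you split at $|u|=1$, which amounts to the same estimate), and conclude $\EE[(\eta^\eps_t)^2]^{1/2}=O(\eps^{2-2H})$ so that $\eps^{H-1}\sup_t\EE[(\eta^\eps_t)^2]^{1/2}=O(\eps^{1-H})\to 0$.
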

\begin{proof}
By Lemma \ref{prop:process}, we obtain
\begin{eqnarray*}
\EE \big[ ( {\eta}^\eps_t )^2\big] &=& \eps^{2-2H}
\EE \Big[ \Big( \int_{0}^{t} \big(  \sigma_s^\eps  -\widetilde{\sigma}\big) ds\Big)^2\Big] \\
&=&\eps^{2-2H}
\int_0^t \int_0^t {\rm Cov} \big( F (Z^\eps_s), F(Z^\eps_{s'})\big) ds ds'\\
&=&  \eps^{2-2H}
\big( \left< F^2 \right>-\left< F\right>^2\big) 
\int_0^t \int_0^t {\cal C}_\sigma \big(\frac{s-s'}{\eps}\big)  ds ds'\\
&\leq& K \eps^{2-2H} \int_0^T \int_0^T \big(\frac{|s-s'|}{\eps}\big)^{2H-2}  ds ds' \\
&\leq & K' \eps^{ 4-4H} ,
\end{eqnarray*}
for some constants $K,K'$, because $s^{2H-2}$ is integrable over $(0,T)$,
which gives the desired result.
\end{proof}

\begin{lemma}
\label{lem:6}
Let us define, for any $t \in [0,T]$,
\begin{equation}
{\kappa}^\eps_t = \frac{\eps^{1-H}}{2} \int_0^t \big( (\sigma_s^\eps)^2  -\overline{\sigma}^2\big)  ds,
\end{equation}
as in (\ref{def:kappaeps}).
We have
\begin{equation}
\limsup_{\eps\to 0} \eps^{H-1}
\sup_{t\in [0,T]}  \EE \big[ ( {\kappa}^\eps_t )^2\big]^{1/2} = 0.
\end{equation}
\end{lemma}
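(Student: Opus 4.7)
The plan is to recognize $\kappa^\eps_t$ as a deterministic rescaling of the integrated volatility fluctuation $I^\eps_t = \int_0^t [F^2(Z_s^\eps) - \langle F^2 \rangle]\, ds$ of (\ref{def:Iepst}), and then invoke the already-proven fourth-moment bound of Lemma \ref{lem:a}. Concretely, using the notation $G(z) = \tfrac{1}{2}(F(z)^2 - \overline{\sigma}^2)$ of (\ref{def:G}), one has $(\sigma_s^\eps)^2 - \overline{\sigma}^2 = 2 G(Z_s^\eps)$ and $\overline{\sigma}^2 = \langle F^2 \rangle$, so that
\[
\kappa^\eps_t = \frac{\eps^{1-H}}{2} \int_0^t \big( (\sigma_s^\eps)^2 - \overline{\sigma}^2 \big)\, ds
= \eps^{1-H} \int_0^t G(Z_s^\eps)\, ds
= \frac{\eps^{1-H}}{2}\, I^\eps_t .
\]

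Next, since the condition (\ref{eq:cond}) on $F$ is in force, Lemma \ref{lem:a} supplies a constant $K$ such that $\sup_{t \in [0,T]} \EE[(I^\eps_t)^4] \leq K\, \eps^{4-4H}$. By Lyapunov's inequality (monotonicity of $L^p$-norms),
\[
\sup_{t \in [0,T]} \EE[(I^\eps_t)^2]^{1/2} \leq \sup_{t \in [0,T]} \EE[(I^\eps_t)^4]^{1/4} \leq K^{1/4}\, \eps^{1-H} .
\]
Combining these two displays,
\[
\sup_{t \in [0,T]} \EE[(\kappa^\eps_t)^2]^{1/2}
\leq \frac{\eps^{1-H}}{2}\, K^{1/4}\, \eps^{1-H}
= \frac{K^{1/4}}{2}\, \eps^{2-2H},
\]
so that $\eps^{H-1} \sup_{t \in [0,T]} \EE[(\kappa^\eps_t)^2]^{1/2} \leq \tfrac{1}{2} K^{1/4}\, \eps^{1-H}$, which tends to zero as $\eps \to 0$ since $H \in (1/2,1)$. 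This is exactly the claimed limit.

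There is no genuine obstacle here: the delicate work has already been done in Appendix \ref{sec:hermite}, where the Hermite expansion of $\widetilde{F}$ together with the Taqqu diagram formula produced the $\eps^{4-4H}$ control on the fourth moment of $I^\eps_t$, using crucially that the exponent $2H-2$ is strictly greater than $-1$ so that the iterated integrals of $|t_{i_\ell}-t_{j_\ell}|^{2H-2}$ converge. Once that bound is in hand, Lemma \ref{lem:6} follows in one step via Jensen, which is why its proof is essentially trivial by comparison with Lemma \ref{lem:4} (which needs a refined telescoping to handle the extra $\phi^\eps_s$ factor) or Lemma \ref{lem:3} (which needs a careful second-moment expansion around the invariant law).
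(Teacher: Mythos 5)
Your proof is correct. The identification $\kappa^\eps_t = \tfrac{\eps^{1-H}}{2} I^\eps_t$ is exact (since $(\sigma_s^\eps)^2-\overline{\sigma}^2 = F^2(Z_s^\eps)-\left<F^2\right>$), Lemma \ref{lem:a} does give $\sup_{t\in[0,T]}\EE[(I^\eps_t)^4]\leq K\eps^{4-4H}$ under the standing hypothesis (\ref{eq:cond}), and Lyapunov then yields $\eps^{H-1}\sup_t\EE[(\kappa^\eps_t)^2]^{1/2}=O(\eps^{1-H})\to 0$. However, your route differs from the paper's. The paper disposes of Lemma \ref{lem:6} by saying the proof is similar to that of Lemma \ref{lem:5}, i.e.\ a direct second-moment computation: write $\EE[(\kappa^\eps_t)^2]$ as $\tfrac{\eps^{2-2H}}{4}\int_0^t\int_0^t {\rm Cov}\big(F^2(Z^\eps_s),F^2(Z^\eps_{s'})\big)\,ds\,ds'$, bound the covariance by $K(|s-s'|/\eps)^{2H-2}$ exactly as in Lemma \ref{prop:process} (applied to $F^2$ in place of $F$), and integrate to get the same $O(\eps^{4-4H})$ bound. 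That argument needs only the second-order Gaussian covariance expansion and not the Hermite condition (\ref{eq:cond}); your argument instead leans on the heavier fourth-moment diagram estimate of Lemma \ref{lem:a} and then downgrades to $L^2$, which is shorter given that Lemma \ref{lem:a} is already proved and is needed elsewhere (Lemma \ref{lem:3}, fourth item, and Lemma \ref{lem:4}), but is logically more demanding than necessary for this particular lemma. Both give the identical rate and conclusion.
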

\begin{proof}
The proof is similar to the one in Lemma \ref{lem:5}.
\end{proof}

\end{appendices}

\newpage

\end{document}